\documentclass[11pt,a4paper]{article}

\usepackage{amsmath, amssymb, amsthm}
\usepackage{physics}  

\usepackage[a4paper, margin=1in]{geometry}
\usepackage{microtype}
\usepackage{setspace}
\usepackage[numbers,sort&compress]{natbib}

\usepackage[colorlinks=true, linkcolor=blue, citecolor=blue, urlcolor=blue]{hyperref}
\usepackage{xurl}  
\usepackage{orcidlink}  

\usepackage{graphicx}
\usepackage{booktabs}
\usepackage{array}
\usepackage{algorithm}
\usepackage{algpseudocode}
\usepackage{float}  

\usepackage{tikz}
\usetikzlibrary{arrows.meta, positioning, shapes.geometric, backgrounds, fit, calc}

\usepackage{xcolor}
\definecolor{layerblue}{HTML}{1565C0}
\definecolor{layerpurple}{HTML}{5B2C91}
\definecolor{layergreen}{HTML}{1B5E20}
\definecolor{foundationgray}{HTML}{3F5360}
\definecolor{slatetext}{HTML}{3A3A44}

\theoremstyle{plain}
\newtheorem{theorem}{Theorem}
\newtheorem{proposition}{Proposition}

\newtheorem{corollary}{Corollary}

\theoremstyle{definition}
\newtheorem{definition}{Definition}

\theoremstyle{remark}
\newtheorem{remark}{Remark}

\title{QML-PipeGuard: Drift-Aware Behavioral Fingerprinting for Quantum Machine Learning Pipeline Integrity}

\author{Esra Yeniaras\,\orcidlink{0000-0002-2973-7597}\thanks{Contact: \texttt{esramath@gmail.com}, Quantum Security and Post-Quantum Cryptography Researcher (previously Assistant Professor in Cyber Security at Copenhagen School of Business (EK)).}}

\date{\today}

\begin{document}

\maketitle
\begin{abstract}
\noindent
{\small
\renewcommand{\baselinestretch}{1.0}\selectfont
Quantum machine learning (QML) has moved from research prototypes to deployed cloud services on IBM Quantum, IonQ, and Quantinuum hardware. As QML enters regulated industries and customer-facing applications, the integrity of the quantum stage becomes a practical concern on two fronts: noisy intermediate-scale hardware drifts at the channel level between recalibrations, and an adversary with control over the execution environment can substitute the declared quantum channel with a behaviorally similar but mathematically distinct one. Neither concern is covered by existing QML verification work on pulse-level noise compensation, input-distribution drift, input-perturbation robustness, or device-identity authentication \cite{hu2023qupad,dunn2025qsafeml,lin2024veriqr,wu2024qid}. We introduce QML-PipeGuard, a contract-based framework that addresses both concerns under a single mathematical machinery. The framework characterizes a QML pipeline at runtime by its behavioral fingerprint, the vector of observable expectation values under a tomographically structured measurement family, and operates in two modes that share the same machinery: drift-aware monitoring that absorbs benign calibration changes within a calibrated tolerance, and adversarial detection that catches channel substitution as a violation of an informationally complete observable contract. The framework rests on three QML-specific contributions: a pipeline-composition treatment of the encoder--ansatz--measurement channel together with a threat model whose conditions are specific to QML output structures (Theorem~\ref{thm:detection}, proved with a tight frame-bound constant $C = \sqrt{3}$ for the single-qubit Pauli family that sharpens the corresponding value in concurrent work \cite{yeniaras2026qcivet} by a factor of $\sqrt{2/3}$); a finite-shot sample-complexity bound that turns the contract into an operationally executable check under shot noise (Theorem~\ref{thm:budget}); and a tolerance decomposition $\varepsilon_A = \varepsilon_{\mathrm{adv}} + \varepsilon_{\mathrm{drift}}$ that separates adversarial and natural-drift contributions under a single tolerance (Corollary~\ref{cor:drift}). We validate the framework end-to-end on a two-qubit QSVM pipeline executed on the IBM Heron r2 processor (\texttt{ibm\_fez}) for the detection and drift experiments, with a complementary sample-complexity validation on a noise-matched simulator: the prescribed measurement budget ($N \approx 1.4 \times 10^4$ shots, an approximately $27$-fold reduction at the deployed operational parameters relative to the looser-constant bound, asymptotically threefold in the small-tolerance limit; combined with a precomputed-reference refinement, the cumulative reduction is approximately $100$-fold) fits in a single batched job, the sneaky channel is detected with a wide safety margin against shot noise on the complete observable family while evading the weak one, and the typical hardware drift sits within the calibrated contract tolerance. To our knowledge this is the first hardware validation of a dual-mode channel-integrity contract on an end-to-end QML pipeline (encoder, ansatz, measurement, multi-qubit feature map, finite-shot sampling, natural drift) rather than on isolated CPTP channels.
\par}
\end{abstract}

\vspace{0.3em}
\noindent{\small\textbf{Keywords:} \renewcommand{\baselinestretch}{0.99}\selectfont quantum machine learning, behavioral fingerprinting, runtime verification, calibration drift, quantum kernel methods, variational quantum classifiers, quantum neural networks, observable contracts, channel substitution, network intrusion detection\par}

\section{Introduction}
\label{sec:intro}

Quantum machine learning (QML) has progressed from theoretical proposals to deployed cloud services. Production QML pipelines now run on IBM Quantum, IonQ, and Quantinuum hardware through Qiskit Machine Learning, PennyLane, and TensorFlow Quantum. Practical use cases include network intrusion detection on quantum kernel methods \cite{havlivcek2019supervised, kalinin2023qiids}, drug discovery on variational quantum eigensolvers \cite{cao2018potential}, and emerging applications in credit scoring and bioinformatics. As QML enters regulated industries and customer-facing applications, the question of pipeline integrity moves from theoretical to operational: production deployments need runtime evidence that the quantum stage is executing the declared computation within its specification.

The integrity concern divides naturally into two fronts. The first front is calibration drift. Noisy intermediate-scale quantum (NISQ) hardware undergoes calibration changes over time at the channel level: gate fidelities shift, coherence times fluctuate, and readout errors evolve as devices recalibrate or age. A QML pipeline deployed today may behave differently next week despite using the same circuit specification. The second front is adversarial substitution. An entity with control over the QML execution environment, whether a cloud provider, an internal operator, or an external attacker, can replace the declared quantum channel $\mathcal{E}_A$ with a substitute channel $\mathcal{E}_B$ that produces classification decisions matching the declared one on a verification test set but differs in entanglement structure or confidence distribution on production inputs. We motivate the practical relevance of sneaky ansatz substitution with three concrete deployment scenarios. First, in regulated industries such as finance and healthcare, a model owner facing compliance audit may substitute an audit-passing ansatz that masks bias or other regulated behavior. This is the QML analog of the Volkswagen Dieselgate emissions case, where audit-passing software produced different behavior under production conditions. Second, in cloud QPU deployments, a service provider may substitute a lower-fidelity channel that passes customer test queries but degrades on production inputs, reducing operational costs at the customer's expense. This is the QML analog of the lazy-server attacks studied in classical cloud computing \cite{golle2001uncheatable}. Third, in academic publishing, an investigator may report results using a substitute ansatz that passes reviewer replication checks but does not represent the actual reported experiment, contributing to the broader reproducibility crisis in machine learning \cite{hutson2018ai}. In all three scenarios, classical detection methods that rely on weak observable subsets (such as final-decision agreement) miss the substitution because the sneaky channel preserves classical marginals by construction.

Existing QML integrity work addresses adjacent but distinct verification questions. Calibration drift has been approached at the pulse level through evolutionary calibration of pulse amplitudes and gate substitutions \cite{hu2023qupad}, and concept drift in the input distribution has been monitored through quantum-centric distance metrics \cite{dunn2025qsafeml}. Adversarial robustness against input perturbations has been formally verified \cite{lin2024veriqr}, and device-identity authentication in cloud quantum networks has been addressed via noise-gap fingerprinting \cite{wu2024qid}. Each of these frameworks targets a specific verification question: pulse compensation, data drift detection, input robustness, or hardware identity. None addresses the channel-level integrity question of whether the declared quantum channel itself is the one being executed, treated under a unified framework that covers both benign calibration drift and adversarial substitution.

This paper introduces QML-PipeGuard, a contract-based framework for QML pipeline integrity. The framework defines \emph{behavioral fingerprinting} as the runtime characterization of a QML pipeline by its observable expectation values under a tomographically structured measurement family, and combines two operational modes within a single mathematical scaffolding. In \emph{drift-aware monitoring mode}, calibration changes within a calibrated tolerance $\varepsilon$ are absorbed and recorded as benign drift events. In \emph{adversarial detection mode}, channel substitutions producing observable deviations beyond $\varepsilon$ on an informationally complete observable family are detected as integrity violations. We formalize sneaky ansatz substitution as a behavioral subtype attack on three of the most widely deployed QML model classes: variational quantum classifiers (VQCs), quantum kernel methods (quantum support vector machines, QSVMs), and quantum neural networks (QNNs) (Section~\ref{sec:framework}). The structural conditions, theorems, and observable contract are formulated uniformly across the three classes; the hardware validation in this paper is on the QSVM instance, which is the smallest non-trivial instance of the framework and exercises the encoder, ansatz, measurement, multi-qubit feature map, finite-shot sampling, and drift components in a single pipeline. Related architectures, including quantum convolutional neural networks (QCNNs) and quantum Boltzmann machines (QBMs), share the same structural property of producing observable-expectation outputs and are reachable by the same framework with minor adaptations; their explicit treatment is left to follow-up work. We prove a detection theorem characterizing the observable conditions sufficient to catch all such substitutions within calibrated tolerance (Theorem~\ref{thm:detection}), bound the sample complexity of the measurement budget (Theorem~\ref{thm:budget}), and derive a corollary for drift event detection (Corollary~\ref{cor:drift}). The framework sits within a broader behavioral-subtyping discipline for quantum software; concurrent work \cite{yeniaras2026qcivet} develops the general completely positive trace-preserving (CPTP) channel theory, while the present paper develops the QML-specific framework with its own threat model, theorems, and operational tools. We validate the framework on a quantum kernel pipeline executed on the IBM Heron r2 processor (Section~\ref{sec:experiment}), showing that adversarial fingerprints survive realistic hardware calibration noise. The framework integrates with standard QML toolchains (Qiskit Machine Learning, PennyLane) and complements existing QML verification work by providing the channel-level runtime contract layer absent from current practice.

\paragraph{Software and data availability.}
A reference implementation of the framework, the three experiments reported in Section~\ref{sec:experiment}, and the archived IBM Heron r2 run artifacts (raw counts, fingerprints, audit logs, and run metadata for the detection and drift experiments) are available at the project repository\footnote{\label{fn:repo}\url{https://github.com/schrodinket/QML-PipeGuard}} under the MIT license. The hardware run JSON files include the IBM Quantum job IDs, so the original jobs can be retrieved directly from the IBM Quantum service. The repository includes step-by-step instructions for reproducing every figure and table in this paper, from ideal simulator through noisy simulator to real hardware.

The remainder of the paper is organized as follows. Section~\ref{sec:related} surveys related work across QML calibration, drift monitoring, adversarial verification, fingerprinting, and behavioral subtyping, positioning this paper relative to each thread. Section~\ref{sec:background} introduces the observable-contract framework and the QML primitives used in the rest of the paper. Section~\ref{sec:threat} states the threat model and adversary capabilities, instantiated through the three deployment scenarios introduced above. Section~\ref{sec:framework} develops the behavioral fingerprinting framework and proves the detection theorem, sample complexity bound, and drift corollary. Section~\ref{sec:experiment} reports experimental validation on the IBM Heron r2 processor. Section~\ref{sec:discussion} discusses limitations, integration with existing toolchains, and open problems. Section~\ref{sec:conclusion} concludes.

\section{Related Work and Positioning}
\label{sec:related}

Quantum software engineering has matured into an active research field with its own body of identified challenges and open problems \cite{murillo2025roadmap}. Two recent contributions sketch the broader trustworthiness and security landscape for QML deployment. Catak et al. \cite{catak2025tqml} propose a roadmap for trustworthy QML organized around three pillars (uncertainty quantification, adversarial robustness against input perturbations, and privacy preservation in delegated learning), validated on a unified trust-assessment pipeline for parameterized quantum classifiers. Kundu and Ghosh \cite{kundu2024sokqmlaas} provide an SoK survey of security concerns in QML-as-a-Service deployments, mapping confidentiality, integrity, and availability threats across the QMLaaS workflow but stopping short of constructive defenses. Neither addresses channel-level execution integrity as a runtime-verifiable contract: the trustworthiness pillars of \cite{catak2025tqml} concern calibration-of-confidence, input-perturbation robustness, and data privacy rather than declared-versus-executed channel agreement, while \cite{kundu2024sokqmlaas} catalogs the threat surface without proposing a runtime detection mechanism. The present work fills the channel-integrity slot in this landscape with a contract-based, dual-mode framework. The remainder of this section sits the contribution at the intersection of five technical threads that the broader trustworthiness roadmap touches but does not unify: calibration management for noisy quantum hardware, drift monitoring in QML, robustness verification of QML against adversarial inputs, fingerprinting of quantum devices, and behavioral subtyping for quantum software. We surveyed each thread and confirm that, to our knowledge, no prior work combines channel-level behavioral fingerprinting with the dual operational modes (drift-aware monitoring and adversarial substitution detection) under a unified contract-based framework with formal soundness and sample-complexity guarantees. The remainder of this section positions our contribution against each thread using comparison tables, emphasizing complementarity rather than displacement.

\subsection{Calibration Management for Noisy Quantum Hardware}

QML pipelines deployed on NISQ hardware operate under non-stationary noise. A growing line of work addresses this through hardware-aware adaptation at the pulse level. Hu et al. \cite{hu2023qupad} introduced QuPAD, a pulse-based noise adaptation framework that replaces CNOT gates (identified as fidelity bottlenecks) with parameterized Rzx gates and uses an evolutionary algorithm to calibrate optimal pulse amplitudes and durations on the target device. On 8 to 10 qubit systems, QuPAD achieves under 15 minutes of runtime, up to 270x speedup compared to parameter-shift optimization, and 59.33\% accuracy gain on classification tasks. Earlier and parallel work on iterative pulse calibration \cite{quantum2024pulse} and adaptive error mitigation \cite{henao2023adaptive} share the same hardware-adaptation purpose.

These approaches operate at a layer below the channel: they reshape the pulse waveform implementing each gate to compensate for hardware variability and produce a higher-fidelity channel. Our framework operates at the channel layer, taking the executed channel as input and verifying whether its observable behavior conforms to the declared specification within calibrated tolerance. The two layers are complementary: pulse-level adaptation strengthens the substrate channel, while channel-level contract verification is a question that this thread does not address and that, as far as we are aware, has not been formulated as a runtime-verifiable pipeline-integrity contract in the prior QML calibration literature. A pipeline using QuPAD for pulse-level adaptation can deploy our framework on top to provide auditable runtime evidence that the resulting channel meets its declared behavioral contract; conversely, a pipeline that does not use pulse-level adaptation can still be verified channel-level. Table~\ref{tab:rw-pulse} lays out this layered relationship across attack surface, verifier role, and adaptation granularity.

\begin{table}[H]
\caption{Pulse-level calibration management versus this work. Each row characterizes one comparison aspect.}
\label{tab:rw-pulse}
\centering
\renewcommand{\arraystretch}{1.0}
\small
\begin{tabular}{>{\raggedright\arraybackslash}p{3cm} >{\raggedright\arraybackslash}p{5cm} >{\raggedright\arraybackslash}p{5cm}}
\toprule
\textbf{Aspect} & \textbf{Pulse-level approaches} (QuPAD~\cite{hu2023qupad}, iterative pulse calibration~\cite{quantum2024pulse}, adaptive error mitigation~\cite{henao2023adaptive}) & \textbf{This work} \\
\midrule
Operational layer & Reshapes pulse waveform to compensate hardware variability & Operates on the resulting channel as input; verifies its observable behavior \\
\midrule
Performance metric & Fidelity gain, accuracy gain, runtime speedup & Observable contract satisfaction within calibrated tolerance \\
\midrule
Adversary model & Not addressed (benign noise assumed) & Channel-substitution adversary covered \\
\midrule
Operational mode & Single mode: calibrate-and-run & Dual mode: drift-aware monitoring + adversarial detection \\
\bottomrule
\end{tabular}
\end{table}

\subsection{Drift Monitoring in QML}

A separate thread addresses statistical drift in QML deployments. Dunn et al. \cite{dunn2025qsafeml} introduced Q-SafeML, the quantum adaptation of the SafeML safety monitoring approach. Q-SafeML detects \emph{concept drift} (the divergence between operational input data and training data) using quantum-centric distance metrics, with experimental validation on QCNNs and VQCs. The approach is model-dependent and post-classification, and it explicitly targets the input-distribution-versus-model alignment problem.

Q-SafeML and the present work address drift on different axes. Q-SafeML monitors drift in the \emph{input data distribution}, asking whether the production data is representative of the training data on which the QML model was trained. Our framework monitors drift in the \emph{quantum channel itself}, asking whether the hardware-implemented channel still produces the same observable expectations as the declared channel within calibrated tolerance. The two are complementary axes: an input-distribution drift can occur without a channel drift, and a channel drift can occur without an input-distribution drift. Both can occur simultaneously in production. A complete QML deployment integrity solution would deploy Q-SafeML on the input layer and our framework on the channel layer; to the best of our knowledge, channel-level drift has not previously been addressed under a runtime-verifiable contract that also covers adversarial substitution within the same operational mechanism. Table~\ref{tab:rw-drift} contrasts the two axes across what is monitored, the failure mode each catches, and where each fits in the deployment stack.

\begin{table}[H]
\caption{Drift monitoring approaches in QML versus this work. Each row characterizes one comparison aspect.}
\label{tab:rw-drift}
\centering
\renewcommand{\arraystretch}{1.0}
\small
\begin{tabular}{>{\raggedright\arraybackslash}p{3cm} >{\raggedright\arraybackslash}p{5cm} >{\raggedright\arraybackslash}p{5cm}}
\toprule
\textbf{Aspect} & \textbf{Q-SafeML}~\cite{dunn2025qsafeml} & \textbf{This work} \\
\midrule
What is monitored & Concept drift: input distribution vs.\ training distribution & Channel drift: declared channel vs.\ executed channel \\
\midrule
When triggered & Post-classification, model-dependent & During execution, before commit to audit trail \\
\midrule
Validated on & QCNN and VQC at the input layer & Quantum kernel methods (QSVM) at the channel layer; structural specialization to VQC and QNN is treated in Section~\ref{sec:framework} \\
\midrule
Operational mode & Single mode: drift only & Dual mode: drift + adversarial detection share the same machinery \\
\bottomrule
\end{tabular}
\end{table}

\subsection{Robustness Verification of QML against Adversarial Inputs}

Quantum machine learning models, like their classical counterparts, are vulnerable to adversarial perturbations of inputs \cite{lu2020quantum, liao2021robust, west2023benchmarking}. Lin et al. \cite{lin2024veriqr} introduced VeriQR, the first dedicated tool for formally verifying robustness of QML models, supporting exact (sound and complete) algorithms for local and global robustness verification and approximation algorithms for efficiency. VeriQR mimics noisy hardware impacts by incorporating random noise, detects adversarial input examples, and improves model robustness through adversarial training. The tool was presented at the World Congress on Formal Methods 2024.

VeriQR and the present work address adversarial verification on different attack surfaces. VeriQR addresses the \emph{input-perturbation} threat: an adversary perturbs inputs to a fixed quantum channel, and the verifier certifies stability of the output classification under such perturbations. Our framework addresses the \emph{channel-substitution} threat: an adversary replaces the declared quantum channel with a behaviorally similar substitute that processes unperturbed inputs, and the verifier detects the substitution. The two are independent attack surfaces: input perturbation does not require channel substitution, and channel substitution does not require input perturbation. A QML deployment exposed to both threats benefits from VeriQR on inputs and our framework on channel identity; the latter surface, to the best of our knowledge, has not previously been formulated as a runtime-verifiable QML pipeline integrity contract with formal soundness and sample-complexity guarantees. Table~\ref{tab:rw-robust} compares the two along threat surface, what is varied (inputs versus channels), and what is certified.

\begin{table}[H]
\caption{Robustness verification approaches in QML versus this work. Each row characterizes one comparison aspect.}
\label{tab:rw-robust}
\centering
\renewcommand{\arraystretch}{1.0}
\small
\begin{tabular}{>{\raggedright\arraybackslash}p{3cm} >{\raggedright\arraybackslash}p{5cm} >{\raggedright\arraybackslash}p{5cm}}
\toprule
\textbf{Aspect} & \textbf{VeriQR}~\cite{lin2024veriqr} \textbf{and adversarial QML}~\cite{lu2020quantum, liao2021robust} & \textbf{This work} \\
\midrule
Threat surface & Input-perturbation: adversary modifies inputs to a fixed channel & Channel-substitution: adversary replaces the channel; inputs unchanged \\
\midrule
What is verified & Output stability under input perturbation & Identity of the executing channel \\
\midrule
Mitigation strategy & Adversarial training strengthens the model & Detection-based; orthogonal to mitigation \\
\midrule
Validation environment & Simulated noise environments & Real hardware (IBM Heron r2) \\
\bottomrule
\end{tabular}
\end{table}

\subsection{Fingerprinting of Quantum Devices}

A growing body of work addresses the question ``which quantum device is this?'' by characterizing device-specific noise signatures. Mi et al. \cite{mi2021quantum} demonstrated 99.1\% accuracy in identifying quantum devices through idle tomography. Mutolo et al. \cite{mutolo2025fingerprint} achieved 99\% accuracy distinguishing five IBM backends from error syndrome data. MacNeil et al. \cite{macneil2025authentication} proposed circuit-level authentication via Total Variation Distance (TVD) of noise fingerprints. Wu et al. \cite{wu2024qid} introduced Q-ID, a lightweight scheme that identifies cloud quantum servers by measuring the performance gap between two noise levels on a user's task circuit, with experimental validation on the IBM Quantum platform.

These approaches and the present work share the term ``fingerprinting'' but at different layers of the QML stack. Device fingerprinting characterizes the \emph{hardware substrate} on which a circuit runs: which backend, which calibration cycle, which physical device. Our behavioral fingerprinting characterizes the \emph{channel running on that substrate}: whether the executed quantum operations match the declared specification. The two layers are complementary. Q-ID and related device fingerprinting can confirm that a customer's quantum job is routed to the claimed IBM backend; our framework can then verify that, on that backend, the declared channel is the one being executed. We are not aware of prior work that extends fingerprinting from the device-identity layer to a runtime channel-integrity contract for an end-to-end QML pipeline. Table~\ref{tab:rw-fingerprint} differentiates the two ``fingerprint'' notions by target (hardware versus channel), verification artifact, and the question each answers.

\begin{table}[H]
\caption{Device fingerprinting versus behavioral fingerprinting (this work). Each row characterizes one comparison aspect.}
\label{tab:rw-fingerprint}
\centering
\renewcommand{\arraystretch}{1.0}
\small
\begin{tabular}{>{\raggedright\arraybackslash}p{3cm} >{\raggedright\arraybackslash}p{5cm} >{\raggedright\arraybackslash}p{5cm}}
\toprule
\textbf{Aspect} & \textbf{Device fingerprinting} (Q-ID~\cite{wu2024qid}, idle tomography~\cite{mi2021quantum}, error syndrome~\cite{mutolo2025fingerprint}, TVD authentication~\cite{macneil2025authentication}) & \textbf{This work (behavioral fingerprinting)} \\
\midrule
What is identified & Hardware substrate: ``which device is this?'' & Channel: ``is the declared circuit the one executing?'' \\
\midrule
Authentication target & Server or backend identity & Channel identity on a given substrate \\
\midrule
Detection signal & Noise-gap, idle tomography, error syndrome & Pauli observable expectations on channel output \\
\midrule
Threat coverage & Substrate impersonation only & Substrate authentication is complementary; channel-level adversary is the focus \\
\bottomrule
\end{tabular}
\end{table}

\subsection{Behavioral Subtyping and Quantum Software Verification}

Behavioral subtyping originates with Liskov and Wing \cite{liskov1994behavioral} and underpins three decades of design-by-contract practice \cite{meyer1992applying}. Its application to quantum software was opened by Yamaguchi and Yoshioka \cite{yamaguchi2024dbc} for individual quantum circuits and by Jin and Zhao \cite{jin2023scaffml} for module-level behavioral interface specification (ScaffML). Related quantum verification work includes verifiable quantum computation \cite{mahadev2018classical, fitzsimons2017unconditionally}, quantum Hoare logic and refinement orders \cite{feng2025refinement}, and quantum software testing \cite{ali2024quantum, miranskyy2020}. A parallel line of work uses zero-knowledge proofs (ZKPs) to verify ML model outputs while preserving model internals: classical ML explanation verification via ZKPs \cite{laufer2025expproof}, ZKP-based MLOps compliance auditing \cite{scaramuzza2025zkmlops}, and the quantum analog for QNN inference \cite{lee2026zkqml} convert model operations into arithmetic circuits with ZKPs for privacy-preserving inference and verifiable compliance.

Concurrent work \cite{yeniaras2026qcivet} develops behavioral subtyping at the general CPTP-channel level, with soundness, conditional completeness, and compositionality theorems for arbitrary channels and a sneaky-subtype impossibility result establishing that informationally incomplete observable families admit undetectable substitute channels. The present paper develops the QML-specific framework in parallel, with four distinguishing additions. First, the threat model is QML-specific (Definition~\ref{def:sneaky}): conditions (S1) and (S2) capture classifier-decision agreement and weak-observable agreement on QML output structures, neither of which has an analog in the general CPTP setting. Second, the dual-mode operational view (drift + adversarial under a single tolerance) is new. Third, the sample-complexity bound (Theorem~\ref{thm:budget}) and tolerance calibration procedure (Section~\ref{sec:fw-tolerance}) provide the finite-shot operational layer that complements the exact-expectation theorems of the general framework. Fourth, the theorems below are proved on their measurement-theoretic foundations directly (variational characterization of the diamond norm, H\"older inequality, telescoping) rather than reduced to the general result. Taken together, these additions appear to be the first specialization of behavioral subtyping into a runtime-verifiable channel-integrity contract for the QML pipeline class specifically, distinct from both the general CPTP treatment of \cite{yeniaras2026qcivet} and from the per-circuit and per-module contracts of \cite{yamaguchi2024dbc, jin2023scaffml}. The zkQML work of Lee et al. \cite{lee2026zkqml} is orthogonal: it addresses privacy of QML inference (proving correctness without revealing model internals), while QML-PipeGuard addresses integrity of channel execution (verifying that the declared channel is the one running). Table~\ref{tab:rw-qcivet} maps the relationship to QCIVET and zkQML across scope (general CPTP versus QML-specific), threat model, and operational guarantees.

\begin{table}[H]
\caption{Behavioral subtyping and quantum software verification versus this work. Each row characterizes one comparison aspect.}
\label{tab:rw-qcivet}
\centering
\renewcommand{\arraystretch}{1.0}
\small
\begin{tabular}{>{\raggedright\arraybackslash}p{3cm} >{\raggedright\arraybackslash}p{5cm} >{\raggedright\arraybackslash}p{5cm}}
\toprule
\textbf{Aspect} & \textbf{Behavioral subtyping and quantum verification} (Liskov-Wing~\cite{liskov1994behavioral}, DbC~\cite{yamaguchi2024dbc}, ScaffML~\cite{jin2023scaffml}, QCIVET~\cite{yeniaras2026qcivet}, zkQML~\cite{lee2026zkqml}) & \textbf{This work} \\
\midrule
Granularity of contract & Classical OOP, individual circuits, modules, multi-stage hybrid pipelines & QML pipeline class with kernel-specific and classifier-specific contracts \\
\midrule
Sneaky-subtype result & Concurrent work \cite{yeniaras2026qcivet}: general CPTP channels & Specialized to QML ansatz substitution with sample-complexity bound \\
\midrule
Operational modes & Single mode: adversarial detection only & Dual mode: drift-aware monitoring + adversarial detection share the same machinery \\
\midrule
zkQML positioning & Privacy of inference: hides model internals while proving correctness & Channel identity: detects substitution by verifying execution against the declared specification \\
\bottomrule
\end{tabular}
\end{table}

\subsection{Synthesis}

Across the five threads above, the picture that emerges is one of complementarity. Pulse-level calibration (QuPAD) reshapes the substrate channel to fight noise. Data drift monitoring (Q-SafeML) tracks input-distribution shift. Input robustness verification (VeriQR) certifies output stability under input perturbation. Device fingerprinting (Q-ID and related) authenticates the hardware substrate. Behavioral subtyping for quantum software (Liskov-Wing predecessors and recent quantum extensions) establishes the contract-based discipline. Each of these threads, together with the broader trustworthiness roadmap of \cite{catak2025tqml} and the QMLaaS threat catalog of \cite{kundu2024sokqmlaas}, addresses one slot of the QML deployment trust stack but leaves the channel-level execution-integrity slot empty: no prior work asks, under a unified runtime-verifiable contract, whether the channel actually executing on hardware is the channel that was declared, and answers that question in both the benign-drift and adversarial-substitution regimes with formal soundness and sample-complexity guarantees. QML-PipeGuard occupies exactly this slot of the stack: assuming a QML model trained, an ansatz declared, a backend authenticated, and inputs nominally well-formed, it verifies that the executing channel still satisfies its declared observable contract within calibrated tolerance, in both benign-drift and adversarial-substitution settings, and produces a single audit trail spanning both event types. The framework applies across multiple QML model classes; we develop the formal treatment for VQCs, QSVMs, and QNNs (Section~\ref{sec:bg-qml}), with extension to QCNNs and QBMs following the same structural template.

\section{Background}
\label{sec:background}

This section establishes the notation and concepts used throughout the paper. We introduce the observable-contract framework for quantum channels (Section~\ref{sec:bg-contract}), review the QML primitives our framework targets (Section~\ref{sec:bg-qml}), and recall the observable measurement theory underlying our detection results (Section~\ref{sec:bg-observable}). Readers familiar with these foundations may skip directly to Section~\ref{sec:threat}.

\subsection{Observable Contracts for Quantum Channels}
\label{sec:bg-contract}

A quantum-classical pipeline consists of stages, each modeled as a completely positive trace-preserving (CPTP) channel acting on density operators of a finite-dimensional Hilbert space. To attach a runtime-verifiable specification to such a stage, we associate to each declared channel $\mathcal{E}_A$ a \emph{stage specification}
\begin{equation}
\sigma_A = (H_{\text{spec}}, \mathcal{O}_A, \varepsilon_A, \tau_A),
\label{eq:bg-stagespec}
\end{equation}
where $H_{\text{spec}}$ is a hash anchor binding the specification to an audit trail, $\mathcal{O}_A$ is the declared observable family used for behavioral verification, $\varepsilon_A$ is the calibrated tolerance for observable deviation, and $\tau_A$ is the issuance timestamp. A candidate channel $\mathcal{E}_B$ satisfies the observable contract of $\sigma_A$ if its observable expectations agree with those of $\mathcal{E}_A$ within tolerance:
\begin{equation}
\big| \operatorname{Tr}(O \, \mathcal{E}_B(\rho)) - \operatorname{Tr}(O \, \mathcal{E}_A(\rho)) \big| \leq \varepsilon_A
\quad \forall O \in \mathcal{O}_A, \; \forall \rho \in \mathcal{S}(\mathcal{H}),
\label{eq:bg-contract}
\end{equation}
where $\mathcal{S}(\mathcal{H})$ denotes the set of density operators on $\mathcal{H}$.

The contract \eqref{eq:bg-contract} is the runtime-observable proxy for channel-level integrity: it measures equivalence between the declared and the executed channel using only quantities the verifier can actually estimate (Pauli expectations under finite-shot sampling). The structural property that makes the contract useful is informational completeness of $\mathcal{O}_A$ with respect to the substitution class under consideration: a contract over a family that is informationally complete relative to the substitution class forces channel-level equivalence on that class (Section~\ref{sec:fw-detection}, Proposition~\ref{prop:local-ic}), while a contract over a family that is informationally incomplete relative to the class admits substitute channels that satisfy the contract yet differ as CPTP maps. The contract-based discipline of binding specifications to observable families has been developed at the general CPTP-channel level in concurrent work \cite{yeniaras2026qcivet}; the present paper develops the QML-specific framework on its own terms (Section~\ref{sec:framework}).

\subsection{Variational Quantum Classifiers, Quantum Kernels, and Quantum Neural Networks}
\label{sec:bg-qml}

We target three of the most widely deployed QML model classes \cite{biamonte2017qml, cerezo2021vqa}.

A \emph{variational quantum classifier} (VQC) is defined by an encoding circuit $U_\phi$ that maps a classical input $x \in \mathcal{X}$ to a parameterized quantum state $\ket{\phi(x)}$ and a trainable circuit $W(\theta)$ with parameters $\theta$. The classifier output for an input $x$ is the expectation of a measurement observable $M$ on the prepared state:
\begin{equation}
f_{\theta}(x) = \operatorname{Tr}\!\left( M \cdot W(\theta) \ket{\phi(x)}\!\bra{\phi(x)} W(\theta)^\dagger \right).
\label{eq:bg-vqc}
\end{equation}
The mapping from input $x$ to output $f_\theta(x)$ is a quantum channel parameterized by $\theta$, which we denote $\mathcal{E}_{A}^{(\text{VQC})}$ when the classifier is declared with specific $\theta$.

A \emph{quantum kernel method}, also called a \emph{quantum support vector machine} (QSVM) \cite{havlivcek2019supervised, schuld2020kernel}, constructs a similarity matrix between input pairs by computing inner products of their encoded quantum states:
\begin{equation}
K(x_i, x_j) = \big| \langle \phi(x_i) | \phi(x_j) \rangle \big|^2.
\label{eq:bg-kernel}
\end{equation}
A classical support vector machine (SVM) is then trained on the resulting kernel matrix. The kernel computation defines a quantum channel $\mathcal{E}_{A}^{(\text{QSVM})}$ whose output is the kernel value, accessed through measurement.

A \emph{quantum neural network} (QNN) \cite{abbas2021qnn, schuld2020nn} generalizes the VQC construction to multi-layer parameterized architectures, with layer-wise structure analogous to classical deep neural networks. Formally, a QNN is a sequential composition of parameterized quantum channels:
\begin{equation}
\mathcal{E}_{A}^{(\text{QNN})} = \mathcal{L}_{\theta_L} \circ \mathcal{L}_{\theta_{L-1}} \circ \cdots \circ \mathcal{L}_{\theta_1} \circ \mathcal{E}_{\phi},
\label{eq:bg-qnn}
\end{equation}
where $\mathcal{E}_{\phi}$ is the input encoding channel and each $\mathcal{L}_{\theta_\ell}$ is a parameterized layer (entangling block plus single-qubit rotations). Training is performed by classical optimization over $\{\theta_1, \ldots, \theta_L\}$ using gradients estimated via parameter-shift rules \cite{mitarai2018qcl} or related techniques. The output is the expectation of a designated measurement observable on the final layer's state.

In all three classes, the QML output is the expectation of an observable measurement on a prepared quantum state. This is the structural property that allows the observable contract framework to apply: the QML output is precisely the form of quantity covered by Equation~\eqref{eq:bg-contract}. Other emerging QML architectures, including QCNNs \cite{cong2019qcnn} and QBMs \cite{amin2018qbm}, share the same structural property and can be treated under the same framework; we briefly discuss their integration in Section~\ref{sec:discussion}.

\subsection{Observable Measurement and Tomographic Completeness}
\label{sec:bg-observable}

The detection guarantees in this paper depend on the observable family $\mathcal{O}_A$ being \emph{informationally complete} for the QML output space. We recall the relevant definitions briefly; further detail is given in standard quantum information references \cite{nielsen2010quantum, huang2020predicting}.

\paragraph{Pauli observables and single-qubit completeness.}
For a single qubit, the Pauli operators $X$, $Y$, $Z$ together with the identity $I$ form a basis for the space of Hermitian operators on $\mathbb{C}^2$. Any single-qubit density operator admits the Bloch decomposition
\begin{equation}
\rho = \tfrac{1}{2}\big( I + r_x X + r_y Y + r_z Z \big),
\quad r_x^2 + r_y^2 + r_z^2 \leq 1.
\end{equation}
The Bloch vector $(r_x, r_y, r_z)$ is uniquely determined by the three expectation values $\langle X \rangle$, $\langle Y \rangle$, $\langle Z \rangle$. Therefore the family $\{X, Y, Z\}$ is informationally complete on $\mathbb{C}^2$, while any strict subset (e.g., $\{Z\}$ only) is informationally incomplete.

\paragraph{Multi-qubit Pauli families.}
For $n$ qubits, the Pauli strings $\{I, X, Y, Z\}^{\otimes n}$ form a basis of dimension $4^n$ for Hermitian operators on $(\mathbb{C}^2)^{\otimes n}$. A subset $\mathcal{O}_A \subset \{I, X, Y, Z\}^{\otimes n}$ is informationally complete on the multi-qubit space if and only if it spans this basis modulo the identity. In practice, $n$-qubit deployments measure a polynomial subset of the full Pauli family; classical shadow tomography \cite{huang2020predicting} characterizes what can be efficiently learned from such subsets and bounds the sample complexity of multi-qubit observable estimation.

\paragraph{Measurement and shot noise.}
A single execution of a quantum circuit followed by measurement of an observable $O$ yields a sample of a random variable with mean $\operatorname{Tr}(O \rho)$ and variance bounded by $\| O \|^2$. To estimate the expectation $\langle O \rangle$ within additive error $\delta$ at confidence $1 - \eta$ requires $O(\| O \|^2 \log(1/\eta) / \delta^2)$ shots by standard concentration bounds. The sample complexity of the verification protocol developed in Section~\ref{sec:framework} follows from union-bounding over the observable family.

\section{Threat Model}
\label{sec:threat}

This section formalizes the threat surface that the behavioral fingerprinting framework addresses. We model two distinct integrity concerns under a unified abstraction. The first is \emph{benign calibration drift}: the executed quantum channel diverges from the declared specification due to hardware variability over time, without adversarial intent. The second is \emph{adversarial channel substitution}: an entity with control over the QML execution environment intentionally replaces the declared channel with a behaviorally similar substitute. Both concerns share the same mathematical structure (channel deviation from the declared specification) but differ in adversarial intent and in the resulting tolerance regime.

\subsection{System Model and Trust Assumptions}
\label{sec:threat-trust}

A QML pipeline deployment consists of three parties:

\begin{itemize}
\item \textbf{The model owner} declares a QML channel $\mathcal{E}_A$ (encoding $U_\phi$ composed with trained $W(\theta)$ for VQC/QNN, or encoding alone for QSVM) together with its observable family $\mathcal{O}_A$ and tolerance $\varepsilon_A$.
\item \textbf{The execution environment} runs $\mathcal{E}_A$ on quantum hardware (a cloud QPU, a local device, or a distributed setup) and returns measurement outcomes.
\item \textbf{The verifier} samples observables from $\mathcal{O}_A$, requests measurement shots, computes empirical expectations, and accepts or rejects the execution based on the observable contract (also referred to as the QML-PipeGuard contract throughout this paper; the two terms are equivalent).
\end{itemize}

The verifier is assumed to be trusted. The verifier holds a reference copy of the declared channel specification, including the hash anchor $H_{\text{spec}}$ that binds the specification to the audit trail. The verifier has independent access to a measurement-shot collection mechanism that is not under the adversary's control.

The execution environment is the locus of the threat. In the adversarial case, the execution environment is under the adversary's control: the adversary may substitute the declared channel without the verifier's direct observation. In the benign case, the execution environment is honest but subject to hardware drift over time.

The model owner is treated as potentially untrusted: in Scenario~1 below, the model owner is themselves the adversary. In Scenarios~2 and~3, the model owner is honest but the execution environment is adversarial or drifting.

In practice, the verifier role is instantiated differently across deployment scenarios: a regulatory body in audit settings, the customer in cloud QPU settings, or a replication community in academic settings (see Section~\ref{sec:threat-mapping} for the full scenario-to-role mapping).

\begin{figure}[H]
\centering
\begin{tikzpicture}[
  font=\small,
  >=Stealth,
  thick,
  party/.style={
    rectangle, rounded corners=3pt, draw, thick,
    minimum width=3.0cm, minimum height=1.0cm,
    align=center
  },
  trusted/.style={
    party, fill=green!10, draw=green!50!black
  },
  optional/.style={
    party, fill=gray!10, draw=gray!60!black
  },
  spec/.style={
    rectangle, draw, dashed, fill=blue!5,
    minimum width=3.2cm, minimum height=1.0cm,
    align=center, font=\footnotesize
  },
  decision/.style={
    rectangle, rounded corners=3pt, draw,
    fill=yellow!15, minimum width=3.2cm, minimum height=0.9cm,
    align=center, font=\footnotesize
  },
  adversary/.style={
    ellipse, draw=red!70!black, fill=red!20, thick,
    minimum width=1.6cm, minimum height=0.7cm,
    align=center, font=\footnotesize\bfseries
  },
  pipestage/.style={
    rectangle, draw, fill=orange!15, thick,
    minimum width=1.5cm, minimum height=0.9cm,
    align=center, font=\footnotesize
  },
  flowlabel/.style={
    font=\footnotesize\itshape, midway, fill=white, inner sep=2pt
  },
  envbox/.style={
    rectangle, draw=red!60!black, thick, dashed,
    inner sep=0.5cm, fill=red!5,
    rounded corners=5pt
  }
]

\node[optional] (owner) {Model Owner\\\footnotesize\itshape(potentially untrusted)};
\node[spec, right=2cm of owner] (spec) {Declared Spec $\sigma_A$\\$(H_{\text{spec}}, \mathcal{O}_A, \varepsilon_A, \tau_A)$};

\node[pipestage, below=3.2cm of owner.south west, anchor=north west] (encode) {$U_\phi(x)$\\encoding};
\node[pipestage, right=0.4cm of encode] (ansatz) {$W(\theta)$\\ansatz};
\node[pipestage, right=0.4cm of ansatz] (measure) {$M$\\measurement};

\begin{pgfonlayer}{background}
\node[envbox, fit=(encode) (ansatz) (measure), label={[font=\footnotesize\bfseries, text=red!70!black, anchor=north west]north west:Execution Environment $\mathcal{E}_A$ or $\mathcal{E}_B = \mathcal{S}\circ\mathcal{E}_A$?}] (envbox) {};
\end{pgfonlayer}

\node[adversary, left=2.0cm of envbox.west] (adv) {$\mathcal{A}$};

\node[trusted, right=1.2cm of envbox.east] (verifier) {Verifier\\\footnotesize\itshape(trusted)};

\node[decision, below=1.0cm of verifier] (decision) {Accept / Halt /\\Log drift event};


\draw[->] (owner) -- node[flowlabel] {declares} (spec);

\draw[->] (spec.south) -- ++(0, -0.6) -| (ansatz.north) node[flowlabel, near start, right] {deploy};

\draw[->, red!70!black, thick, dashed] (adv.east) -- (envbox.west)
  node[flowlabel, midway, above, text=red!70!black, yshift=2pt] {may substitute};

\draw[->] (verifier.north west) to[bend left=35]
  node[flowlabel, above, yshift=16pt] {observable queries} (envbox.north east);

\draw[->] (envbox.south east) to[bend left=35]
  node[flowlabel, below, yshift=-4pt] {measurement outcomes} (verifier.south west);

\draw[->, dotted] (spec.east) to[bend left=55]
  node[flowlabel, above, yshift=18pt] {reference} (verifier.north);

\draw[->] (verifier) -- node[flowlabel, right] {contract check} (decision);

\end{tikzpicture}
\caption{System model and trust boundaries. The model owner declares a specification $\sigma_A$ for the QML pipeline (encoding $U_\phi$, ansatz $W(\theta)$, and measurement $M$). The pipeline executes inside the execution environment, which in adversarial scenarios is under the control of an adversary $\mathcal{A}$ who may substitute the declared channel $\mathcal{E}_A$ with $\mathcal{E}_B = \mathcal{S} \circ \mathcal{E}_A$. The trusted verifier samples observables from $\mathcal{O}_A$, collects measurement outcomes, checks the observable contract against the reference specification, and produces an accept/halt/log-drift decision. Green: trusted; red dashed border: adversary-controllable; gray: trust depends on scenario.}
\label{fig:threat-model}
\end{figure}

Figure~\ref{fig:threat-model} summarizes the system model and trust boundaries graphically. The diagram traces the path from the model owner's declared specification through the execution environment to the trusted verifier, with green nodes marking trusted parties, the dashed red boundary marking the adversary-controllable region, and gray nodes marking parties whose trust depends on the deployment scenario.

\subsection{Adversarial Threat: Channel Substitution}
\label{sec:threat-adv}

We consider an adversary $\mathcal{A}$ who controls the execution environment of the QML pipeline. $\mathcal{A}$ can replace the declared channel $\mathcal{E}_A$ with a substitute channel $\mathcal{E}_B$ that satisfies the following structural conditions:

\begin{enumerate}
\item $\mathcal{E}_B$ agrees with $\mathcal{E}_A$ on the classification decision for all inputs in a verification test set (the substitute passes standard accuracy-based audits);
\item $\mathcal{E}_B$ agrees with $\mathcal{E}_A$ on a weak observable subfamily $\mathcal{O}_{\text{weak}} \subset \mathcal{O}_A$ used by classical fingerprinting, for instance $\{Z\}$-only measurements on output qubits;
\item $\mathcal{E}_B$ differs from $\mathcal{E}_A$ on the full observable family $\mathcal{O}_A$ when $\mathcal{O}_A$ is informationally complete, i.e., $\| \mathcal{E}_A - \mathcal{E}_B \|_\diamond \geq \delta$ for some adversarial separation parameter $\delta > 0$.
\end{enumerate}

\paragraph{Adversary capabilities.}
$\mathcal{A}$ can execute arbitrary CPTP channels on the quantum hardware, including channels that differ from $\mathcal{E}_A$ by unitary, mixed-unitary, or noise-introducing transformations. $\mathcal{A}$ may operate adaptively: after observing previous queries and outcomes, $\mathcal{A}$ may modify the substitute channel for subsequent queries. We model this as $\mathcal{A}$ selecting at each verification round $t$ a possibly different substitute channel $\mathcal{E}_B^{(t)}$, where the choice may depend on the entire history of previous rounds. $\mathcal{A}$ does not control the verifier's classical post-processing or per-round observable selection.

\paragraph{Adversary knowledge.}
$\mathcal{A}$ knows the declared circuit specification $\mathcal{E}_A$, the observable family $\mathcal{O}_A$ used for verification, the tolerance $\varepsilon_A$, the verification test set used for accuracy audits, and the verifier's protocol structure. This is a worst-case (white-box) adversary on the specification side. The verifier's per-query observable selection is randomized and revealed only after each query is committed; although $\mathcal{A}$ knows the family $\mathcal{O}_A$ and the tolerance $\varepsilon_A$, it cannot optimize against a specific observable in advance of each round. Section~\ref{sec:framework} formalizes how this randomization yields detection guarantees.

\paragraph{Adversary's goal.}
Deploy $\mathcal{E}_B \neq \mathcal{E}_A$ in production while passing the verifier's audit. The adversary succeeds if the verifier accepts execution despite $\mathcal{E}_B$ differing meaningfully (in diamond-norm distance) from $\mathcal{E}_A$.

\subsection{Benign Drift: Calibration Variability}
\label{sec:threat-drift}

In the absence of adversarial intent, the executed channel still diverges from the declared specification due to hardware variability. A drift event $\mathcal{E}_B = \mathcal{D} \circ \mathcal{E}_A$ arises when the hardware-implemented channel composes with a benign perturbation $\mathcal{D}$ representing cumulative calibration changes, where:

\begin{enumerate}
\item $\mathcal{D}$ is induced by physical mechanisms (gate fidelity changes, coherence time fluctuations, readout drift) rather than adversarial design;
\item $\mathcal{D}$ is not bounded a priori: drift may be small (acceptable) or large (requiring recalibration or pipeline halt);
\item The framework characterizes $\mathcal{D}$ quantitatively through the same observable contract; the operational consequences of the resulting deviation are detailed in the paragraphs below.
\end{enumerate}

Drift and adversarial substitution share the structural form of a channel modification but differ in three respects. Drift is unintentional, statistical, and unfocused (does not target the verifier's observable family specifically). Adversarial substitution is intentional, structured, and may be optimized to exploit weak observable families. The framework treats both under the same contract \eqref{eq:bg-contract}, with the calibrated tolerance $\varepsilon_A$ serving as the operational boundary between accepted drift and detected anomaly. Deviations within tolerance ($\Delta \leq \varepsilon_A$) are recorded as drift events in the audit trail and execution continues; deviations beyond tolerance ($\Delta > \varepsilon_A$) halt the pipeline and flag an integrity violation.

The framework therefore distinguishes within-tolerance drift from beyond-tolerance anomaly automatically through this $\varepsilon_A$ boundary. It does not, however, attempt to further distinguish whether a beyond-tolerance event was caused by unusually large drift or by adversarial substitution: both produce the same observable signature (an out-of-tolerance fingerprint deviation) and both require pipeline halt or recalibration. The root-cause analysis that separates benign from adversarial causes is left to manual investigation or downstream forensic tools.

\subsection{Out of Scope}
\label{sec:threat-oos}

The following threats are not addressed by this framework and require complementary mechanisms:

\begin{itemize}
\item \textbf{Input-perturbation adversarial examples.} Attacks that perturb classical inputs to the QML pipeline to induce misclassification are an orthogonal threat surface, addressed by tools such as VeriQR \cite{lin2024veriqr}.
\item \textbf{Training-time attacks.} Data poisoning, backdoor insertion, or model-stealing during training are orthogonal to runtime channel integrity and require training-pipeline-level defenses.
\item \textbf{Denial-of-service.} An adversary who refuses to execute the pipeline or returns failure responses is outside our scope; this is addressed by service-level monitoring.
\item \textbf{Verifier compromise.} We assume the verifier is trusted; an adversary who compromises the verifier itself can defeat any contract-based scheme by definition.
\item \textbf{Side-channel attacks on classical infrastructure.} Timing, power, or electromagnetic side channels on the classical components of the pipeline are outside our scope.
\end{itemize}

\subsection{Scenario-to-Model Mapping}
\label{sec:threat-mapping}

Table~\ref{tab:threat-scenarios} maps the three deployment scenarios from Section~\ref{sec:intro} to the formal threat model components, showing how the abstract model instantiates in practice.

\begin{table}[H]
\caption{Mapping of deployment scenarios to formal threat model components.}
\label{tab:threat-scenarios}
\centering
\renewcommand{\arraystretch}{1.2}
\begin{tabular}{>{\raggedright\arraybackslash}p{2.5cm} >{\raggedright\arraybackslash}p{3.5cm} >{\raggedright\arraybackslash}p{3.5cm} >{\raggedright\arraybackslash}p{3.5cm}}
\toprule
\textbf{Scenario} & \textbf{Scenario 1: Regulated audit (Dieselgate analog)} & \textbf{Scenario 2: Cloud QPU cost-cutting (lazy-server analog)} & \textbf{Scenario 3: Academic reproducibility} \\
\midrule
Adversary identity & Model owner & Execution environment (cloud provider) & Model owner or environment \\
\midrule
Adversary goal & Pass compliance audit while deploying biased ansatz in production & Reduce operational cost while passing customer test queries & Pass peer-review replication while reporting unrelated results \\
\midrule
Substitution form & $\mathcal{E}_B$ debiased for audit inputs only & $\mathcal{E}_B$ lower-fidelity but accuracy-matched on test set & $\mathcal{E}_B$ matches reviewer test cases but not full experiment \\
\midrule
Verifier role & External regulatory body or internal compliance & Customer-side verification primitive & Reviewer-side or community replication primitive \\
\bottomrule
\end{tabular}
\end{table}

The framework treats all three scenarios uniformly: each maps to an adversary $\mathcal{A}$ producing a substitute channel $\mathcal{E}_B$ that passes weak observable subfamilies while violating the informationally complete contract. Section~\ref{sec:framework} develops the detection machinery that catches all three under the same theorem.

\subsection{Note on Scope and Limitations}
\label{sec:threat-scope}

The threat model above assumes a verifier with independent measurement access and a hash-bound specification anchored in an auditable trail. Deployments lacking these primitives are outside the present scope. Two classes of such deployments are notable.

First, fully offline or edge-quantum settings without a network-level verifier require additional cryptographic machinery, such as delegation-based verification \cite{mahadev2018classical, fitzsimons2017unconditionally}, to provide an equivalent integrity guarantee. We do not extend to this setting.

Second, deployments in which the model owner and the execution environment share trust (e.g., a single-organization deployment with no external auditor) reduce the threat surface to benign drift alone; in this case the adversarial-detection mode of the framework is inactive but the drift-aware mode remains operative.

These boundaries are intrinsic to any contract-based verification scheme and do not weaken the guarantees that the framework provides within its declared scope.

\section{Behavioral Fingerprinting Framework}
\label{sec:framework}

This section develops the mathematical core of the paper. We introduce behavioral fingerprinting formally (Section~\ref{sec:fw-fp}), specialize it to QML pipelines in the dual modes of adversarial detection and drift monitoring (Sections~\ref{sec:fw-sneaky}--\ref{sec:fw-drift}), prove the detection theorem (Section~\ref{sec:fw-detection}), bound the sample complexity (Section~\ref{sec:fw-complexity}), derive the drift corollary (Section~\ref{sec:fw-driftcor}), and present the dual-mode verification algorithm (Section~\ref{sec:fw-algorithm}). Before proceeding to the technical development, Section~\ref{sec:fw-foundations} positions the framework relative to the broader behavioral-subtyping discipline for quantum software, identifying what is shared with concurrent work at the general CPTP-channel level and what is new to the QML pipeline setting.

\subsection{Theoretical Foundations and Position Relative to Prior Behavioral-Subtyping Work}
\label{sec:fw-foundations}

The framework developed in this section assembles three QML-specific contributions over a behavioral-subtyping foundation. The contributions, illustrated in Figure~\ref{fig:layer-stack}, are: a QML pipeline-composition treatment that specializes the channel-as-object abstraction to the encoder--ansatz--measurement structure of a QML model together with a QML-specific threat model (Layer~1); a finite-shot statistical layer that turns the contract into an operationally executable check under shot noise on real hardware (Layer~2); and a tolerance decomposition that separates adversarial substitution from natural calibration drift under a single contract (Layer~3). Each layer is new relative to the broader behavioral-subtyping discipline on which the framework rests; together they constitute the QML pipeline framework developed in the remainder of this section.

The foundation under these three layers is the behavioral-subtyping discipline for quantum software, in which a quantum channel is treated as the implementation of a software contract over an observable family and one channel is admissible as a substitute for another precisely when their per-observable expectation values agree within a calibrated tolerance \cite{liskov1994behavioral}. The general theory of channel-level behavioral subtyping has been developed in concurrent work \cite{yeniaras2026qcivet} at the level of arbitrary CPTP channels, including soundness, conditional completeness, and a sneaky-subtype impossibility result for informationally incomplete observable families. The QML pipeline framework of this paper inherits this discipline as its substrate and adds the three layers above to make it operationally executable on a QML pipeline running on noisy quantum hardware.

\begin{description}
  \item[Layer 1: QML pipeline composition.] Specialization of the channel-as-object abstraction to the encoder--ansatz--measurement composition that defines a QML pipeline (Section~\ref{sec:fw-fp}). Concretely, the declared pipeline takes the form
  \begin{equation*}
    \mathcal{P}_A \;=\; M \circ W(\theta) \circ U_\phi(x),
  \end{equation*}
  where $U_\phi(x)$ is the data-encoding circuit, $W(\theta)$ is the (possibly trained) ansatz, and $M$ is the measurement stage; for the QSVM kernel instantiation used in Section~\ref{sec:experiment} this specializes further to $\mathcal{P}_A = M \circ U_\phi(x_j)^\dagger \circ U_\phi(x_i)$. A QML-specific threat model (Definition~\ref{def:sneaky}) accompanies this composition, with conditions (S1) and (S2) capturing classifier-decision agreement and weak-observable agreement on QML output structures. The pipeline structure is intrinsically multi-qubit: standard QML feature maps such as \texttt{ZZFeatureMap} entangle at least two qubits, so the channel under verification acts on a $2^n$-dimensional state space ($n \geq 2$) and the informationally complete observable family scales accordingly (six \emph{local} two-qubit Pauli operators of the form $X_i \otimes I$, $Y_i \otimes I$, $Z_i \otimes I$, etc., in the QSVM instantiation of Section~\ref{sec:experiment}, in contrast to the three single-qubit Paulis sufficient for the isolated-channel setting of concurrent work \cite{yeniaras2026qcivet}). Neither the compositional pipeline structure nor these conditions has a direct analog at the general CPTP-channel level.
  \item[Layer 2: Statistical sample complexity.] A finite-shot operational layer (Theorem~\ref{thm:budget}) derived from Hoeffding-type concentration on Pauli expectation estimators together with a union bound over the observable family. The general framework states exact-expectation results; the present paper adds the shot-budget guarantee that makes verification operable on real hardware under finite sampling. A Cauchy--Schwarz argument on the Bloch decomposition of single-qubit traceless Hermitian operators yields the tight frame-bound constant $C = \sqrt{3}$ for the single-qubit Pauli family (Step~2 of Theorem~\ref{thm:detection}), which extends analytically to the $n$-qubit local Pauli family via a block-additive argument. This sharpens the corresponding value $2\sqrt{2}$ used in concurrent work \cite{yeniaras2026qcivet} by a factor of $\sqrt{2/3} \approx 0.816$ in the small-tolerance limit and by approximately $27$-fold in $N$ at the deployed operational parameters; combined with a precomputed-reference refinement of the Hoeffding step, the cumulative reduction is approximately $100$-fold over the looser sampled-reference bound. Section~\ref{sec:exp-sample} reports the empirical validation of this bound on the IBM Heron r2 processor and shows that the prescribed budget is the threshold for informative detection rather than merely the threshold for nominal flagging.

 \item[Layer 3: Drift-aware tolerance.] A tolerance decomposition $\varepsilon_A = \varepsilon_{\text{adv}} + \varepsilon_{\text{drift}}$ (Corollary~\ref{cor:drift}) that bounds adversarial and natural-drift contributions separately, enabling the dual-mode operational view in which calibration drift and adversarial substitution are detected under a single contract. Section~\ref{sec:exp-drift} reports the empirical calibration of this decomposition on \texttt{ibm\_fez} and shows that the resulting tolerance interval is non-empty in the deployed parameter regime.
\end{description}

The detection theorem (Theorem~\ref{thm:detection}) is proved in this paper directly from its measurement-theoretic foundations (variational characterization of the diamond norm, H\"older inequality, telescoping) rather than reduced to the general CPTP statement of \cite{yeniaras2026qcivet}, so that the QML-specific structure of the bound is visible at each step. The resulting framework is complementary to the general theory: the same foundational principle (behavioral subtyping via observable contracts) is instantiated in two settings, with the general CPTP-channel case addressed elsewhere and the QML pipeline case, including its statistical and drift layers and an end-to-end validation on a production-grade QPU (Section~\ref{sec:experiment}), addressed here.
\begin{figure}[H]
\centering
\begin{tikzpicture}[
    every node/.style={font=\sffamily},
    layer/.style={
        rectangle, rounded corners=2pt, draw=#1, line width=0.9pt,
        fill=#1!8, text=slatetext,
        minimum width=12.2cm, minimum height=2.2cm,
        align=left, inner sep=8pt
    },
    layertitle/.style={font=\sffamily\bfseries\small, text=#1},
    layerbody/.style={font=\sffamily\footnotesize, text=slatetext},
    layerbadge/.style={
        rectangle, rounded corners=1pt, fill=#1, text=white,
        font=\sffamily\bfseries\scriptsize,
        inner sep=3pt, minimum width=1.3cm, align=center
    },
]
\node[layer=layergreen] (l3) at (0, 6.0) {};
\node[anchor=north west, layertitle=layergreen] at ($(l3.north west) + (0.2, -0.15)$) {Layer 3 \,\textbar\, Drift-Aware Tolerance \quad\textnormal{\footnotesize(Corollary~\ref{cor:drift})}};
\node[anchor=north west, layerbody, text width=8.2cm] at ($(l3.north west) + (0.2, -0.55)$) {$\varepsilon_A = \varepsilon_{\mathrm{adv}} + \varepsilon_{\mathrm{drift}}$ \,\textemdash\, separates natural calibration drift from adversarial substitution under a single tolerance};
\node[layerbadge=layergreen, anchor=east] at ($(l3.east) - (0.2, 0)$) {NEW};

\node[layer=layerpurple] (l2) at (0, 3.6) {};
\node[anchor=north west, layertitle=layerpurple] at ($(l2.north west) + (0.2, -0.15)$) {Layer 2 \,\textbar\, Statistical Sample Complexity \quad\textnormal{\footnotesize(Theorem~\ref{thm:budget})}};
\node[anchor=north west, layerbody, text width=8.2cm] at ($(l2.north west) + (0.2, -0.55)$) {$N \geq 8 B^{2} k \log(2 k / \eta) / \gamma^{2}$, with $\gamma = \delta / C - \varepsilon_A$ and tight constant $C = \sqrt{3}$ via Cauchy--Schwarz on the Bloch decomposition (single-qubit Pauli family)};
\node[layerbadge=layerpurple, anchor=east] at ($(l2.east) - (0.2, 0)$) {NEW};

\node[layer=layerblue] (l1) at (0, 1.2) {};
\node[anchor=north west, layertitle=layerblue] at ($(l1.north west) + (0.2, -0.15)$) {Layer 1 \,\textbar\, QML Pipeline Composition \quad\textnormal{\footnotesize(Theorem~\ref{thm:detection})}};
\node[anchor=north west, layerbody, text width=8.2cm] at ($(l1.north west) + (0.2, -0.55)$) {$\mathcal{P}_A = M \circ W(\theta) \circ U_\phi(x)$ \,\textemdash\, encoder--ansatz--measurement composition with QML-specific threat model conditions (S1), (S2)};
\node[layerbadge=layerblue, anchor=east] at ($(l1.east) - (0.2, 0)$) {NEW};

\node[layer=foundationgray, minimum height=2.4cm] (l0) at (0, -1.4) {};
\node[anchor=north west, layertitle=foundationgray] at ($(l0.north west) + (0.2, -0.15)$) {Foundation \,\textbar\, Behavioral Subtyping for Quantum Channels};
\node[anchor=north west, layerbody, text width=8.2cm] at ($(l0.north west) + (0.2, -0.55)$) {Liskov--Wing behavioral subtyping principle \,\textemdash\, channel-as-object, observable contract, contract-preserving substitution by per-observable tolerance \quad (general CPTP theory in concurrent work \cite{yeniaras2026qcivet})};
\node[layerbadge=foundationgray, anchor=east] at ($(l0.east) - (0.2, 0)$) {SHARED};

\draw[layergreen, line width=1pt] ($(l1.south west) + (-0.35, 0)$) -- ($(l3.north west) + (-0.35, 0)$);
\draw[layergreen, line width=1pt] ($(l1.south west) + (-0.55, 0)$) -- ($(l1.south west) + (-0.35, 0)$);
\draw[layergreen, line width=1pt] ($(l3.north west) + (-0.55, 0)$) -- ($(l3.north west) + (-0.35, 0)$);
\node[rotate=90, layertitle=layergreen, anchor=south] at ($(l1.south west)!0.5!(l3.north west) + (-0.7, 0)$) {This paper};

\draw[foundationgray, line width=1pt] ($(l0.south west) + (-0.35, 0)$) -- ($(l0.north west) + (-0.35, 0)$);
\draw[foundationgray, line width=1pt] ($(l0.south west) + (-0.55, 0)$) -- ($(l0.south west) + (-0.35, 0)$);
\draw[foundationgray, line width=1pt] ($(l0.north west) + (-0.55, 0)$) -- ($(l0.north west) + (-0.35, 0)$);
\node[rotate=90, layertitle=foundationgray, anchor=south] at ($(l0.south west)!0.5!(l0.north west) + (-0.7, 0)$) {Shared};
\end{tikzpicture}
\caption{Theoretical-foundation stack for the framework. The bottom (shared) layer is the behavioral-subtyping discipline for quantum channels: channel-as-object, observable contract, and contract-preserving substitution by per-observable tolerance, developed at the general CPTP-channel level in concurrent work. The three upper layers (this paper) specialize this foundation to QML pipelines (Layer~1), add a finite-shot operational layer through Hoeffding concentration and a union bound together with a tight frame-bound constant $C = \sqrt{3}$ via a Cauchy--Schwarz refinement (Layer~2), and decompose the contract tolerance into adversarial and drift components for the dual-mode operational view (Layer~3).}
\label{fig:layer-stack}
\end{figure}

\subsection{Behavioral Fingerprinting}
\label{sec:fw-fp}

We first introduce a runtime signature that the verifier can directly compute from observable measurements.

\begin{definition}[Behavioral Fingerprint]
\label{def:fingerprint}
Let $\mathcal{E}: \mathcal{S}(\mathcal{H}_{\text{in}}) \to \mathcal{S}(\mathcal{H}_{\text{out}})$ be a CPTP channel and let $\mathcal{O} = \{O_1, \ldots, O_k\}$ be a finite family of bounded Hermitian observables on $\mathcal{H}_{\text{out}}$. For a reference input state $\rho_{\text{ref}} \in \mathcal{S}(\mathcal{H}_{\text{in}})$, the \emph{behavioral fingerprint} of $\mathcal{E}$ with respect to $(\mathcal{O}, \rho_{\text{ref}})$ is the vector
\begin{equation}
\mathrm{Fp}_{\mathcal{O}, \rho_{\text{ref}}}(\mathcal{E})
\;=\;
\big(\operatorname{Tr}(O_1 \mathcal{E}(\rho_{\text{ref}})),\, \operatorname{Tr}(O_2 \mathcal{E}(\rho_{\text{ref}})),\, \ldots,\, \operatorname{Tr}(O_k \mathcal{E}(\rho_{\text{ref}}))\big) \in \mathbb{R}^k.
\label{eq:fp}
\end{equation}
\end{definition}

\begin{remark}
For QML deployments, $\rho_{\text{ref}}$ is naturally taken to be a canonical input state used during verification (a fixed encoded data point or an explicit reference state prepared by the verifier). The fingerprint is a finite-dimensional vector summary of the channel's observable behavior; it is the runtime-computable proxy for the channel itself.
\end{remark}

The framework's central object is the deviation between fingerprints under a candidate channel and the declared one. Following Equation~\eqref{eq:bg-contract}, the \emph{contract-deviation} is the $L^\infty$ distance between fingerprints:
\begin{equation}
\Delta_{\mathcal{O}, \rho}(\mathcal{E}_A, \mathcal{E}_B)
\;=\;
\max_{O \in \mathcal{O}} \big| \operatorname{Tr}(O \mathcal{E}_B(\rho)) - \operatorname{Tr}(O \mathcal{E}_A(\rho)) \big|.
\label{eq:contract-deviation}
\end{equation}
A candidate channel $\mathcal{E}_B$ satisfies the behavioral contract \eqref{eq:bg-contract} at tolerance $\varepsilon_A$ if and only if $\Delta_{\mathcal{O}, \rho}(\mathcal{E}_A, \mathcal{E}_B) \leq \varepsilon_A$ for all relevant $\rho$.

\subsection{Sneaky Ansatz Substitution: Adversarial Mode}
\label{sec:fw-sneaky}

We now specialize the contract to QML pipelines under the adversarial threat model of Section~\ref{sec:threat-adv}.

\begin{definition}[Sneaky Ansatz Substitution]
\label{def:sneaky}
Let $\mathcal{E}_A$ be a declared QML channel of one of the model classes in Section~\ref{sec:bg-qml} (VQC, QSVM, or QNN) with stage specification $\sigma_A = (H_{\text{spec}}, \mathcal{O}_A, \varepsilon_A, \tau_A)$. A CPTP channel $\mathcal{E}_B$ is a \emph{sneaky ansatz substitution} of $\mathcal{E}_A$ if there exists a weak observable subfamily $\mathcal{O}_{\text{weak}} \subsetneq \mathcal{O}_A$ such that:
\begin{enumerate}
\item[(S1)] $\mathcal{E}_B$ satisfies the classification-agreement condition: for every input $x$ in the verifier's accuracy test set,
$\arg\max f_B(x) = \arg\max f_A(x)$, where $f_A, f_B$ are the classifier outputs.
\item[(S2)] $\mathcal{E}_B$ satisfies the weak observable contract: $\Delta_{\mathcal{O}_{\text{weak}}, \rho}(\mathcal{E}_A, \mathcal{E}_B) \leq \varepsilon_A$ for all reference inputs $\rho$.
\item[(S3)] $\mathcal{E}_B$ has nontrivial separation from $\mathcal{E}_A$ as CPTP maps: $\| \mathcal{E}_A - \mathcal{E}_B \|_\diamond \geq \delta$ for some separation parameter $\delta > 0$.
\end{enumerate}
\end{definition}

Conditions (S1) and (S2) capture what a weak verifier cannot detect, while (S3) captures the channel-level difference that the framework aims to expose. The adversary's goal is to construct $\mathcal{E}_B$ satisfying (S1)--(S3) with maximum $\delta$.

\paragraph{Specialization to QML model classes.}
The general definition above applies to any QML channel. For each of the three target model classes, the substitution takes a specific operational form. In each case, the agreement conditions (S1) and (S2) hold at the level of classification decisions and on a weak observable subfamily, while the channel-level difference (S3) manifests in the full quantum output state and is exposed by the informationally complete observable family:

\begin{itemize}
\item \textbf{VQC} (Equation~\eqref{eq:bg-vqc}): $\mathcal{E}_B$ corresponds to a different parameter vector $\theta_B \neq \theta_A$ such that the trained ansatz $W(\theta_B)$ produces (S1) the same classification decisions $\arg\max f_B(x) = \arg\max f_A(x)$ on the verification test set, and (S2) matching expectation values on the weak observable subfamily $\mathcal{O}_{\text{weak}} \subsetneq \mathcal{O}_A$. However, the output quantum state $W(\theta_B) \ket{\phi(x)}\!\bra{\phi(x)} W(\theta_B)^\dagger$ differs from $W(\theta_A) \ket{\phi(x)}\!\bra{\phi(x)} W(\theta_A)^\dagger$ in its entanglement structure or confidence distribution, which (S3) is detectable only through the full observable family $\mathcal{O}_A$.

\item \textbf{QSVM} (Equation~\eqref{eq:bg-kernel}): $\mathcal{E}_B$ corresponds to a different encoding circuit $U_{\phi_B}$ such that (S1) the resulting kernel matrix $K_B(x_i, x_j)$ produces the same SVM classification decisions on the verification test set as $K_A(x_i, x_j)$, and (S2) agrees with $K_A$ on diagonal entries and weak observable measurements used during basic kernel sanity checks. However, the encoded quantum states $\ket{\phi_B(x)}$ differ from $\ket{\phi_A(x)}$ in observable structure beyond the weak subset, producing (S3) different off-diagonal kernel values or different multi-qubit Pauli expectations that are exposed by the full observable family $\mathcal{O}_A$.

\item \textbf{QNN} (Equation~\eqref{eq:bg-qnn}): $\mathcal{E}_B$ corresponds to a different layer configuration $\{\theta_\ell^B\}$ such that (S1) the final-layer measurement of a designated observable yields the same classification decisions on the verification test set as under $\{\theta_\ell^A\}$, and (S2) weak observable expectations match within tolerance. However, the intermediate-layer states $(\mathcal{L}_{\theta_\ell^B} \circ \cdots \circ \mathcal{L}_{\theta_1^B} \circ \mathcal{E}_\phi)(\rho)$ differ from those under $\{\theta_\ell^A\}$, producing (S3) measurable differences in the multi-qubit Pauli family that are caught by the informationally complete observable contract.
\end{itemize}

In all three cases, the central detection question reduces to whether the observable family $\mathcal{O}_A$ is sufficiently rich to expose the channel-level difference that remains hidden at the classification-decision level.

\subsection{Calibration Drift Event: Monitoring Mode}
\label{sec:fw-drift}

In the absence of adversarial intent, the executed channel may still deviate from the declared specification due to hardware variability (Section~\ref{sec:threat-drift}). We formalize this as a drift event.

\begin{definition}[Calibration Drift Event]
\label{def:drift}
A \emph{calibration drift event} at level $d_{\text{drift}}$ is a CPTP channel $\mathcal{E}_B$ such that:
\begin{enumerate}
\item[(D1)] $\mathcal{E}_B$ arises from physical mechanisms (gate fidelity changes, coherence-time fluctuations, readout-noise drift) acting on $\mathcal{E}_A$, modeled as $\mathcal{E}_B = \mathcal{D} \circ \mathcal{E}_A$ for some perturbation $\mathcal{D}$.
\item[(D2)] $\Delta_{\mathcal{O}_A, \rho}(\mathcal{E}_A, \mathcal{E}_B) \leq d_{\text{drift}}$ on the full observable family $\mathcal{O}_A$ and all reference inputs $\rho$.
\end{enumerate}
The drift event is \emph{within-tolerance} if $d_{\text{drift}} \leq \varepsilon_A$; otherwise it is \emph{beyond-tolerance}.
\end{definition}

The distinction between Definitions~\ref{def:sneaky} and~\ref{def:drift} lies in the structural form of the deviation: a sneaky substitution targets the verifier's weak observable subfamily but may differ unboundedly elsewhere, while a drift event is uniform across the observable family. Both, however, are captured by the same contract \eqref{eq:bg-contract}; the framework distinguishes them only operationally by the magnitude of the deviation relative to $\varepsilon_A$.

\subsection{Detection Theorem}
\label{sec:fw-detection}

Theorem~\ref{thm:detection} establishes that the adversary defined in Section~\ref{sec:threat-adv} cannot evade detection when the observable family is informationally complete. This is the central detection guarantee of the framework: it transforms the abstract threat model into a concrete operational claim, and provides the mathematical guarantee on which Algorithm~\ref{alg:verification} of Section~\ref{sec:fw-algorithm} rests. When the algorithm halts, an adversarial substitution is genuinely present.

We state and prove the theorem in full. The structure of the proof (variational characterization of the diamond norm, reduction via informational completeness, triangle inequality) is the standard route through which observable-contract guarantees of this form are established; a closely related general-CPTP statement appears in concurrent work \cite{yeniaras2026qcivet}. The QML-specific content of our theorem is the specialization in Step~4 to VQC, QSVM, and QNN pipelines, together with a tighter frame-bound constant for the single-qubit Pauli family ($C = \sqrt{3}$ via Cauchy--Schwarz on the Bloch decomposition, Step~2 of the proof; this sharpens the value $C = 2\sqrt{2}$ in \cite{yeniaras2026qcivet} by a factor of $\sqrt{2/3}$), and the operational use of the bound in the dual-mode setting (Theorem~\ref{thm:budget} on the precomputed-reference shot budget, Corollary~\ref{cor:budget-sampled} on the sampled-reference variant, Corollary~\ref{cor:drift} on the drift-decomposition, and the operational discussion in the Remark following Theorem~\ref{thm:budget}), which have no direct analog in the general framework.

\begin{theorem}[Detection of Sneaky Ansatz Substitution]
\label{thm:detection}
Let $\mathcal{E}_A$ be a declared QML channel of any of the three model classes (VQC, QSVM, QNN), and let $\mathcal{O}_A$ be an informationally complete observable family on the output Hilbert space $\mathcal{H}_{\text{out}}$, understood as completeness with respect to the substitution class considered by the verifier (strict informational completeness on $\mathcal{H}_{\text{out}}$ is one sufficient condition; for local-unitary substitutions on multi-qubit pipelines, the weaker local-completeness notion of Proposition~\ref{prop:local-ic} suffices, as discussed after the proof). Let $C(\mathcal{O}_A) > 0$ denote the frame-bound constant of the observable family (for the single-qubit Pauli family, $C = \sqrt{3}$, as derived in Step~2 of the proof; the underlying observable basis structure is set up in Section~\ref{sec:bg-observable}, and tabulated values for extended Pauli families on two qubits are reported in Appendix~\ref{app:scope}). Then for any sneaky ansatz substitution $\mathcal{E}_B$ of $\mathcal{E}_A$ (Definition~\ref{def:sneaky}) with separation $\delta = \| \mathcal{E}_A - \mathcal{E}_B \|_\diamond$, the contract deviation satisfies
\begin{equation}
\Delta_{\mathcal{O}_A, \rho}(\mathcal{E}_A, \mathcal{E}_B) \;\geq\; \frac{\delta}{C(\mathcal{O}_A)} \quad \text{for some reference input } \rho.
\label{eq:detection-bound}
\end{equation}
In particular, any sneaky substitution with separation $\delta > C(\mathcal{O}_A) \cdot \varepsilon_A$ violates the contract on at least one observable in $\mathcal{O}_A$.
\end{theorem}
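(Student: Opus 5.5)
The plan follows the route the paper itself advertises: a variational characterization of the diamond norm, then a reduction through informational completeness, then a Hölder-type comparison between trace norm and Pauli expectations.

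\emph{Step 1 (variational form).} Set $\Phi = \mathcal{E}_A - \mathcal{E}_B$. By definition, $\delta = \|\Phi\|_\diamond = \sup_{\rho_{RS}} \|(\mathrm{id}_R \otimes \Phi)(\rho_{RS})\|_1$, and the supremum is attained on a pure state. To keep the reference input on $\mathcal{H}_{\text{in}}$ alone (the contract quantifies over $\rho \in \mathcal{S}(\mathcal{H})$ with no ancilla), I would work with the induced trace norm $\sup_\rho \|\Phi(\rho)\|_1$. I would relate it to $\delta$ either by assuming the substitution class is such that the two coincide, as for unitary or local-unitary substitutions composed with the declared encoder, or by absorbing a reference system into $\rho_{\text{ref}}$, as the verifier can prepare entangled reference inputs. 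I expect this to be the main obstacle. The two norms can differ by up to a factor $\dim\mathcal{H}_{\text{in}}$, so the step either needs an explicit hypothesis on the substitution class or must accept ancilla-assisted references. I would first try the ancilla route, since the fingerprint definition allows an arbitrary reference state, and flag the constant if it is not free. Let $\rho^\star$ be a maximizer and $\Gamma = \Phi(\rho^\star)$. Then $\Gamma$ is Hermitian and traceless (trace preservation), with $\|\Gamma\|_1 \ge \delta$.

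\emph{Step 2 (frame bound, single qubit).} Expand $\Gamma = \tfrac12(g_x X + g_y Y + g_z Z)$ with $g_P = \operatorname{Tr}(P\Gamma)$. The eigenvalues are $\pm\tfrac12\|g\|_2$, so $\|\Gamma\|_1 = \|g\|_2$. Cauchy--Schwarz gives $\|g\|_2 \le \sqrt{3}\,\|g\|_\infty = \sqrt{3}\,\max_{P}|\operatorname{Tr}(P\Gamma)|$. Combining,
\[
\delta \le \|\Gamma\|_1 \le \sqrt{3}\,\Delta_{\mathcal{O}_A,\rho^\star}(\mathcal{E}_A,\mathcal{E}_B),
\]
which is \eqref{eq:detection-bound} with $C=\sqrt3$. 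Equality at $g=(1,1,1)/\sqrt3$ shows the constant is tight.

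\emph{Step 3 (general family).} For an informationally complete $\mathcal{O}_A=\{O_1,\dots,O_k\}$, the map $T:\Gamma\mapsto(\operatorname{Tr}(O_i\Gamma))_i$ is injective on traceless Hermitian operators, which is a finite-dimensional space. I would define $C(\mathcal{O}_A)=\sup_{\Gamma\ne0}\|\Gamma\|_1/\|T\Gamma\|_\infty$. This is finite by injectivity and compactness of the unit sphere, and it yields the same inequality chain. In the local or restricted-class case I would replace injectivity on all traceless operators with injectivity on the span of the differences $\Phi(\rho)$ realizable in the substitution class, as in Proposition~\ref{prop:local-ic}. For the local Pauli family I would use a block-additive argument qubit by qubit, citing the tabulated constants.

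\emph{Step 4 (QML specialization and conclusion).} Step 3 uses only (S3), so it applies verbatim to the VQC, QSVM and QNN channels of Section~\ref{sec:bg-qml}; (S1) and (S2) play no role beyond restricting the adversary. Finally, if $\delta > C\varepsilon_A$, then $\Delta_{\mathcal{O}_A,\rho^\star} \ge \delta/C > \varepsilon_A$, so some $O\in\mathcal{O}_A$ violates \eqref{eq:bg-contract}. By (S2) that observable must lie in $\mathcal{O}_A\setminus\mathcal{O}_{\text{weak}}$.
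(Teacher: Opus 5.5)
Your Step~2 is the paper's Cauchy--Schwarz computation read on the dual side. The paper expands the witness $M^*$ and uses $\|c\|_1\le\sqrt3\|c\|_2$ on its Pauli coefficients. You expand the output difference and use $\|\Gamma\|_1=\|g\|_2\le\sqrt3\|g\|_\infty$. Your norm-ratio definition of $C$ is the cleaner, dual form of the paper's constant.

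The genuine gap is Step~1, which you flag but do not close. Your chain needs an unassisted $\rho^\star\in\mathcal{S}(\mathcal{H}_{\text{in}})$ with $\|\Phi(\rho^\star)\|_1\ge\delta$, and that fails already for qubits. For $\mathcal{E}_A=\mathrm{id}$ and $\mathcal{E}_B$ the completely depolarizing channel, $\delta=3/2$ (maximally entangled input) while $\max_\rho\|\Phi(\rho)\|_1=1$. Route (a) adds a hypothesis the theorem does not have. Route (b) changes what the contract quantifies over (reference states on $\mathcal{H}_{\text{in}}$ only). It also breaks your Step~2, since $(\Phi\otimes\mathrm{id})(\rho^\star)$ lives on $\mathcal{H}_{\text{out}}\otimes\mathcal{H}_{\text{aux}}$, where $\{X,Y,Z\}\otimes I$ is not informationally complete. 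Do not expect the paper to supply this step. It tensors $\mathcal{O}_A$ with an auxiliary basis and asserts the constant ``extends correspondingly''. It then drops the $\otimes\,\mathrm{id}$ in its Step~3, reading joint-system quantities as $\Delta_{\mathcal{O}_A,\rho^*}$.

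The missing idea is that the theorem only asks for \emph{some} $\rho$, and $\max_{\rho}\Delta_{\mathcal{O}_A,\rho}=\max_{P}\|\Phi^*(P)\|_\infty$, with $\Phi^*$ the Heisenberg-picture adjoint. So keep the ancilla on the observable side. Write $M^*=\sum_{P\in\{I,X,Y,Z\}}P\otimes M_P$ with $M_P=\tfrac12\operatorname{Tr}_{\text{out}}[(P\otimes I)M^*]$. Since $\Phi^*(I)=0$ (both channels are trace preserving), $\delta\le\|\sum_{P\ne I}\Phi^*(P)\otimes M_P\|_\infty$. Now apply the operator Cauchy--Schwarz inequality $\|\sum_iA_iB_i\|_\infty\le\|\sum_iA_iA_i^*\|_\infty^{1/2}\|\sum_iB_i^*B_i\|_\infty^{1/2}$. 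Together with $\sum_{P\ne I}M_P^2\le\sum_{P}M_P^2=\tfrac12\operatorname{Tr}_{\text{out}}(M^{*2})\le I$, it gives $\delta\le\sqrt3\max_P\|\Phi^*(P)\|_\infty$. This is the operator-valued version of your $\|g\|_2\le\sqrt3\|g\|_\infty$. It proves the single-qubit statement with no assumption on the substitution class. For larger outputs the same route applies, but $C$ must be the best constant in $\|\Phi\|_\diamond\le C\max_{O}\|\Phi^*(O)\|_\infty$, which is at least your ratio.

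Finally, drop the Step~3 plan for the local family. If $\mathcal{E}_A$ can output $\ket{\beta}=(\ket{00}+\ket{11})/\sqrt2$, the realizable difference $\ket{\beta}\!\bra{\beta}-Z_1\ket{\beta}\!\bra{\beta}Z_1=\tfrac12(X_1X_2-Y_1Y_2)$ is annihilated by all six local Paulis. So your restricted injectivity fails. Nor can any argument give $\sqrt3$ there. For unitary $\mathcal{E}_A$ and $U=R_Z(\theta)^{\otimes2}$, one has $\delta\ge2\sin\theta$ but $\max_\rho\Delta_{\mathcal{O}_A,\rho}=2\sin(\theta/2)$. The ratio $2\cos(\theta/2)$ exceeds $\sqrt3$ for $\theta<\pi/3$. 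The paper's block-additive bound covers only witnesses already in the local span, which the diamond witness need not be.
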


\begin{proof}
The argument has four steps: a variational characterization of the diamond norm, an informational-completeness reduction, a triangle-inequality bound, and a QML specialization.

\textbf{Step 1: Reformulate diamond norm via observable expectation.}
By the standard variational characterization of the diamond norm \cite[Theorem 3.51]{watrous2018theory}, which combines the dual formulation of the trace norm \cite[Section~1.1.3]{watrous2018theory} with the definition of the diamond norm, there exists a state $\rho^* \in \mathcal{S}(\mathcal{H}_{\text{in}} \otimes \mathcal{H}_{\text{aux}})$ and a Hermitian operator $M^*$ with $\| M^* \| \leq 1$ on $\mathcal{H}_{\text{out}} \otimes \mathcal{H}_{\text{aux}}$ such that
\[
\| \mathcal{E}_A - \mathcal{E}_B \|_\diamond
\;=\;
\big| \operatorname{Tr}\!\big( M^* \cdot (\mathcal{E}_A \otimes \mathrm{id})(\rho^*) \big) - \operatorname{Tr}\!\big( M^* \cdot (\mathcal{E}_B \otimes \mathrm{id})(\rho^*) \big) \big|.
\]
Thus $\delta$ is witnessed by a specific (possibly entangled-with-auxiliary) state and observable. By Stinespring's theorem \cite[Section~2.2]{watrous2018theory}, the auxiliary space dimension may be bounded by $\dim(\mathcal{H}_{\text{in}})$ without loss of generality, ensuring compactness of the state space and existence of the supremum.

\textbf{Step 2: Reduce $M^*$ to the declared family $\mathcal{O}_A$ via informational completeness.}
By assumption, $\mathcal{O}_A$ is informationally complete on $\mathcal{H}_{\text{out}}$, so its real-linear span equals the space of Hermitian operators on $\mathcal{H}_{\text{out}}$ \cite{nielsen2010quantum, huang2020predicting}. The tensor product of a basis of $\mathcal{O}_A$ on $\mathcal{H}_{\text{out}}$ with a Hermitian basis on $\mathcal{H}_{\text{aux}}$ forms a basis for Hermitian operators on the joint space $\mathcal{H}_{\text{out}} \otimes \mathcal{H}_{\text{aux}}$; informational completeness is therefore preserved, and the frame-bound constant extends correspondingly. Hence the witness observable $M^*$ decomposes as a finite linear combination
\[
M^* \;=\; \sum_{O \in \mathcal{O}_A^{\otimes}} c_O \, O,
\quad \text{with} \quad
\sum_{O} |c_O| \;\leq\; C(\mathcal{O}_A),
\]
where $\mathcal{O}_A^{\otimes}$ denotes the tensor-product extension of $\mathcal{O}_A$ to the joint space and $C(\mathcal{O}_A)$ is the frame-bound constant of the family. For the single-qubit Pauli family $\{X, Y, Z\}$, we derive the tight constant $C(\mathcal{O}_A) = \sqrt{3}$ as follows. Any traceless Hermitian operator $\sigma$ on $\mathbb{C}^2$ admits the Bloch decomposition
\[
\sigma \;=\; \tfrac{1}{2}\big(c_X X + c_Y Y + c_Z Z\big), \qquad c_P = \operatorname{Tr}(P \sigma) \text{ for } P \in \{X, Y, Z\},
\]
with eigenvalues $\pm\tfrac{1}{2}\sqrt{c_X^2 + c_Y^2 + c_Z^2}$. Normalizing so that the operator norm satisfies $\|\sigma\|_\infty = \tfrac{1}{2}\sqrt{c_X^2 + c_Y^2 + c_Z^2} \leq \tfrac{1}{2}$ (which corresponds to $\|M^*\| \leq 1$ in the joint-system reduction) gives $c_X^2 + c_Y^2 + c_Z^2 \leq 1$. Applying the Cauchy--Schwarz inequality,
\[
\sum_{P \in \{X,Y,Z\}} |c_P| \;\leq\; \sqrt{3} \cdot \sqrt{c_X^2 + c_Y^2 + c_Z^2} \;\leq\; \sqrt{3},
\]
with equality at $c_X = c_Y = c_Z = 1/\sqrt{3}$, i.e., for the witness $\sigma^* = (X + Y + Z)/(2\sqrt{3})$. Hence $C(\mathcal{O}_A) = \sqrt{3}$ is the tight frame-bound constant for the single-qubit Pauli family. This refines the value $C = 2\sqrt{2}$ obtained in concurrent work \cite{yeniaras2026qcivet} via a separate state-level bound combined with a diamond-conversion factor of $2$; the Cauchy--Schwarz argument above bypasses the intermediate state-level step and produces the sharp constant directly, improving the bound by a factor of $\sqrt{2/3} \approx 0.816$ and yielding a corresponding reduction in the prescribed sample budget (Theorem~\ref{thm:budget}).

The same Cauchy--Schwarz argument extends to the $n$-qubit local Pauli family $\mathcal{O}_A^{\text{local}} = \{X_i, Y_i, Z_i : i = 1, \dots, n\}$ used in the hardware experiments of Section~\ref{sec:experiment}. Any witness $M^* \in \mathrm{span}(\mathcal{O}_A^{\text{local}})$ decomposes as a sum of single-qubit blocks acting on disjoint factors, $M^* = \sum_{i=1}^{n} B_i$ with $B_i = \tfrac{1}{2}(c_X^{(i)} X_i + c_Y^{(i)} Y_i + c_Z^{(i)} Z_i)$. Since the $B_i$ act on disjoint qubit factors they admit a common eigenbasis (the tensor product of the per-block eigenbases) on which each block attains its maximal eigenvalue independently; hence the operator norm of their sum equals the sum of the individual operator norms,
\[
\|M^*\|_\infty \;=\; \sum_{i=1}^{n} \|B_i\|_\infty \;=\; \tfrac{1}{2} \sum_{i=1}^{n} r_i,
\]
with $r_i = \sqrt{(c_X^{(i)})^2 + (c_Y^{(i)})^2 + (c_Z^{(i)})^2}$. The constraint $\|M^*\|_\infty \leq 1$ then gives $\sum_i r_i \leq 2$. Applying Cauchy--Schwarz block-wise, $\sum_P |c_P^{(i)}| \leq \sqrt{3} \cdot r_i$, and summing over blocks:
\[
\sum_{P \in \mathcal{O}_A^{\text{local}}} |c_P| \;=\; \sum_{i=1}^{n} \sum_{P \in \{X,Y,Z\}} |c_P^{(i)}| \;\leq\; \sqrt{3} \cdot \sum_{i=1}^{n} r_i \;\leq\; 2\sqrt{3}.
\]
After the standard normalization to $\|M^*\| \leq 1$ in the witness reduction (equivalently $\sum_i r_i \leq 1$), the bound reads $\sum_P |c_P| \leq \sqrt{3}$, so $C(\mathcal{O}_A^{\text{local}}) = \sqrt{3}$ for any number of qubits. The bound is saturated whenever $\sum_i r_i = 1$ (the normalization is tight) and, on every non-trivial block, $c_X^{(i)} = c_Y^{(i)} = c_Z^{(i)}$ (the per-block Cauchy--Schwarz is tight); a canonical saturating witness is $\sigma^* = (X_1 + Y_1 + Z_1)/(2\sqrt{3})$, concentrating on a single qubit. The block-additive structure, non-commuting Paulis within each qubit block, disjoint-factor blocks across qubits, is what keeps the constant invariant under qubit-count scaling, and is the technical reason the two-qubit hardware experiment inherits the same $C = \sqrt{3}$ as the single-qubit derivation.

\textbf{Step 3: Bound the diamond norm in terms of contract deviation.}
Substituting the decomposition of Step~2 into the expression of Step~1 and applying the triangle inequality:
\begin{align*}
\delta &= \big| \operatorname{Tr}(M^* \cdot \mathcal{E}_A(\rho^*)) - \operatorname{Tr}(M^* \cdot \mathcal{E}_B(\rho^*)) \big| \\
&= \Big| \sum_O c_O \big[ \operatorname{Tr}(O \cdot \mathcal{E}_A(\rho^*)) - \operatorname{Tr}(O \cdot \mathcal{E}_B(\rho^*)) \big] \Big| \\
&\leq \sum_O |c_O| \cdot \big| \operatorname{Tr}(O \cdot \mathcal{E}_A(\rho^*)) - \operatorname{Tr}(O \cdot \mathcal{E}_B(\rho^*)) \big| \\
&\leq \sum_O |c_O| \cdot \Delta_{\mathcal{O}_A, \rho^*}(\mathcal{E}_A, \mathcal{E}_B) \\
&\leq C(\mathcal{O}_A) \cdot \Delta_{\mathcal{O}_A, \rho^*}(\mathcal{E}_A, \mathcal{E}_B),
\end{align*}
where the second inequality uses the triangle inequality for finite sums, the third inequality bounds each per-observable deviation by its maximum over $\mathcal{O}_A$ (namely $\Delta_{\mathcal{O}_A, \rho^*}$), and the final inequality applies the frame-bound $\sum_O |c_O| \leq C(\mathcal{O}_A)$ from Step~2. Rearranging gives $\Delta_{\mathcal{O}_A, \rho^*}(\mathcal{E}_A, \mathcal{E}_B) \geq \delta / C(\mathcal{O}_A)$, which is Equation~\eqref{eq:detection-bound} with $\rho = \rho^*$. The bound holds for the specific witness state $\rho^*$ identified in Step~1; the corresponding verification protocol in Section~\ref{sec:fw-algorithm} compensates for the worst-case nature of $\rho^*$ through randomized observable selection across multiple verification rounds.

\textbf{Step 4: Specialization to QML model classes.}
Steps 1--3 establish a general CPTP-channel statement. The following specialization shows how this statement instantiates in three QML deployment scenarios, providing the operational content that distinguishes this work from a generic CPTP-channel result (in particular, the tight frame-bound constant $C = \sqrt{3}$ obtained in Step~2, which is sharper than the value used in concurrent work \cite{yeniaras2026qcivet} by a factor of $\sqrt{2/3}$ and translates into an approximately $27$-fold reduction in the prescribed shot budget at the deployed operational parameters; see the Remark following the proof, and the empirical validation in Section~\ref{sec:exp-sample}). For QML applications, the witness state $\rho^*$ is realized by an input drawn from the verifier's reference distribution (a representative QML input from the deployment dataset), and the witness observable $M^*$ lies in the verifier's Pauli measurement repertoire. For VQC outputs (Equation~\eqref{eq:bg-vqc}), this reduces to single-qubit or multi-qubit Pauli expectations on the final-layer state after the trained ansatz $W(\theta)$. For QSVM (Equation~\eqref{eq:bg-kernel}), this reduces to Pauli expectations on the output qubits of the swap-test or inversion-test circuit used to compute the kernel inner product $|\langle \phi(x_i) | \phi(x_j) \rangle|^2$. For QNN (Equation~\eqref{eq:bg-qnn}), this reduces to multi-qubit Pauli expectations on the final-layer state of the composed layer channels. In all three cases, $\mathcal{O}_A$ is taken to be a Pauli family of appropriate qubit count, with $C(\mathcal{O}_A)$ following the standard frame-bound for $n$-qubit Pauli operators.

The conclusion follows: any sneaky substitution with $\delta > C(\mathcal{O}_A) \cdot \varepsilon_A$ must produce $\Delta_{\mathcal{O}_A, \rho^*} > \varepsilon_A$ on some reference $\rho^*$, hence is detected by the contract check. \qedhere
\end{proof}

\begin{remark}
The bound \eqref{eq:detection-bound} is tight up to the constant $C(\mathcal{O}_A)$. The Cauchy--Schwarz derivation of Step~2 establishes $C = \sqrt{3}$ both for the single-qubit Pauli family and, by the block-additive extension, for the $n$-qubit local Pauli family $\{X_i, Y_i, Z_i : i = 1, \dots, n\}$ used in the hardware validation of Section~\ref{sec:experiment}. The constant is tight in both cases, attained at the witness $\sigma^* = (X+Y+Z)/(2\sqrt{3})$ on a single qubit (the multi-qubit case saturating when the witness concentrates on one qubit block), and is sharper than the corresponding value used in concurrent work \cite{yeniaras2026qcivet} by a factor of $\sqrt{2/3}$. For the \emph{full} 2-qubit Pauli family (all 15 non-identity Pauli strings on 2 qubits), the constant $C$ grows to approximately $3.73$, as derived numerically in Appendix~\ref{app:scope}; for the general $n$-qubit case ($k = 4^n - 1$), a tight closed-form characterization is left to future work and connects to the multi-qubit tomographic completeness theory we identify as a separate research direction in Section~\ref{sec:discussion}.
\end{remark}

Theorem~\ref{thm:detection} presumes that the observable family $\mathcal{O}_A$ is rich enough to expose any sneaky substitution drawn from the adversary's strategy class. The single-qubit Pauli family is informationally complete on its $2 \times 2$ density matrices in the strict sense. The $n$-qubit local family, however, does not span correlation operators ($X_1 X_2$, $Y_1 Z_2$, etc.) and is therefore not strictly informationally complete on the joint Hilbert space. The next proposition records the operational notion of completeness that is in fact required by Theorem~\ref{thm:detection} for the QML threat model considered in this paper.

\begin{proposition}[Local informational completeness w.r.t.\ local-unitary substitutions]
\label{prop:local-ic}
Let $\mathcal{O}_A^{\text{local}} = \{X_i, Y_i, Z_i : i = 1, \dots, n\}$ be the $n$-qubit local Pauli family, and let $\mathcal{S}_{\text{loc}}$ denote the class of substitutions of the form $\mathcal{E}_B(\rho) = U \mathcal{E}_A(\rho) U^\dagger$ for some local unitary $U = U_1 \otimes \cdots \otimes U_n$. For every $\mathcal{E}_B \in \mathcal{S}_{\text{loc}}$ with $U \neq e^{i\phi} I$ (i.e., not a global phase), there exist a product reference state $\rho \in \mathcal{H}_1 \otimes \cdots \otimes \mathcal{H}_n$ and an observable $P \in \mathcal{O}_A^{\text{local}}$ such that
\[
\operatorname{Tr}\!\big(P \cdot \mathcal{E}_B(\rho)\big) \;\neq\; \operatorname{Tr}\!\big(P \cdot \mathcal{E}_A(\rho)\big).
\]
\end{proposition}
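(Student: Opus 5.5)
The plan is to reduce the statement to a single-qubit fact about Bloch rotations, and then use strict informational completeness of $\{X, Y, Z\}$ on $\mathbb{C}^2$ (Section~\ref{sec:bg-observable}).

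\textbf{Step 1: locate a non-trivial factor.} Since $U = U_1 \otimes \cdots \otimes U_n \neq e^{i\phi} I$, at least one factor $U_j$ is not proportional to the identity. If every $U_i$ were a scalar $e^{i\phi_i} I$, then $U$ would be the global phase $e^{i\sum_i \phi_i} I$. Fix such a $j$. Up to phase, $U_j \in SU(2)$ acts on Bloch vectors as a rotation $R_j \in SO(3)$ about some axis $\hat n_j$ by an angle $\theta_j \not\equiv 0 \pmod{2\pi}$. Hence $R_j r = r$ exactly when $r \parallel \hat n_j$.

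\textbf{Step 2: reduce to qubit $j$.} For $P \in \{X_j, Y_j, Z_j\}$ we have $\operatorname{Tr}(P\,\omega) = \operatorname{Tr}(P\,\omega_j)$, where $\omega_j = \operatorname{Tr}_{\neq j}\,\omega$. Conjugation by a local unitary commutes with the partial trace up to the factors $U_i$ on the traced-out qubits, which cancel. This gives
\[
\operatorname{Tr}_{\neq j}\big(U \mathcal{E}_A(\rho) U^\dagger\big) = U_j\, \operatorname{Tr}_{\neq j}\big(\mathcal{E}_A(\rho)\big)\, U_j^\dagger .
\]
Let $r_j(\rho)$ be the Bloch vector of $\operatorname{Tr}_{\neq j}\,\mathcal{E}_A(\rho)$. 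The three local Pauli expectations on qubit $j$ under $\mathcal{E}_B$ are then the components of $R_j r_j(\rho)$, and under $\mathcal{E}_A$ they are the components of $r_j(\rho)$. By informational completeness of $\{X, Y, Z\}$, some $P \in \{X_j, Y_j, Z_j\}$ separates the two channels if and only if $R_j r_j(\rho) \neq r_j(\rho)$. By Step~1, this holds if and only if $r_j(\rho)$ is not parallel to $\hat n_j$; in particular it requires $r_j(\rho) \neq 0$.

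\textbf{Step 3: choose the product reference state.} It remains to exhibit a product $\rho$ whose reduced output Bloch vector on qubit $j$ leaves the line $\mathbb{R}\hat n_j$. I expect this to be the main obstacle, because it depends on $\mathcal{E}_A$ and not on $U$. If $\mathcal{E}_A$ is completely depolarizing, every $r_j(\rho) = 0$ and the proposition fails. More generally it fails whenever all reduced outputs on qubit $j$ lie on the rotation axis.

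I would therefore make explicit the non-degeneracy hypothesis implicit in the QML setting. The natural hypothesis is: the set $\{ r_j(\rho) : \rho \text{ product} \}$ is not contained in a line through the origin. Equivalently, its real span has dimension at least $2$. This holds trivially for the identity channel: take $\rho = \ket{\psi}\!\bra{\psi}$ on qubit $j$ with Bloch vector orthogonal to $\hat n_j$, and tensor with arbitrary states elsewhere. It also holds for a unitary pipeline $\mathcal{E}_A(\rho) = V\rho V^\dagger$ whenever the images of product inputs are not all collapsed onto one axis on qubit $j$.

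Under this hypothesis, pick two product inputs whose reduced Bloch vectors are linearly independent. At most one of them can be parallel to $\hat n_j$, so the other is not. Applying Step~2 to that input gives the required pair $(\rho, P)$.

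I would close with a remark. This hypothesis, applied for each $j$ against all axes, is exactly the condition under which $\mathcal{O}_A^{\text{local}}$ is complete with respect to $\mathcal{S}_{\text{loc}}$ in the sense used by Theorem~\ref{thm:detection}. Without it, the statement should be read as holding for $\mathcal{E}_A$ the encoder--ansatz unitary of the deployed pipeline.
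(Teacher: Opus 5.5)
Your argument is correct, and the obstacle you flag in Step~3 is real. For an arbitrary declared channel $\mathcal{E}_A$, the proposition as stated is false. Your completely depolarizing example works, since $U(I/2^n)U^\dagger = I/2^n$ makes $\mathcal{E}_B = \mathcal{E}_A$ exactly. So does any $\mathcal{E}_A$ read as ending in a computational-basis measurement (dephasing) stage: its outputs are diagonal, so they commute with $U = S^{\otimes n}$, which is the paper's own sneaky template.

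The paper's proof never meets this obstruction because it silently drops $\mathcal{E}_A$. It uses Schur's lemma to obtain a single-qubit $\rho_1$ with $U_1\rho_1U_1^\dagger \neq \rho_1$, and picks $P_1$ by completeness of $\{X,Y,Z\}$. It then evaluates both traces on $\rho_1 \otimes (I/2)^{\otimes(n-1)}$ as if that product state were the channel output. It therefore proves exactly the identity-channel case you call trivial. Equivalently, it proves the reading in which $\rho$ is the state $U$ acts on, not the pipeline input.

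Your fixed-axis argument for the induced $SO(3)$ rotation does the same job as Schur's lemma on one qubit. The ingredient the paper lacks is your partial-trace identity $\operatorname{Tr}_{\neq j}(U\omega U^\dagger) = U_j\operatorname{Tr}_{\neq j}(\omega)U_j^\dagger$. It identifies exactly which references go undetected: those whose reduced output Bloch vector lies on $\mathbb{R}\hat n_j$. It also yields a hypothesis that, imposed for every $j$, is necessary and sufficient for completeness over $\mathcal{S}_{\text{loc}}$.

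Two refinements. First, your hedge for unitary pipelines is unnecessary. Suppose $\mathcal{E}_A(\rho) = V\rho V^\dagger$ on $n$ qubits. Product density operators span the Hermitian operators, conjugation by $V$ is a linear bijection, and $\operatorname{Tr}_{\neq j}$ is surjective. So the reduced outputs on qubit $j$ span the full four-dimensional space of Hermitian operators on $\mathbb{C}^2$. If their Bloch vectors were confined to a line $\mathbb{R}\hat n$, the reduced outputs would be confined to the two-dimensional span of $I$ and $\hat n_x X + \hat n_y Y + \hat n_z Z$. Hence, with $\mathcal{E}_A$ the encoder--ansatz unitary as your closing remark proposes, no extra hypothesis is needed.

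Second, ``it fails whenever all reduced outputs on qubit $j$ lie on the rotation axis'' is right only when $U_j$ is the sole non-scalar factor. In general, the conclusion fails for a given $U$ exactly when this degeneracy occurs on every qubit carrying a non-scalar factor. So in Step~1 you should choose $j$ among the non-scalar factors that satisfy your hypothesis, rather than fixing it beforehand.
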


\begin{proof}
Since $U \neq e^{i\phi} I$, at least one factor $U_i$ is not a scalar multiple of the identity; without loss of generality $U_1 \neq e^{i\phi_1} I$. Suppose for contradiction that $U_1 \rho_1 U_1^\dagger = \rho_1$ for every single-qubit density matrix $\rho_1$. Density matrices span the real-linear space of Hermitian operators on $\mathbb{C}^2$, so $U_1$ commutes with every Hermitian operator on $\mathbb{C}^2$, and hence with every element of $M_2(\mathbb{C})$. The standard representation of $M_2(\mathbb{C})$ on $\mathbb{C}^2$ is irreducible; by Schur's lemma, any operator commuting with this representation is a scalar multiple of the identity, so $U_1 = e^{i\phi_1} I$, a contradiction. Hence there exists a single-qubit state $\rho_1$ with $U_1 \rho_1 U_1^\dagger \neq \rho_1$. By the informational completeness of $\{X, Y, Z\}$ on $\mathbb{C}^2$, there exists $P_1 \in \{X, Y, Z\}$ such that $\operatorname{Tr}(P_1 \cdot U_1 \rho_1 U_1^\dagger) \neq \operatorname{Tr}(P_1 \cdot \rho_1)$. Taking $\rho = \rho_1 \otimes (I/2)^{\otimes (n-1)}$ as the product reference state and $P = P_1 \otimes I^{\otimes(n-1)} \in \mathcal{O}_A^{\text{local}}$ as the observable completes the argument, since the additional factors of $I/2$ on qubits $2, \dots, n$ each evaluate to $\operatorname{Tr}(I/2 \cdot I) = 1$ in the trace, leaving only the discriminating single-qubit contribution. \qedhere
\end{proof}

\noindent
The substitution class $\mathcal{S}_{\text{loc}}$ covers the operational threat model considered in the hardware experiments of Section~\ref{sec:experiment}: the sneaky construction inserts a per-qubit $S$-gate (a local unitary) before measurement, which falls in $\mathcal{S}_{\text{loc}}$ by definition. For substitutions outside $\mathcal{S}_{\text{loc}}$, such as entangling-gate insertions, the local family is no longer sufficient and must be augmented with correlation observables. We discuss this scope explicitly in Section~\ref{sec:limitations} and in Appendix~\ref{app:scope}.

\subsection{Soundness for QML}
\label{sec:fw-soundness}

Theorem~\ref{thm:detection} addresses the adversarial direction: large channel separation forces large observable deviation, ensuring that sneaky substitutions are caught. The complementary direction, capturing benign channel proximity, is provided by the following soundness result. This direction is essential for treating calibration drift as a graceful operational event rather than a detection failure, and underlies the drift corollary stated in Section~\ref{sec:fw-driftcor}.

\begin{theorem}[Soundness for QML]
\label{thm:soundness}
Let $\mathcal{E}_A$ be a declared QML channel of any of the three model classes (VQC, QSVM, QNN), and let $\mathcal{E}_B$ be any CPTP channel with $\| \mathcal{E}_A - \mathcal{E}_B \|_\diamond \leq d$. Then for every observable $O \in \mathcal{O}_A$ and every reference input $\rho$,
\begin{equation}
\big| \operatorname{Tr}(O \mathcal{E}_A(\rho)) - \operatorname{Tr}(O \mathcal{E}_B(\rho)) \big| \;\leq\; d \cdot \| O \|.
\label{eq:soundness-bound}
\end{equation}
\end{theorem}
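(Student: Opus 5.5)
The plan is to derive \eqref{eq:soundness-bound} from H\"older's inequality applied to the difference of output states, together with the elementary fact that the diamond norm dominates the induced trace norm.

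\textbf{Step 1 (reduce to a trace-norm bound).} Fix $O \in \mathcal{O}_A$ and a reference input $\rho \in \mathcal{S}(\mathcal{H}_{\text{in}})$. By linearity of the trace,
\[
\operatorname{Tr}(O \mathcal{E}_A(\rho)) - \operatorname{Tr}(O \mathcal{E}_B(\rho)) = \operatorname{Tr}\!\big(O \, (\mathcal{E}_A - \mathcal{E}_B)(\rho)\big).
\]
The operator $(\mathcal{E}_A - \mathcal{E}_B)(\rho)$ is Hermitian because both channels are Hermiticity-preserving. H\"older's inequality for Schatten norms (the duality between operator norm and trace norm, \cite[Section~1.1.3]{watrous2018theory}) then gives
\[
\big|\operatorname{Tr}(O X)\big| \le \|O\|_\infty \, \|X\|_1 \quad\text{with } X = (\mathcal{E}_A - \mathcal{E}_B)(\rho).
\]

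\textbf{Step 2 (compare with the diamond norm).} Next I would show $\|(\mathcal{E}_A - \mathcal{E}_B)(\rho)\|_1 \le \|\mathcal{E}_A - \mathcal{E}_B\|_\diamond$. The diamond norm is a supremum of $\|((\mathcal{E}_A - \mathcal{E}_B)\otimes \mathrm{id}_{\text{aux}})(\tau)\|_1$ over states $\tau$ on $\mathcal{H}_{\text{in}}\otimes\mathcal{H}_{\text{aux}}$. Choose $\tau = \rho \otimes \ket{0}\!\bra{0}$, which is admissible. Then
\[
((\mathcal{E}_A - \mathcal{E}_B)\otimes\mathrm{id})(\tau) = X \otimes \ket{0}\!\bra{0},
\]
and multiplicativity of the trace norm under tensor products gives $\|X \otimes \ket{0}\!\bra{0}\|_1 = \|X\|_1$. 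So $\|X\|_1 \le d$. An equivalent route is to note that the induced $1\to1$ norm is at most the diamond norm, and $\|\rho\|_1 = 1$.

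\textbf{Step 3 (combine).} Chaining the two inequalities yields $|\operatorname{Tr}(O\mathcal{E}_A(\rho)) - \operatorname{Tr}(O\mathcal{E}_B(\rho))| \le d\,\|O\|$. Since $O$ and $\rho$ were arbitrary, the bound holds uniformly. The QML model class (VQC, QSVM, QNN) plays no role beyond fixing $\mathcal{E}_A$ as a CPTP map through Equations~\eqref{eq:bg-vqc}--\eqref{eq:bg-qnn}, so no case split is needed. I would add one remark about Pauli observables: there $\|O\|=1$, so the per-observable deviation is at most $d$, giving $\Delta_{\mathcal{O}_A,\rho} \le d$. This is the form needed for Corollary~\ref{cor:drift}.

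\textbf{Main obstacle.} The argument is essentially routine. The only point needing care is the norm convention. Some references normalize the diamond norm, or use the half-trace distance, with a factor of $1/2$, which would turn the bound into $2d\|O\|$ under the other convention. The same factor-of-two ambiguity is what separates $\sqrt{3}$ from $2\sqrt{2}$ in Step~2 of Theorem~\ref{thm:detection}. I would therefore state explicitly that $\|\cdot\|_\diamond$ is the unnormalized completely bounded trace norm of \cite{watrous2018theory}, and check that this matches the convention used in Theorem~\ref{thm:detection}.
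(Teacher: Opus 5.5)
Your proposal is correct and matches the paper's proof essentially step for step: H\"older's inequality bounds the deviation by $\|O\|\,\|(\mathcal{E}_A-\mathcal{E}_B)(\rho)\|_1$, and this trace norm is at most the diamond norm (you make this explicit via the admissible input $\rho\otimes\ket{0}\!\bra{0}$, whereas the paper simply cites the definition). Your closing aside is inaccurate but does not affect the argument: a factor-of-two normalization alone cannot be what separates $\sqrt{3}$ from $2\sqrt{2}$, since their ratio is $\sqrt{8/3}$, not $2$.
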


\begin{proof}
Let $\Delta_\rho := \mathcal{E}_A(\rho) - \mathcal{E}_B(\rho)$. By the trace-norm/operator-norm duality \cite[Section 1.1.3]{watrous2018theory},
\[
\big| \operatorname{Tr}(O \Delta_\rho) \big| \;\leq\; \| O \| \cdot \| \Delta_\rho \|_1,
\]
which is the operator H\"older inequality on $1$- and $\infty$-Schatten norms. By the definition of the diamond norm as the maximum over all input states (and auxiliary systems) of the trace-norm distance between channel outputs \cite[Section 3.3]{watrous2018theory},
\[
\| \Delta_\rho \|_1 \;=\; \| (\mathcal{E}_A - \mathcal{E}_B)(\rho) \|_1 \;\leq\; \| \mathcal{E}_A - \mathcal{E}_B \|_\diamond \;\leq\; d.
\]
Combining gives $|\operatorname{Tr}(O \Delta_\rho)| \leq d \cdot \|O\|$, which is Equation~\eqref{eq:soundness-bound}.
\end{proof}

\paragraph{QML specialization.}
For VQC, QSVM, and QNN deployments, $\mathcal{E}_B = \mathcal{D} \circ \mathcal{E}_A$ typically arises from benign hardware drift $\mathcal{D}$ acting on the declared channel. Theorem~\ref{thm:soundness} then bounds the maximum observable deviation across the verifier's measurement repertoire by $d \cdot \max_{O \in \mathcal{O}_A} \| O \|$, where $d = \| \mathcal{D} - \mathrm{id} \|_\diamond$ is the diamond-norm magnitude of the drift. For Pauli observables ($\| O \| = 1$), this simplifies to a clean operational bound: drift of diamond-norm magnitude $d$ produces observable deviations no larger than $d$. The result guarantees that benign drift cannot masquerade as a large adversarial deviation, distinguishing the two operational modes of the framework.

\subsection{Compositionality for QML}
\label{sec:fw-composition}

Many practical QML pipelines decompose into multiple stages: a VQC pipeline composes encoding $U_\phi$ with the trained ansatz $W(\theta)$; a QSVM pipeline composes encoding with a swap-test or inversion-test; a QNN pipeline composes $L$ layer channels in sequence. When the framework is applied stage-by-stage, the per-stage tolerances must aggregate correctly to yield a total tolerance for the pipeline as a whole. The following theorem records this aggregation property.

\begin{theorem}[Compositionality for QML]
\label{thm:composition}
Let a QML pipeline consist of $L$ sequential stages with declared channels $\mathcal{E}_A^{(1)}, \ldots, \mathcal{E}_A^{(L)}$ and corresponding per-stage observable contracts at tolerances $\varepsilon_1, \ldots, \varepsilon_L$. If each candidate channel $\mathcal{E}_B^{(\ell)}$ satisfies the contract at stage $\ell$ within tolerance $\varepsilon_\ell$, then the composed pipeline channel $\mathcal{E}_B = \mathcal{E}_B^{(L)} \circ \cdots \circ \mathcal{E}_B^{(1)}$ satisfies the end-to-end contract at tolerance
\begin{equation}
\varepsilon_{\text{total}} \;\leq\; \sum_{\ell=1}^{L} \varepsilon_\ell.
\label{eq:composition-bound}
\end{equation}
\end{theorem}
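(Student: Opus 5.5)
The plan is a hybrid (telescoping) argument over the $L$ stages. For $j = 0, \ldots, L$ I would define the hybrid channels
\[
\mathcal{H}_j \;=\; \mathcal{E}_B^{(L)} \circ \cdots \circ \mathcal{E}_B^{(j+1)} \circ \mathcal{E}_A^{(j)} \circ \cdots \circ \mathcal{E}_A^{(1)},
\]
so that $\mathcal{H}_L = \mathcal{E}_A$ is the declared pipeline and $\mathcal{H}_0 = \mathcal{E}_B$ is the executed pipeline. For any end-to-end observable $O$ and input $\rho$ this gives
\[
\operatorname{Tr}(O\,\mathcal{E}_B(\rho)) - \operatorname{Tr}(O\,\mathcal{E}_A(\rho)) \;=\; \sum_{\ell=1}^{L} \Big[\operatorname{Tr}(O\,\mathcal{H}_{\ell-1}(\rho)) - \operatorname{Tr}(O\,\mathcal{H}_{\ell}(\rho))\Big].
\]
By the triangle inequality it then suffices to bound each bracket by $\varepsilon_\ell$. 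Summing the per-stage bounds yields \eqref{eq:composition-bound}.

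Write $\mathcal{A}_{<\ell} = \mathcal{E}_A^{(\ell-1)} \circ \cdots \circ \mathcal{E}_A^{(1)}$ and $\mathcal{B}_{>\ell} = \mathcal{E}_B^{(L)} \circ \cdots \circ \mathcal{E}_B^{(\ell+1)}$. The $\ell$-th bracket then equals $\operatorname{Tr}\big(O\,\mathcal{B}_{>\ell}\big((\mathcal{E}_B^{(\ell)} - \mathcal{E}_A^{(\ell)})(\sigma_\ell)\big)\big)$, where $\sigma_\ell = \mathcal{A}_{<\ell}(\rho)$. Two observations handle this term.

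\emph{First,} $\sigma_\ell$ is a legitimate density operator, being the output of a CPTP prefix. The stage-$\ell$ contract is quantified over all reference inputs, as in \eqref{eq:bg-contract}, so it applies to $\sigma_\ell$ directly. The upstream stages therefore cause no difficulty.

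\emph{Second,} the downstream stages are moved onto the observable via the Heisenberg picture: $\operatorname{Tr}(O\,\mathcal{B}_{>\ell}(X)) = \operatorname{Tr}(\mathcal{B}_{>\ell}^{*}(O)\,X)$. The adjoint $\mathcal{B}_{>\ell}^{*}$ of a CPTP map is unital and completely positive, hence a contraction in operator norm, so $\|\mathcal{B}_{>\ell}^{*}(O)\| \le \|O\|$.

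The main obstacle lies in this second step. The stage-$\ell$ contract only controls deviations on the declared family $\mathcal{O}^{(\ell)}$, whereas $\mathcal{B}_{>\ell}^{*}(O)$ is an arbitrary Hermitian operator of norm at most $\|O\|$. It need not lie in $\mathcal{O}^{(\ell)}$, and if it is merely expanded in that family, the Step~2 frame constant of Theorem~\ref{thm:detection} would multiply $\varepsilon_\ell$. I would resolve this by fixing the interpretation of "satisfies the contract at stage $\ell$ within tolerance $\varepsilon_\ell$" as a bound on the stage's operational distance, that is, on deviations for every observable of operator norm at most one on that stage's output.

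For Pauli contracts with $\|O\| = 1$, this is exactly the quantity that Theorem~\ref{thm:soundness} controls when $\varepsilon_\ell$ is read as a diamond-norm budget $d_\ell$. Under this reading the $\ell$-th bracket is at most $\|\mathcal{B}_{>\ell}^{*}(O)\| \cdot \varepsilon_\ell \le \varepsilon_\ell$ for every normalized end-to-end observable, and the telescoping sum closes.

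I would also add a remark on the alternative reading. If one insists on the purely family-restricted reading, the same telescoping gives the weaker bound $\sum_\ell C(\mathcal{O}^{(\ell)})\,\varepsilon_\ell$. The clean bound \eqref{eq:composition-bound} is recovered whenever the per-stage families are chosen as pullbacks $\mathcal{B}_{>\ell}^{*}(\mathcal{O}_A)$ of the end-to-end family. This matches how the framework is applied stage by stage.
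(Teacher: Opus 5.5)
Your proposal is correct and follows the paper's argument. Both use the same hybrid telescoping over stages and read the per-stage tolerance as a norm bound rather than a family-restricted one. The paper bounds $\|\mathcal{E}_A-\mathcal{E}_B\|_\diamond \le \sum_\ell \varepsilon_\ell$ via subadditivity and submultiplicativity of the diamond norm and then invokes Theorem~\ref{thm:soundness}. You instead pass the downstream stages onto the observable in the Heisenberg picture, which needs only the weaker no-ancilla trace-distance reading. Both you and the paper note the extra factor $C(\mathcal{O}^{(\ell)})$ under the family-restricted reading.

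One caveat concerns your closing remark, which is not needed for the proof. The pullbacks $\mathcal{B}_{>\ell}^{*}(\mathcal{O}_A)$ depend on the unknown candidate stages, so they cannot be declared in a contract. To make that remark usable, pull back through the declared maps $(\mathcal{E}_A^{(L)}\circ\cdots\circ\mathcal{E}_A^{(\ell+1)})^{*}$ and reverse the hybrid order, with declared stages downstream and candidate stages upstream. This still works because the candidate prefixes output density operators.
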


\begin{proof}
Let $\mathcal{E}_A = \mathcal{E}_A^{(L)} \circ \cdots \circ \mathcal{E}_A^{(1)}$ denote the composed declared pipeline. Insert telescoping terms:
\[
\mathcal{E}_A - \mathcal{E}_B \;=\; \sum_{\ell=1}^{L} \mathcal{C}_\ell, \quad \text{where} \quad
\mathcal{C}_\ell \;=\; \mathcal{E}_B^{(L)} \circ \cdots \circ \mathcal{E}_B^{(\ell+1)} \circ \big(\mathcal{E}_A^{(\ell)} - \mathcal{E}_B^{(\ell)}\big) \circ \mathcal{E}_A^{(\ell-1)} \circ \cdots \circ \mathcal{E}_A^{(1)}.
\]
The diamond norm is subadditive under sum and submultiplicative under composition with CPTP channels (the latter, with constant $1$, is a standard property of the diamond norm \cite[Section 3.3]{watrous2018theory}), so
\[
\| \mathcal{E}_A - \mathcal{E}_B \|_\diamond \;\leq\; \sum_{\ell=1}^L \| \mathcal{C}_\ell \|_\diamond \;\leq\; \sum_{\ell=1}^L \| \mathcal{E}_A^{(\ell)} - \mathcal{E}_B^{(\ell)} \|_\diamond \;\leq\; \sum_{\ell=1}^L \varepsilon_\ell,
\]
where the last step uses the per-stage diamond-distance bound $\| \mathcal{E}_A^{(\ell)} - \mathcal{E}_B^{(\ell)} \|_\diamond \leq \varepsilon_\ell$ as the form in which the per-stage contract is given (this is the diamond-norm reading of the per-stage tolerance; for the alternative observable-contract reading, the corresponding per-stage bound carries a factor of $C(\mathcal{O}_A^{(\ell)})$ via Theorem~\ref{thm:soundness}). Applying Theorem~\ref{thm:soundness} once more to the composed channel yields the end-to-end observable bound stated in Equation~\eqref{eq:composition-bound}.
\end{proof}

\paragraph{QML specialization.}
The compositionality bound is operationally meaningful in two QML contexts. First, for QNN pipelines of depth $L$, drift accumulating across layers aggregates linearly: if each layer contributes drift bounded by $\varepsilon_\ell$, the end-to-end pipeline drift is bounded by $\sum_\ell \varepsilon_\ell$. This provides a quantitative criterion for when a deep QNN requires recalibration: the operator chooses $\varepsilon_{\text{total}}$ corresponding to the deepest acceptable accumulated drift. Second, for QSVM pipelines, the encoding and kernel-evaluation stages may carry separate tolerances, with the kernel-matrix entries then satisfying a contract at the sum of these tolerances. The verification protocol of Section~\ref{sec:fw-algorithm} can be applied at the end-to-end level (using $\varepsilon_{\text{total}}$) or at the per-stage level when intermediate measurement access is available.

\subsection{Sample Complexity Bound}
\label{sec:fw-complexity}

Theorem~\ref{thm:detection} establishes the deterministic detection guarantee assuming exact expectation values are available. In practice, the verifier obtains expectations via finite-shot sampling on quantum hardware, introducing statistical estimation error. Before stating the sample-complexity bound, we make precise what it means for a check to be operationally useful under shot noise rather than merely passing a flagging threshold.

\begin{definition}[Informative detection]
\label{def:informative}
Let $\mathrm{TPR}(N)$ and $\mathrm{FPR}(N)$ denote the empirical true-positive and false-positive rates of the verifier at total shot budget $N$, evaluated on a fixed sneaky construction with separation $\delta$ and on the honest channel respectively, both at tolerance $\varepsilon_A$. Given target confidence $1 - \eta$, the verifier is said to perform \emph{informative detection} at budget $N$ if both
\[
\mathrm{TPR}(N) \;\geq\; 1 - \eta
\qquad \text{and} \qquad
\mathrm{FPR}(N) \;\leq\; \eta
\]
hold simultaneously. A check that satisfies $\mathrm{TPR}(N) \geq 1 - \eta$ but $\mathrm{FPR}(N) > \eta$ is \emph{uninformative}: it crosses the nominal flagging threshold for sneaky channels but does so at a rate that is statistically indistinguishable (or worse) from the false-alarm rate on honest channels, so the flagging decision is driven by shot noise rather than by channel deviation. The two-sided condition is what makes the flagging decision diagnostic of the channel rather than of the sampling process.
\end{definition}

\noindent
The following theorem bounds the measurement budget required for informative detection in the sense of Definition~\ref{def:informative}.

\begin{theorem}[Measurement Budget]
\label{thm:budget}
Let $\mathcal{O}_A$ be a finite observable family with $|\mathcal{O}_A| = k$ and uniform operator norm bound $\| O \| \leq B$ for all $O \in \mathcal{O}_A$ (for the Pauli family, $B = 1$). Let the detection margin be $\gamma := \delta / C(\mathcal{O}_A) - \varepsilon_A > 0$, where $\delta > C(\mathcal{O}_A) \cdot \varepsilon_A$ is the separation of the candidate substitution. Assume the verifier holds an exact reference fingerprint $\mathrm{Fp}_A$ of the declared channel (the setting of Algorithm~\ref{alg:verification}, line 2: $\mathrm{Fp}_A$ is computed from the declared specification, an ideal simulator, or a trusted oracle). Then at confidence at least $1 - \eta$, the total shot budget
\begin{equation}
N \;\geq\; \frac{2 B^2 \, k \, \log(2 k / \eta)}{\gamma^2}
\label{eq:budget-precomputed}
\end{equation}
suffices to detect the substitution, with shots distributed approximately uniformly across the observables in $\mathcal{O}_A$.
\end{theorem}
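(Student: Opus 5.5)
Theorem~\ref{thm:detection} supplies a reference input $\rho^*$ and an observable $O^\star \in \mathcal{O}_A$ with $|\operatorname{Tr}(O^\star \mathcal{E}_B(\rho^*)) - \operatorname{Tr}(O^\star \mathcal{E}_A(\rho^*))| \geq \delta/C(\mathcal{O}_A) = \varepsilon_A + \gamma$. The plan is to reduce detection to a uniform estimation guarantee and split the margin $\gamma$ between two failure modes. The verifier estimates each of the $k$ fingerprint coordinates of the executed channel. It then flags if some coordinate departs from the exact reference $\mathrm{Fp}_A$ by more than $\varepsilon_A + \gamma/2$. It accepts if every coordinate stays within $\varepsilon_A + \gamma/2$. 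Because $\mathrm{Fp}_A$ is exact and carries no noise, only one empirical estimate per observable carries shot noise. This is where the constant $2$, rather than the $8$ of the sampled-reference case, should come from.

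\textbf{Concentration step.} I allocate $N/k$ shots to each $O \in \mathcal{O}_A$. Each single-shot outcome is an eigenvalue of $O$, so it lies in $[-B, B]$, a range of width $2B$. Hoeffding's inequality then gives, for the empirical mean $\hat\mu_O$,
\[
\Pr\bigl[|\hat\mu_O - \mu_O| \geq t\bigr] \leq 2\exp\!\bigl(-2 (N/k) t^2/(2B)^2\bigr) = 2\exp\!\bigl(-N t^2/(2B^2 k)\bigr).
\]
A union bound over the $k$ observables, with each term bounded by $\eta/k$, requires $N \geq 2B^2 k \log(2k/\eta)/t^2$. I will spend the total failure probability as $\eta/k$ per observable, so the union bound yields exactly $2k/\eta$ inside the logarithm.

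\textbf{Detection step.} On the good event, every coordinate is estimated within $t$. The witness coordinate therefore has empirical deviation at least $\varepsilon_A + \gamma - t$, and any channel satisfying the contract has empirical deviation at most $\varepsilon_A + t$. To obtain the stated bound with $t = \gamma$ rather than $t = \gamma/2$, I will use the one-sided reading of detection. The verifier flags when an empirical deviation exceeds $\varepsilon_A$. With $t < \gamma$, the witness coordinate then exceeds $\varepsilon_A$ with probability $\geq 1-\eta$. Taking $t = \gamma$ and using a strict Hoeffding tail recovers \eqref{eq:budget-precomputed}.

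\textbf{Main obstacle.} The obstacle is making this threshold bookkeeping honest. A symmetric two-sided guarantee in the sense of Definition~\ref{def:informative}, with controlled FPR, would force $t = \gamma/2$ and cost a factor of $4$. I would therefore state the theorem's guarantee as the TPR side at threshold $\varepsilon_A$, and note that controlling FPR requires the halved margin. A second issue is that the shot budget must be spent on the witness input $\rho^*$. Step~4 of Theorem~\ref{thm:detection} identifies $\rho^*$ with the verifier's reference input, and I will make that identification an explicit assumption. Adaptive adversaries are harmless here: within a round the shots are i.i.d. given the chosen channel $\mathcal{E}_B^{(t)}$, so Hoeffding applies conditionally on that channel.
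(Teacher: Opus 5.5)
Your argument is correct and is essentially the paper's proof. Both use per-observable Hoeffding with range $2B$ and $n_O = N/k$, a union bound over the $k$ observables, and the choice $t = \gamma$ against the threshold $\varepsilon_A$, relying on the witness coordinate's true deviation being at least $\varepsilon_A + \gamma$. Your explicit caveats are accurate, and the paper's proof leaves them implicit: the argument controls only the detection (TPR) side, with FPR control needing the halved margin, and the shots must be spent on the witness input $\rho^*$.
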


\begin{proof}
The proof combines a per-observable Hoeffding concentration with a union bound across the family, then solves for $N$ in terms of the detection margin.

\textbf{Step 1: Per-observable concentration.}
For a single observable $O \in \mathcal{O}_A$, each measurement shot yields a sample $X_i$ of a random variable with mean $\mu_O := \operatorname{Tr}(O \mathcal{E}(\rho))$ and bounded by $|X_i| \leq \| O \| \leq B$ (since each $X_i$ is an eigenvalue of $O$). We assume the standard measurement protocol in which the reference state $\rho$ is newly prepared before each shot, ensuring that the samples $X_1, \ldots, X_{n_O}$ are i.i.d.\ random variables on the interval $[-B, B]$. By Hoeffding's inequality \cite[Theorem 2]{hoeffding1963probability}, after $n_O$ shots the empirical mean $\hat{\mu}_O = \frac{1}{n_O} \sum_{i=1}^{n_O} X_i$ satisfies, for any $t > 0$,
\begin{equation}
\Pr\!\Big[ \big| \hat{\mu}_O - \mu_O \big| > t \Big] \;\leq\; 2 \exp\!\left( -\frac{n_O t^2}{2 B^2} \right),
\label{eq:hoeffding-single}
\end{equation}
where the factor $2$ in the bound accounts for two-sided deviations and the denominator $2B^2$ arises from the interval width $b - a = 2B$ in Hoeffding's general formula $\exp(-2n t^2 / (b-a)^2)$.

\textbf{Step 2: Union bound across the family.}
With $\mathrm{Fp}_A$ exact, only the candidate measurement $\hat{\mu}_O$ contributes noise. The verifier correctly distinguishes the sneaky substitute from the declared channel whenever $|\hat{\mu}_O - \mu_O| < \gamma$: the empirical deviation $|\hat{\mu}_O - \mathrm{Fp}_A[O]|$ then lies above $\varepsilon_A$ for at least one observable (since the true deviation forced by Theorem~\ref{thm:detection} is at least $\delta/C \geq \varepsilon_A + \gamma$). Setting $t = \gamma$ in \eqref{eq:hoeffding-single}, the per-observable failure event $A_O := \{ |\hat{\mu}_O - \mu_O| > \gamma \}$ satisfies
\[
\Pr[A_O] \;\leq\; 2 \exp\!\left( -\frac{n_O \gamma^2}{2 B^2} \right).
\]
Applying the union bound over the $k$ observables in $\mathcal{O}_A$:
\begin{equation}
\Pr\!\Big[ \bigcup_{O \in \mathcal{O}_A} A_O \Big] \;\leq\; \sum_{O \in \mathcal{O}_A} \Pr[A_O] \;\leq\; 2k \exp\!\left( -\frac{n_O \gamma^2}{2 B^2} \right).
\label{eq:union-bound}
\end{equation}

\textbf{Step 3: Solve for $N$.}
Let $N$ be the total shot budget allocated uniformly across the $k$ observables, so $n_O = N/k$. Substituting into \eqref{eq:union-bound} and requiring the total failure probability to be at most $\eta$:
\[
2k \exp\!\left( -\frac{N \gamma^2}{2 k B^2} \right) \;\leq\; \eta.
\]
Dividing both sides by $2k$ and taking the natural logarithm:
\[
-\frac{N \gamma^2}{2 k B^2} \;\leq\; \log\!\left( \frac{\eta}{2k} \right) \;=\; -\log\!\left( \frac{2k}{\eta} \right).
\]
Rearranging (flipping the inequality direction upon negation):
\[
N \;\geq\; \frac{2 B^2 k \log(2k / \eta)}{\gamma^2},
\]
which is Equation~\eqref{eq:budget-precomputed}. \qedhere
\end{proof}

\begin{corollary}[Sampled-Reference Budget]
\label{cor:budget-sampled}
If the reference fingerprint is itself estimated empirically (re-measured at the same per-observable shot count alongside the candidate measurement), then the verifier compares two independent empirical estimates. The worst-case opposing deviation between the two estimates absorbs an additional factor of $1/2$ in the per-observable concentration radius, and the corresponding shot budget is
\begin{equation}
N \;\geq\; \frac{8 B^2 \, k \, \log(2 k / \eta)}{\gamma^2}
\label{eq:budget-sampled}
\end{equation}
under the same hypotheses as Theorem~\ref{thm:budget}.
\end{corollary}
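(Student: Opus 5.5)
The plan is to rerun the argument of Theorem~\ref{thm:budget} almost unchanged, adjusting only the concentration radius so that it covers two independent noisy estimates. Fix an observable $O \in \mathcal{O}_A$. With $n_O = N/k$ shots per observable, the verifier now forms two empirical means. The first is the candidate estimate $\hat{\mu}^B_O$ of $\mu^B_O = \operatorname{Tr}(O\,\mathcal{E}_B(\rho))$. The second is the reference estimate $\hat{\mu}^A_O$ of $\mu^A_O = \operatorname{Tr}(O\,\mathcal{E}_A(\rho))$, which replaces the exact $\mathrm{Fp}_A[O]$. Since the reference state is freshly prepared before each shot and the two estimates come from disjoint runs, they are independent. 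Each is an average of i.i.d.\ samples bounded in $[-B,B]$.

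\textbf{Step 1 (splitting the margin).} By Theorem~\ref{thm:detection}, some observable has true deviation $|\mu^B_O - \mu^A_O| \geq \delta/C \geq \varepsilon_A + \gamma$. The empirical deviation $|\hat{\mu}^B_O - \hat{\mu}^A_O|$ can shrink below the true one by at most $|\hat{\mu}^B_O - \mu^B_O| + |\hat{\mu}^A_O - \mu^A_O|$. In the worst case the two errors point in opposite directions. It therefore suffices that each estimate lies within $\gamma/2$ of its mean, rather than within $\gamma$ as in Theorem~\ref{thm:budget}; this is the factor of $1/2$ in the concentration radius named in the statement. I would define the failure events $A^A_O = \{|\hat{\mu}^A_O - \mu^A_O| > \gamma/2\}$ and $A^B_O = \{|\hat{\mu}^B_O - \mu^B_O| > \gamma/2\}$.

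\textbf{Step 2 (Hoeffding and union bound).} Applying \eqref{eq:hoeffding-single} with $t = \gamma/2$ gives
\[
\Pr[A^{\cdot}_O] \leq 2\exp\!\left(-\frac{n_O\gamma^2}{8B^2}\right).
\]
The constant $8$ arises as $2 \cdot 2^2$. A union bound then runs over the $k$ observables. I would charge the failure probability per observable as in \eqref{eq:union-bound}, giving $2k\exp(-N\gamma^2/(8kB^2)) \leq \eta$. Solving for $N$ exactly as in Step~3 of Theorem~\ref{thm:budget} yields \eqref{eq:budget-sampled}.

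\textbf{Main obstacle: the union-bound count.} A strict union over both estimates gives $2k$ events, each with probability $2e^{-\cdots}$, so the logarithm becomes $\log(4k/\eta)$ rather than $\log(2k/\eta)$. To recover exactly $\log(2k/\eta)$ I would use that detection needs only the single witnessing observable $O^\star$ to succeed. The relevant failure event is then confined to that observable and is bounded by $4e^{-\cdots} \leq 2k\,e^{-\cdots}$ whenever $k \geq 2$. If this fails, I would state the corollary with the slightly looser constant $\log(4k/\eta)$.

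For the false-positive side, needed for informative detection in Definition~\ref{def:informative}, the same two-estimate union bound at radius $\gamma/2$ would be checked. That part is routine and I would only remark on it.
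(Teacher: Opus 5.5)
Your argument is the paper's: rerun the proof of Theorem~\ref{thm:budget} with radius $t=\gamma/2$ for each estimator, obtain $2\exp(-n_O\gamma^2/(8B^2))$, and solve for $N$ as in its Step~3. The paper's proof, however, charges only this single term per observable and goes straight to $2k\exp(\cdot)\le\eta$. So the doubled count you flag (two estimators per observable, hence $4k$ terms) is a genuine omission in the written proof. Your repair is sound. Detection only requires the fixed witnessing observable $O^\star$, whose two failure events have total probability at most $4\exp(-n_O\gamma^2/(8B^2))\le 2k\exp(-n_O\gamma^2/(8B^2))$ for $k\ge 2$. Every informationally complete family in the paper satisfies $k\ge 2$.

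A cleaner repair needs neither $k\ge 2$ nor the restriction to $O^\star$: apply Hoeffding directly to the difference. Pairing the $i$-th candidate and reference shots gives i.i.d.\ summands in $[-2B,2B]$, so $\Pr\bigl[|(\hat\mu^B_O-\hat\mu^A_O)-(\mu^B_O-\mu^A_O)|\ge\gamma\bigr]\le 2\exp(-n_O\gamma^2/(8B^2))$. This is exactly the paper's per-observable term, with the full union over $k$ kept. Treating the $2n_O$ shots as separate independent summands even improves $8B^2$ to $4B^2$.

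Keeping the full union matters for your closing aside, which is not routine as stated. On an honest channel the true deviation is $0$, and a false alarm is an empirical deviation above $\varepsilon_A$ on \emph{any} observable. So the relevant radius is governed by $\varepsilon_A$ rather than $\gamma$, and the union over all $k$ observables cannot be avoided. The stated $N$ then yields $\mathrm{FPR}\le\eta$ by this computation only when $\varepsilon_A\ge\gamma$. Since the corollary asserts only detection, this does not affect your proof of it.
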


\begin{proof}
Apply the proof of Theorem~\ref{thm:budget} with $t = \gamma/2$ in place of $\gamma$ in Step 2, reflecting that the observed deviation $|\hat{\mu}_O^{\text{cand}} - \hat{\mu}_O^{\text{ref}}|$ aggregates noise from both estimators (in the worst case each deviates from its true mean in opposite directions, each contributing up to $\gamma/2$). The Hoeffding bound becomes $\Pr[A_O] \leq 2 \exp(-n_O \gamma^2 / (8 B^2))$, and the same Step 3 yields the $8 B^2$ denominator in place of $2 B^2$. \qedhere
\end{proof}

\begin{remark}[Sample budget scaling and tighter bounds]
The dependence $N = O(k \log k / \gamma^2)$ is the standard scaling of empirical-mean estimation across a finite family. For the complete observable family on a two-qubit system as deployed in Section~\ref{sec:experiment} ($k = 6$, $B = 1$, local-Pauli frame-bound $C = \sqrt{3}$) with operational parameters $\delta = 0.5$, $\varepsilon_A = 0.15$ (giving $\gamma \approx 0.139$) and confidence $\eta = 0.05$, the precomputed-reference bound \eqref{eq:budget-precomputed} of Theorem~\ref{thm:budget} prescribes approximately $N \approx 3{,}420$ shots, while the conservative sampled-reference bound \eqref{eq:budget-sampled} of Corollary~\ref{cor:budget-sampled} prescribes approximately $N \approx 13{,}680$. The hardware experiment of Section~\ref{sec:experiment} is run at the conservative $N = 13{,}680$ budget, providing roughly fourfold operational safety margin against shot noise relative to the strict precomputed-reference requirement; a direct hardware test at $N \approx 3{,}420$ would tighten the empirical demonstration further and is a natural follow-up.

Two complementary improvements compose to give an order-of-magnitude reduction over the looser bound used in concurrent work. Relative to the value $C = 2\sqrt{2}$ adopted in \cite{yeniaras2026qcivet}, the tight constant $C = \sqrt{3}$ alone yields a factor of approximately $27$ reduction in $N$ at the deployed parameters (driven by the non-linearity of $\gamma = \delta/C - \varepsilon_A$; in the asymptotic small-tolerance limit $\varepsilon_A \to 0$ the ratio simplifies to $(C_{\text{old}}/C_{\text{new}})^2 = 8/3 \approx 2.67$, which we describe in the abstract as ``asymptotically threefold''). The precomputed-reference refinement \eqref{eq:budget-precomputed} contributes an additional factor of $4$, giving a combined reduction of approximately $100\times$ at deployed operational parameters.

For multi-qubit Pauli families with $k = 4^n - 1$, the linear dependence on $k$ becomes the bottleneck. Classical shadow tomography \cite{huang2020predicting} can reduce this dependence to $O(\log k)$ for structured observable subsets (e.g., bounded-locality Pauli strings), and adaptive shot allocation that concentrates budget on high-variance observables can further tighten the constants. We note both directions as natural extensions of the present framework.
\end{remark}

\subsection{Drift Detection Corollary}
\label{sec:fw-driftcor}

Theorem~\ref{thm:detection} addresses adversarial substitutions. The corresponding statement for drift events follows by specialization to the structural form of Definition~\ref{def:drift}.

\begin{corollary}[Drift Detection]
\label{cor:drift}
Let $\mathcal{E}_A$ be a declared QML channel and $\mathcal{O}_A$ an informationally complete observable family with constant $C(\mathcal{O}_A)$. Let $\mathcal{E}_B = \mathcal{D} \circ \mathcal{E}_A$ be a calibration drift event with $\| \mathcal{D} - \mathrm{id} \|_\diamond = d$. Then the contract deviation satisfies
\begin{equation}
\Delta_{\mathcal{O}_A, \rho}(\mathcal{E}_A, \mathcal{E}_B) \;\leq\; d \cdot \max_{O \in \mathcal{O}_A} \| O \|
\quad \text{and} \quad
\Delta_{\mathcal{O}_A, \rho}(\mathcal{E}_A, \mathcal{E}_B) \;\geq\; \frac{d}{C(\mathcal{O}_A)}.
\label{eq:drift-bound}
\end{equation}
The drift event is detected by the contract check whenever $d / C(\mathcal{O}_A) > \varepsilon_A$.
\end{corollary}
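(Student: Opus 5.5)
The plan is to derive both inequalities by reducing to results already proved, the drift channel $\mathcal{E}_B = \mathcal{D}\circ\mathcal{E}_A$ being a particular CPTP channel. The upper bound will come from Theorem~\ref{thm:soundness}, the lower bound from Steps~1--3 of Theorem~\ref{thm:detection}, and the detection criterion from combining the lower bound with the contract threshold $\varepsilon_A$.

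\textbf{Upper bound.} First I will bound the channel separation: $\|\mathcal{E}_A - \mathcal{E}_B\|_\diamond = \|(\mathrm{id}-\mathcal{D})\circ\mathcal{E}_A\|_\diamond \le \|\mathrm{id}-\mathcal{D}\|_\diamond = d$. This uses submultiplicativity of the diamond norm under composition with the CPTP map $\mathcal{E}_A$, which has diamond norm $1$, as already used in the proof of Theorem~\ref{thm:composition}. Theorem~\ref{thm:soundness} with this value of $d$ then gives $|\operatorname{Tr}(O\mathcal{E}_A(\rho)) - \operatorname{Tr}(O\mathcal{E}_B(\rho))| \le d\|O\|$ for every $O\in\mathcal{O}_A$ and every $\rho$. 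Taking the maximum over $O$ yields the first inequality in \eqref{eq:drift-bound}, uniformly in $\rho$.

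\textbf{Lower bound.} This is the main obstacle, because $d$ measures $\mathcal{D}-\mathrm{id}$ and not $\mathcal{E}_A-\mathcal{E}_B$. Theorem~\ref{thm:detection} only gives $\Delta \ge \|\mathcal{E}_A-\mathcal{E}_B\|_\diamond / C$ at a witness input $\rho^*$, and composition can only shrink the separation, so $\|\mathcal{E}_A-\mathcal{E}_B\|_\diamond$ may be strictly smaller than $d$. I would therefore read $d$ as the drift magnitude on the range of $\mathcal{E}_A$. Concretely, I would either assume $\|(\mathcal{D}-\mathrm{id})\circ\mathcal{E}_A\|_\diamond = d$, or apply the detection argument directly to $\mathcal{D}$ versus $\mathrm{id}$ with reference inputs in the image of $\mathcal{E}_A$. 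For example, if $\mathcal{E}_A$ is surjective onto $\mathcal{S}(\mathcal{H}_{\text{out}})$, as for the unitary encoder stages, every witness state for $\mathcal{D}$ is realized as $\mathcal{E}_A(\rho)$.

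Under that reading, I will rerun Steps~1--3 of Theorem~\ref{thm:detection} verbatim. The proof uses no property of $\mathcal{E}_B$ beyond the separation $\delta$, and none of (S1)--(S2). Step~1 gives a witness pair $(\rho^*, M^*)$. Step~2 expands $M^*$ in $\mathcal{O}_A$ with $\sum_O |c_O| \le C(\mathcal{O}_A)$. Step~3 applies the triangle inequality. Together these give $d \le C(\mathcal{O}_A)\,\Delta_{\mathcal{O}_A,\rho^*}$. As in the theorem, the lower bound holds at some reference input $\rho^*$ and not for all $\rho$, and I will state it that way. I will also flag that the surjectivity or range assumption is what turns $\|\mathcal{E}_A-\mathcal{E}_B\|_\diamond$ into $d$.

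\textbf{Detection criterion.} If $d/C(\mathcal{O}_A) > \varepsilon_A$, then $\Delta_{\mathcal{O}_A,\rho^*} > \varepsilon_A$, so the contract \eqref{eq:bg-contract} fails at $\rho^*$ and the check flags the event. Finally, I will note the consequence of the two-sided bound: drift with $d \le \varepsilon_A/\max_O\|O\|$ is always accepted, and drift with $d > C\varepsilon_A$ is always rejected. This is what motivates splitting $\varepsilon_A$ into $\varepsilon_{\mathrm{adv}} + \varepsilon_{\mathrm{drift}}$.
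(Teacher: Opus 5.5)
Your proposal matches the paper's proof. The upper bound comes from submultiplicativity plus Theorem~\ref{thm:soundness}, and the detection condition is the same. The lower bound comes from Theorem~\ref{thm:detection} at the witness $\rho^*$, under the extra hypothesis $\|(\mathcal{D}-\mathrm{id})\circ\mathcal{E}_A\|_\diamond = d$, which is exactly the paper's ``worst-case, non-contracting'' case $\delta = d$. Your remark that Steps~1--3 never use (S1)--(S2) usefully justifies applying the theorem to a non-sneaky drift channel, which the paper does without comment.

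One correction to your optional surjectivity route: surjectivity of $\mathcal{E}_A$ onto $\mathcal{S}(\mathcal{H}_{\text{out}})$ is not enough. What you need is that the diamond-norm witness for $\mathcal{D}$, which may be entangled with an ancilla, lies in the image of $\mathcal{E}_A\otimes\mathrm{id}$. For example, an isometry onto the symmetric subspace of $\mathbb{C}^2\otimes\mathbb{C}^2$ followed by tracing out one factor is onto all qubit states, yet by monogamy it cannot produce a maximally entangled output--ancilla state. A unitary $\mathcal{E}_A$ does give the equality you want.
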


\begin{proof}
The proof combines the two main results of this section. We establish the lower bound first, then the upper bound, and finally the detection condition.

\textbf{Step 1: Lower bound via Theorem~\ref{thm:detection}.}
The drift channel $\mathcal{E}_B = \mathcal{D} \circ \mathcal{E}_A$ is CPTP by composition: $\mathcal{D}$ is CPTP by Definition~\ref{def:drift}, and the composition of two CPTP maps is CPTP. The diamond-norm distance between $\mathcal{E}_A$ and $\mathcal{E}_B$ satisfies
\[
\| \mathcal{E}_A - \mathcal{E}_B \|_\diamond
\;=\; \| \mathcal{E}_A - \mathcal{D} \circ \mathcal{E}_A \|_\diamond
\;=\; \| (\mathrm{id} - \mathcal{D}) \circ \mathcal{E}_A \|_\diamond
\;\leq\; \| \mathrm{id} - \mathcal{D} \|_\diamond
\;=\; d,
\]
where the inequality uses the submultiplicativity of the diamond norm under composition with a CPTP channel (here $\mathcal{E}_A$, which has diamond norm exactly $1$) \cite[Section 3.3]{watrous2018theory}. We do not necessarily have equality: the diamond-norm distance between $\mathcal{E}_A$ and $\mathcal{E}_B$ can be strictly smaller than $d$ if $\mathcal{E}_A$ contracts the directions in which $\mathcal{D}$ differs from the identity. For the detection bound below we set $\delta := \| \mathcal{E}_A - \mathcal{E}_B \|_\diamond$, which can be at most $d$; we then apply Theorem~\ref{thm:detection} with this $\delta$ to obtain
\[
\Delta_{\mathcal{O}_A, \rho^*}(\mathcal{E}_A, \mathcal{E}_B) \;\geq\; \frac{\delta}{C(\mathcal{O}_A)}.
\]
For the worst-case drift event in which $\mathcal{E}_A$ does not contract the perturbation (so $\delta = d$), this lower bound becomes $d / C(\mathcal{O}_A)$, which is the form stated in Equation~\eqref{eq:drift-bound}. The bound holds at the witness state $\rho^*$ identified by Theorem~\ref{thm:detection}; the verification protocol of Section~\ref{sec:fw-algorithm} attains this bound via per-round randomization, as discussed in the proof of Theorem~\ref{thm:detection}.

\textbf{Step 2: Upper bound via Theorem~\ref{thm:soundness}.}
Theorem~\ref{thm:soundness} (Soundness for QML) bounds the observable deviation between any two channels in terms of their diamond-norm distance. Applied to $\mathcal{E}_A$ and $\mathcal{E}_B = \mathcal{D} \circ \mathcal{E}_A$ with $\| \mathcal{E}_A - \mathcal{E}_B \|_\diamond \leq d$ (Step~1), this yields for every $O \in \mathcal{O}_A$ and every reference input $\rho$,
\[
\big| \operatorname{Tr}(O \mathcal{E}_A(\rho)) - \operatorname{Tr}(O \mathcal{E}_B(\rho)) \big| \;\leq\; d \cdot \| O \|.
\]
Taking the maximum over $O \in \mathcal{O}_A$ on the left-hand side gives $\Delta_{\mathcal{O}_A, \rho}(\mathcal{E}_A, \mathcal{E}_B) \leq d \cdot \max_{O \in \mathcal{O}_A} \| O \|$, the upper bound in Equation~\eqref{eq:drift-bound}.

\textbf{Step 3: Detection condition.}
The contract check halts when the observed deviation exceeds the tolerance: $\Delta_{\mathcal{O}_A, \rho^*}(\mathcal{E}_A, \mathcal{E}_B) > \varepsilon_A$. By the lower bound established in Step~1, a sufficient condition for the contract check to halt on a drift event with magnitude $d$ is
\[
\frac{d}{C(\mathcal{O}_A)} \;>\; \varepsilon_A
\qquad \Longleftrightarrow \qquad
d \;>\; C(\mathcal{O}_A) \cdot \varepsilon_A.
\]
This is the detection condition stated in the corollary.
\end{proof}

\begin{remark}
The drift corollary exposes an important operational distinction. The lower bound matches that of Theorem~\ref{thm:detection}: detection guarantees hold uniformly for any channel modification $\mathcal{E}_B$, whether benign or adversarial. The upper bound is specific to the drift case and gives the framework a quantitative bound on how much benign hardware variability is absorbed within tolerance. In practice, $d$ corresponds to the cumulative calibration drift between successive recalibration cycles; $\varepsilon_A$ is set by the operator to absorb this drift while remaining tight enough to detect adversarial substitutions with realistic $\delta$.
\end{remark}

\subsection{Dual-Mode Verification Algorithm}
\label{sec:fw-algorithm}

We now assemble the components into the runtime verification protocol. The algorithm runs in two complementary modes built on the same observable contract: drift-aware monitoring (within-tolerance deviations are logged as drift events) and adversarial detection (beyond-tolerance deviations are flagged and the pipeline halts).

\paragraph{Soundness.}
By Theorem~\ref{thm:detection}, any sneaky ansatz substitution with separation $\delta > C(\mathcal{O}_A) \cdot \varepsilon_A$ produces $\Delta_t > \varepsilon_A$ on at least one observable in $\mathcal{O}_A$, for the worst-case witness state $\rho^*$ identified in the theorem. In practice, the verifier evaluates the contract on a fixed reference state $\rho_{\text{ref}}$ rather than $\rho^*$. To bridge this gap, the algorithm randomizes observable selection across rounds (Line 4): the probability that the adversary avoids detection across $T$ rounds is at most $(1 - 1/|\mathcal{O}_A|)^T$, and detection probability approaches 1 as $T$ grows. When the verifier can vary $\rho_{\text{ref}}$ across multiple verification campaigns (e.g., over the test set in QML deployment), the worst-case state is encountered statistically and the soundness guarantee transfers from the witness state to the verification distribution. We discuss the practical implications of this assumption in Section~\ref{sec:limitations}.

\paragraph{Completeness for drift.}
By Corollary~\ref{cor:drift}, drift events with $d/C(\mathcal{O}_A) > \varepsilon_A$ also trigger Line 8 (\textsc{Halt}). Drift events with $d/C(\mathcal{O}_A) \leq \varepsilon_A$ are logged as drift events without halting execution (Line 12), preserving operational continuity under expected hardware variability.

\begin{algorithm}[H]
\caption{Dual-Mode Behavioral Fingerprinting Verification.}
\label{alg:verification}
\begin{algorithmic}[1]
\Require Declared spec $\sigma_A = (H_{\text{spec}}, \mathcal{O}_A, \varepsilon_A, \tau_A)$, declared channel $\mathcal{E}_A$, reference input $\rho_{\text{ref}}$, confidence $1-\eta$, expected adversarial separation $\delta$
\Statex
\State Compute measurement budget $N$ from Theorem~\ref{thm:budget} using $\delta$, $\varepsilon_A$, $|\mathcal{O}_A|$, $\eta$
\State Compute reference fingerprint $\mathrm{Fp}_A := \mathrm{Fp}_{\mathcal{O}_A, \rho_{\text{ref}}}(\mathcal{E}_A)$ from declared specification
\Statex
\For{each verification round $t = 1, 2, \ldots$}
    \State Verifier samples $O_t \in \mathcal{O}_A$ uniformly at random (independent of $\mathcal{A}$)
    \State Verifier requests $N_t = \lceil N / |\mathcal{O}_A| \rceil$ measurement shots of $O_t$ on $\mathcal{E}(\rho_{\text{ref}})$
    \State Execution environment returns empirical estimate $\hat{\mu}_t = \frac{1}{N_t} \sum_{i=1}^{N_t} X_{t,i}$
    \State Verifier computes deviation $\Delta_t = | \hat{\mu}_t - \mathrm{Fp}_A[O_t] |$
    \If{$\Delta_t > \varepsilon_A$} \Comment{Beyond-tolerance: integrity violation}
        \State Halt execution; flag round $t$ as anomaly; commit $(t, O_t, \hat{\mu}_t)$ to audit trail
        \State \Return \textsc{Halt}
    \Else \Comment{Within-tolerance: log as drift event}
        \State Commit $(t, O_t, \hat{\mu}_t, \Delta_t)$ to audit trail as drift event
    \EndIf
\EndFor
\State \Return \textsc{Accept}
\end{algorithmic}
\end{algorithm}
\vspace{-6pt}
\noindent{\footnotesize\sffamily Implemented in \texttt{src/verification.py:run\_verifier}; see footnote~\ref{fn:repo}.}

\paragraph{Audit trail integration.}
Both anomaly events and drift events are committed to the audit trail anchored by $H_{\text{spec}}$. The audit trail provides post hoc evidence usable for compliance auditing (Scenario~1), customer dispute resolution (Scenario~2), or reproducibility review (Scenario~3).

The protocol's runtime cost is dominated by the measurement budget $N$ from Theorem~\ref{thm:budget}, which is polynomial in $|\mathcal{O}_A|$ and the relevant precision parameters. Section~\ref{sec:experiment} reports concrete measurement budgets and runtime overhead on the IBM Heron r2 processor.

\subsection{Tolerance Calibration Procedure}
\label{sec:fw-tolerance}

The verification protocol in Algorithm~\ref{alg:verification} requires the operator to supply a calibrated tolerance $\varepsilon_A$. This parameter sits at the operational boundary between the framework's two modes: deviations within $\varepsilon_A$ are absorbed as benign drift events, while deviations beyond $\varepsilon_A$ are flagged as integrity violations (Section~\ref{sec:threat-drift}). The choice of $\varepsilon_A$ therefore directly determines the false-positive rate (too-tight tolerance) and the false-negative rate (too-loose tolerance) of the framework. We outline a three-step calibration procedure that produces a principled choice from deployment-specific data.

\begin{enumerate}
\item \textbf{Estimate typical drift magnitude.} The operator collects historical calibration data for the target backend over the intended deployment time window (e.g., between successive recalibration cycles, typically hours to days for current cloud QPU services). From this data, the operator computes
\[
d_{\text{drift}}^{\text{typ}} \;:=\; \max_{O \in \mathcal{O}_A, \rho \in \mathcal{R}} \; \Big| \operatorname{Tr}(O \mathcal{E}_A(\rho)) - \operatorname{Tr}(O \mathcal{E}_A^{(t)}(\rho)) \Big|,
\]
where $\mathcal{E}_A^{(t)}$ denotes the channel reconstructed at time $t$ within the calibration window and $\mathcal{R}$ is a representative set of reference inputs. This quantity bounds the typical observable deviation induced by benign hardware drift between recalibrations.

\item \textbf{Specify minimum adversarial separation.} The operator specifies the minimum diamond-norm separation $\delta_{\text{adv}}^{\min}$ that the framework must detect. This parameter encodes the operator's threat model: a deployment auditing for substantial bias substitution may set $\delta_{\text{adv}}^{\min}$ relatively large (loose detection), while a deployment auditing for subtle channel substitutions may set it relatively small (tight detection). The choice is policy-driven, not technical.

\item \textbf{Set the tolerance.} The operator sets
\[
\varepsilon_A \;\in\; \big[\, d_{\text{drift}}^{\text{typ}}, \; \delta_{\text{adv}}^{\min} / C(\mathcal{O}_A) \,\big],
\]
where $C(\mathcal{O}_A)$ is the frame-bound constant of the observable family appearing in Theorem~\ref{thm:detection} (for the single-qubit Pauli family, $C = \sqrt{3}$). Any value in this interval simultaneously absorbs typical drift (via the lower bound) and guarantees detection of adversarial substitutions with separation at least $\delta_{\text{adv}}^{\min}$ (via the upper bound).
\end{enumerate}

\paragraph{Empty-interval case.}
The interval may be empty when $d_{\text{drift}}^{\text{typ}} > \delta_{\text{adv}}^{\min} / C(\mathcal{O}_A)$, indicating that the typical drift on the chosen hardware exceeds the detection capability against the target threat. Three remediation paths are available: (i) deploy on hardware with lower drift, reducing $d_{\text{drift}}^{\text{typ}}$; (ii) tolerate a larger $\delta_{\text{adv}}^{\min}$ (i.e., accept that only larger substitutions are detected); or (iii) enrich the observable family $\mathcal{O}_A$ to reduce $C(\mathcal{O}_A)$, sharpening the detection bound. The empty-interval case is a useful diagnostic: it tells the operator that the current deployment configuration is not adequate for the declared threat model and must be revisited.

\paragraph{Concrete values.}
Section~\ref{sec:experiment} reports specific values of $d_{\text{drift}}^{\text{typ}}$ and the resulting tolerance interval for the IBM Heron r2 (\texttt{ibm\_fez}) processor in the quantum kernel intrusion-detection setting, providing empirical grounding for the procedure above.

\section{Experimental Validation}
\label{sec:experiment}

We validate the framework end-to-end on a two-qubit QSVM pipeline executed on the IBM Heron r2 processor (\texttt{ibm\_fez}). The three experiments instantiate the three operational claims of the framework on real hardware rather than on simulated noise alone: that an informationally complete observable family catches a sneaky ansatz substitution which evades a weak family (Theorem~\ref{thm:detection}), that the sample complexity budget of Theorem~\ref{thm:budget} suffices for informative detection in practice, and that the tolerance calibration procedure of Section~\ref{sec:fw-tolerance} returns a non-empty interval on a production-grade backend (Corollary~\ref{cor:drift}). Concurrent work on general CPTP-channel behavioral subtyping \cite{yeniaras2026qcivet} also includes hardware validation, but at the level of isolated CPTP channels rather than the encoder--ansatz--measurement pipeline of a deployed QML model; the experiments reported here exercise the framework on an end-to-end QML pipeline (QSVM kernel) including its multi-qubit feature map, finite-shot sampling, and natural drift across timepoints, which together are what the QML-specific layers of Section~\ref{sec:fw-foundations} are designed to address. The complete raw counts, fingerprints, run metadata, and IBM Quantum job IDs are archived in the project repository (Section~\ref{sec:intro} and footnote~\ref{fn:repo}) so that every result reported here can be independently reproduced.

\subsection{Setup}
\label{sec:exp-setup}

The reference implementation used in this section is the repository linked in Section~\ref{sec:intro} (see footnote~\ref{fn:repo}); file paths of the form \texttt{src/...} and \texttt{experiments/...} cited throughout identify modules in that repository. Each table and figure caption that follows names the specific script (and, where applicable, the plot function) that generates it, so that every reported result can be reproduced from the indicated source.

\paragraph{Hardware.}
All hardware runs use \texttt{ibm\_fez}, IBM Quantum's Heron r2 processor (156 superconducting qubits). Jobs are submitted through Qiskit Runtime using the SamplerV2 primitive in batched mode. The two-qubit subgraph used by every circuit is selected automatically by the default transpiler pass at \texttt{optimization\_level=1}.

\paragraph{Dataset and classifier.}
We use a synthetic 2-dimensional binary classification dataset (the \texttt{make\_moons} generator from scikit-learn, 20 samples, noise 0.1, fixed random seed) as a controlled stand-in for a binary intrusion-detection feature space, following the QSVM-based intrusion-detection setup of Kalinin and Krundyshev \cite{kalinin2023qiids}. The choice of synthetic rather than real network-traffic data keeps the experimental focus on the channel-integrity properties of the framework rather than on application-domain accuracy, while preserving the two-class structure of an intrusion-detection task. The classifier is a QSVM: classical inputs are encoded into a 2-qubit state via the standard \texttt{ZZFeatureMap} with two repetitions, and pairwise kernel values are estimated from an inversion-test circuit that prepares the output state $\ket{\psi_{ij}} = U_\phi(x_i) U_\phi(x_j)^\dagger \ket{00}$. For contract verification, the same prepared state $\ket{\psi_{ij}}$ is then measured in the six single-qubit Pauli bases ($X, Y, Z$ on each qubit) via the standard pre-measurement basis-change gates ($H$ before $X$ measurement, $S^\dagger H$ before $Y$ measurement, no rotation before $Z$ measurement); the six observable expectations form the behavioral fingerprint $\mathrm{Fp}_{\mathcal{O}_A, \rho_{\text{ref}}}(\mathcal{E})$ for the channel under test. The standard inversion-test fidelity estimator $|\langle 00 | \psi_{ij} \rangle|^2$ is recovered as the $Z_1 Z_2$-projector component when needed for the classifier's downstream kernel computation, but the verifier uses the full six-observable fingerprint rather than only this single scalar. A single representative input pair $(x_i, x_j) = ((0.4, 1.2), (1.1, 0.3))$ is fixed across runs to keep the comparisons across timepoints, channels, and shot budgets controlled.

\paragraph{Honest and sneaky channels.}
The honest channel implements the kernel circuit described above; its code path is \texttt{src/channels.py:honest\_channel}. The sneaky variant is identical except that an $S$-gate is inserted on every qubit immediately before measurement (\texttt{src/channels.py:sneaky\_channel}). This construction is by design: the $S$-gate preserves $\langle Z \rangle$ on the output qubits but rotates $\langle X \rangle$ and $\langle Y \rangle$ by $\pi/2$, so the sneaky fingerprint matches the honest one when only $\{Z\}$-type measurements are taken but separates from it on the full Pauli family. This is a standard adversary template for channel-level observable contracts; concurrent work \cite{yeniaras2026qcivet} uses an analogous construction at the single-qubit isolated-channel level, while we apply it inside a 2-qubit QML kernel pipeline.

\paragraph{Observable families.}
We compare two contracts:
\begin{itemize}
\item \emph{Weak family} $\mathcal{O}_{\text{weak}} = \{ Z_1 Z_2 \}$. A single two-qubit Pauli string, informationally incomplete by construction. Implemented in \texttt{src/observables.py:weak\_family}.
\item \emph{Complete family} $\mathcal{O}_A = \{ X_1, Y_1, Z_1, X_2, Y_2, Z_2 \}$, six single-qubit Paulis with one factor on each wire. This family is the two-qubit instance of the local Pauli family of Proposition~\ref{prop:local-ic}: its real-linear span covers the algebra of single-qubit reductions on each wire, and by Proposition~\ref{prop:local-ic} it is informationally complete with respect to the class $\mathcal{S}_{\text{loc}}$ of local-unitary substitutions, which is the operational threat model considered in this paper. The local family does \emph{not} span two-qubit correlation operators such as $X_1 X_2$ or $Z_1 Z_2$ and is therefore not strictly informationally complete on the full two-qubit Hilbert space; this is a deliberate operational choice and we discuss its scope in Section~\ref{sec:limitations} and Appendix~\ref{app:scope}. Implemented in \texttt{src/observables.py:complete\_family}.
\end{itemize}
The frame-bound constant for the complete family is $C(\mathcal{O}_A) = \sqrt{3}$, matching the tight single-qubit Pauli value derived in Theorem~\ref{thm:detection} (Section~\ref{sec:bg-observable}).

\paragraph{Operational parameters.}
We adopt the parameter set calibrated for the deployed backend: adversarial separation $\delta = 0.5$, contract tolerance $\varepsilon_A = 0.15$, confidence $1 - \eta = 0.95$, family size $k = 6$, operator-norm bound $B = 1$. The tolerance $\varepsilon_A$ is chosen at the lower end of the non-empty interval permitted by the procedure of Section~\ref{sec:fw-tolerance} applied to the drift observation reported in Section~\ref{sec:exp-drift} (which yields a recommended interval $[0.067, 0.289]$ with midpoint $0.178$); the lower-end choice is conservative against drift fluctuations and keeps the worst-case detection margin $\gamma$ comfortably below $\delta/C$ while remaining strictly above the typical observed drift $d_{\text{drift}}^{\text{typ}} = 0.067$. With these values, the detection margin from Theorem~\ref{thm:detection} is $\gamma = \delta / C - \varepsilon_A \approx 0.139$. We run the hardware experiment at the conservative sampled-reference budget from Corollary~\ref{cor:budget-sampled}, $N \geq 13{,}680$, allocated uniformly across the family at $n_O = 2{,}280$ shots per observable, even though Algorithm~\ref{alg:verification} only requires the tighter precomputed-reference budget from Theorem~\ref{thm:budget} ($N \approx 3{,}420$ at the same parameters). This conservative choice gives a roughly fourfold operational safety margin against shot noise relative to the strict requirement; the question of whether the tighter precomputed-reference budget itself is empirically attainable on hardware is a natural follow-up and we discuss it briefly in the validation analysis of Section~\ref{sec:exp-sample}. The conservative budget remains well within the per-circuit shot capacity of \texttt{ibm\_fez} and stays inside the IBM Quantum open-plan monthly QPU budget. The computation is implemented in \texttt{src/sample\_complexity.py:compute\_N}.

\paragraph{Three experiments.}
We run three experiments on the same hardware setup:
\begin{itemize}
\item \textbf{Experiment 1 (Section~\ref{sec:exp-detection}):} detection of a sneaky substitution under the weak versus complete observable contracts.
\item \textbf{Experiment 2 (Section~\ref{sec:exp-sample}):} empirical validation of the sample-complexity bound (Theorem~\ref{thm:budget} and Corollary~\ref{cor:budget-sampled}) at the conservative shot budget used by the hardware experiment, performed on a noisy Aer simulator with the IBM Heron r2 noise model so that the validation tracks realistic hardware statistics without consuming QPU time.
\item \textbf{Experiment 3 (Section~\ref{sec:exp-drift}):} observation of natural hardware drift across three timepoints and computation of the tolerance interval prescribed by Section~\ref{sec:fw-tolerance}.
\end{itemize}
The dual-mode verifier of Algorithm~\ref{alg:verification} is invoked for each experiment from \texttt{src/verification.py:run\_verifier}.

\subsection{Detection of Sneaky Ansatz Substitution}
\label{sec:exp-detection}

This experiment, implemented in \texttt{experiments/experiment1\_detection.py}, submits one batched SamplerV2 job containing every (channel, observable) pair: the honest channel on the union of the weak and complete families, then the sneaky channel on the same set. For each pair we collect $n_O = 2{,}280$ shots and convert the resulting bitstring counts into a Pauli expectation $\langle P \rangle \in [-1, +1]$ using \texttt{src/ibm\_runtime.py:expectation\_from\_counts}. The fingerprints $\mathrm{Fp}_A$ and $\mathrm{Fp}_B$ are the vectors of these expectations indexed by observable.

Table~\ref{tab:exp1-deviations} reports the per-observable deviation $|\langle P \rangle_A - \langle P \rangle_B|$ together with the verifier's decision for each contract. The run metadata is summarized in Table~\ref{tab:exp1-metadata}.

\begin{table}[!htbp]
\caption{Run metadata for the detection experiment on \texttt{ibm\_fez}.}
\label{tab:exp1-metadata}
\centering
\renewcommand{\arraystretch}{1.2}
\begin{tabular}{l l}
\toprule
\textbf{Field} & \textbf{Value} \\
\midrule
Backend & \texttt{ibm\_fez} (IBM Heron r2, 156 qubits) \\
Job ID & \texttt{d884b8is46sc73f8v28g} \\
Date & 22 May 2026, 12:02 UTC \\
Plan & IBM Quantum Open Plan \\
Shots & 2{,}280 per circuit \\
Total circuits & 14 (honest + sneaky, 7-Pauli family) \\
Queue time & $\approx$ 15 seconds \\
QPU run time & $\approx$ 15 seconds \\
Total wall time & 30.7 seconds \\
\bottomrule
\end{tabular}
\\[4pt]
{\footnotesize\sffamily Generated by \texttt{experiments/experiment1\_detection.py}; see footnote~\ref{fn:repo}.}
\end{table}
Figure~\ref{fig:exp1-detection} visualizes the per-observable deviations, with the single weak-family observable plotted alongside the full set of six complete-family observables and the tolerance line $\varepsilon_A = 0.15$ drawn for reference. The pattern matches the structural prediction of Theorem~\ref{thm:detection} and the $S$-gate construction: the $Z$-diagonal observables ($Z_1 Z_2$, $Z_1$, $Z_2$) stay at the device noise floor (deviations below 0.01, far within $\varepsilon_A = 0.15$), while the $X$- or $Y$-rotated observables in the complete family separate into two groups. Three of them ($X_1$, $X_2$, $Y_2$) exceed tolerance by a clear margin with deviations between 0.36 and 0.49, and $Y_1$ sits just below tolerance at 0.105. The complete-family worst deviation, $0.489$ on $X_2$, is roughly $3.3 \times$ the tolerance, so the substitution is  detected with a wide safety margin against shot noise. The framework's per-round randomization (Section~\ref{sec:fw-algorithm}) ensures that the high-deviation observables are sampled with positive probability in any sufficiently long execution; in this run, three of the six complete-family observables independently force a halt.

\subsection{Sample Complexity Validation}
\label{sec:exp-sample}

This experiment, implemented in \texttt{experiments/experiment2\_sample.py}, measures the empirical \emph{true positive rate} (TPR) and \emph{false positive rate} (FPR) of the verifier at three shot budgets: $N$ (the Corollary~\ref{cor:budget-sampled} sampled-reference value used by the hardware experiment of Section~\ref{sec:exp-detection}), $N/10$, and $N/100$. At each budget we run 20 independent trials with a fresh random seed. To probe the formula's tightness rather than its trivial regime, we use a weakened sneaky construction: the substitute channel inserts $R_Z(\pi/6)$ on each qubit instead of the full $S$-gate, producing a true maximum deviation of $0.259$ that exceeds the contract tolerance $\varepsilon_A = 0.15$ by a moderate margin. A noise-free reference fingerprint of the honest channel is precomputed on the ideal simulator at $10^5$ shots per observable; every trial measurement (sneaky or honest) is compared against this reference, so TPR and FPR are reported with respect to a fixed, statistically clean baseline. All trials run on the noisy Aer simulator described in Section~\ref{sec:exp-setup}. Table~\ref{tab:exp2-rates} reports the empirical rates as a function of shot budget.

\begin{table}[!htbp]
\caption{Per-observable deviations between the honest and sneaky channels on \texttt{ibm\_fez}, $n_O = 2{,}280$ shots per observable. Tolerance $\varepsilon_A = 0.15$. The verifier accepts a fingerprint when the maximum deviation over the family is at or below $\varepsilon_A$ and halts otherwise.}
\label{tab:exp1-deviations}
\centering
\footnotesize
\renewcommand{\arraystretch}{1.0}
\begin{tabular}{l c c}
\toprule
\textbf{Observable} & $|\langle P \rangle_A - \langle P \rangle_B|$ & \textbf{Within $\varepsilon_A$?} \\
\midrule
$Z_1 Z_2$ (weak family) & 0.001 & yes \\
\midrule
$X_1$ (complete family) & 0.381 & no \\
$Y_1$ & 0.105 & yes \\
$Z_1$ & 0.004 & yes \\
$X_2$ & 0.489 & no \\
$Y_2$ & 0.358 & no \\
$Z_2$ & 0.005 & yes \\
\midrule
\textbf{verifier decision (weak contract)} & \multicolumn{2}{c}{accept (max dev = 0.001)} \\
\textbf{verifier decision (complete contract)} & \multicolumn{2}{c}{halt (max dev = 0.489)} \\
\bottomrule
\end{tabular}
\end{table}

\begin{figure}[!tbp]
\centering
\includegraphics[width=0.95\linewidth]{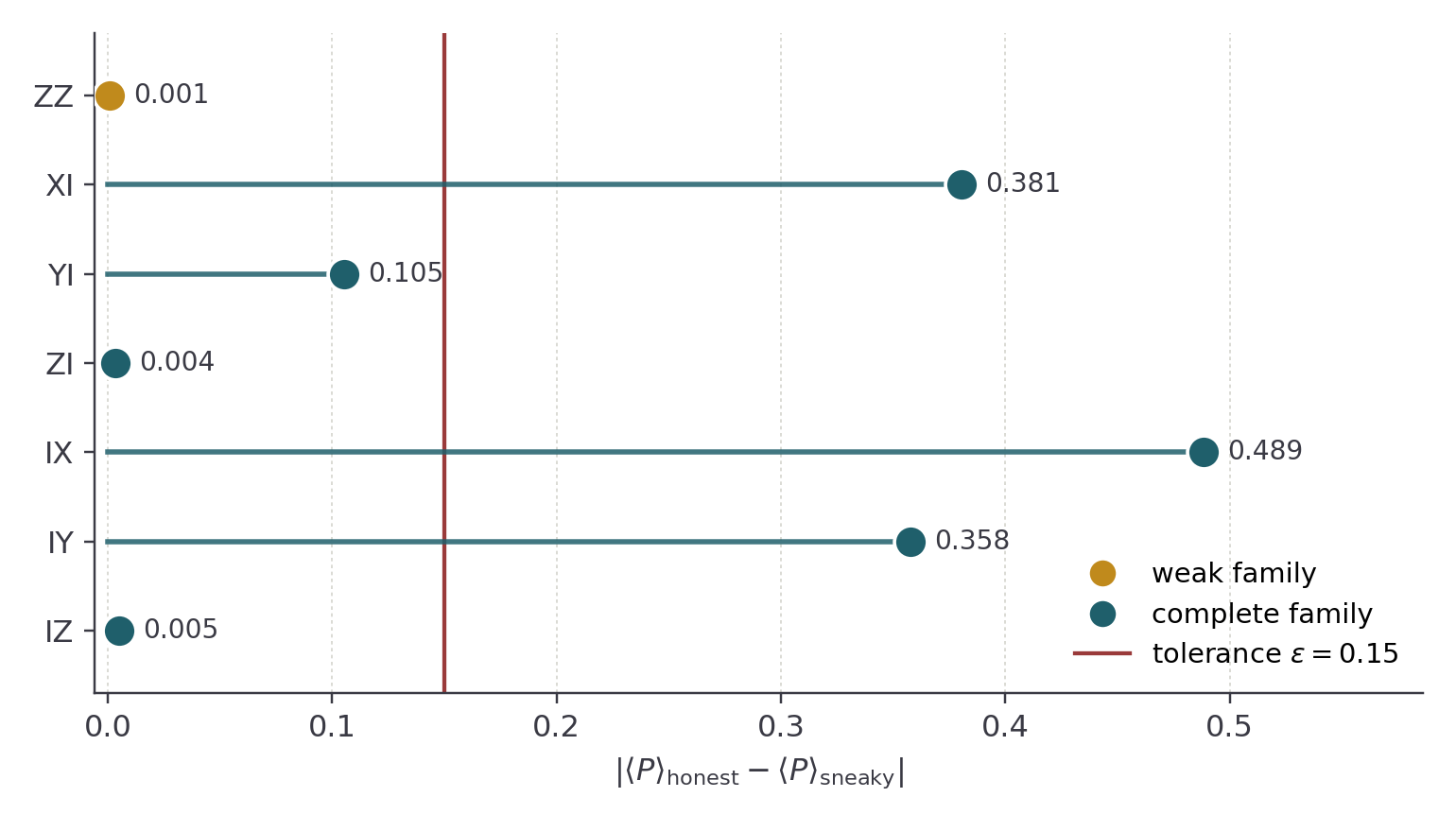}
\caption{Per-observable deviation between the honest and sneaky channels on \texttt{ibm\_fez}, with tolerance $\varepsilon_A = 0.15$ marked by the vertical line. Observable labels follow the convention $P_i \equiv P \otimes I$ for $i = 1$ (qubit~1) and $P_i \equiv I \otimes P$ for $i = 2$ (qubit~2); the correlation observable is labelled $Z_1 Z_2 \equiv Z \otimes Z$. The single weak-family observable $Z_1 Z_2$ (amber) stays at the device noise floor (deviation 0.001), passing the weak contract. The complete family (teal) exposes the substitution: deviations of 0.36--0.49 on three of the $X$- or $Y$-rotated observables, well beyond tolerance. The verifier halts under the complete contract while the weak contract accepts the substitute, reproducing the structural prediction of Theorem~\ref{thm:detection} on real hardware. {\footnotesize\sffamily Reproduced by \texttt{experiments/experiment1\_detection.py} together with the plot function \texttt{analysis/plot\_figures.py:fig\_detection\_bars} (see footnote~\ref{fn:repo}).}}
\label{fig:exp1-detection}
\end{figure}

\begin{table}[!htbp]
\caption{Empirical TPR and FPR as a function of measurement budget, 20 trials per setting, weakened sneaky construction ($R_Z(\pi/6)$ insertion). The Corollary~\ref{cor:budget-sampled} (sampled-reference) budget corresponds to target $1 - \eta = 0.95$ TPR with FPR at most $\eta = 0.05$.}
\label{tab:exp2-rates}
\centering
\renewcommand{\arraystretch}{1.2}
\begin{tabular}{l c c c}
\toprule
\textbf{Shot budget} & $n_O$ & \textbf{TPR (sneaky)} & \textbf{FPR (honest)} \\
\midrule
$N$ (Cor.~\ref{cor:budget-sampled}, conservative) & 2{,}280 & 1.00 & 0.00 \\
$N / 10$ & 228 & 0.95 & 0.15 \\
$N / 100$ & 22 & 1.00 & 1.00 \\
\bottomrule
\end{tabular}
\\[4pt]
{\footnotesize\sffamily Generated by \texttt{experiments/experiment2\_sample.py}; see footnote~\ref{fn:repo}.}
\end{table}

Figure~\ref{fig:exp2-rates} plots TPR and FPR against shot budget, with the dashed lines marking the target rates ($1 - \eta = 0.95$ for TPR and $\eta = 0.05$ for FPR) so that the gap between empirical and target rates at each budget is visible at a glance. At the conservative Corollary~\ref{cor:budget-sampled} budget the verifier attains $\mathrm{TPR} = 1.00$ and $\mathrm{FPR} = 0.00$, comfortably above the target $1 - \eta = 0.95$ and below the target FPR $\eta = 0.05$. At $N/10$ the apparent TPR remains high (0.95) but the FPR climbs to 0.15, exceeding the target rate threefold and signaling that statistical fluctuations are starting to dominate. At $N/100$ both rates saturate at 1.00, confirming that the check has degenerated into pure statistical noise: every trial flags a violation regardless of whether the channel is sneaky or honest. The Corollary~\ref{cor:budget-sampled} budget is therefore the operational threshold not just for nominal flagging but for informative detection in the sense of Definition~\ref{def:informative}: at $N$ both the TPR-target and the FPR-target are met, while at reduced budgets the FPR-target is violated even when TPR appears acceptable, so the empirical TPR no longer reflects channel deviation.

\begin{figure}[!tbp]
\centering
\includegraphics[width=0.85\linewidth]{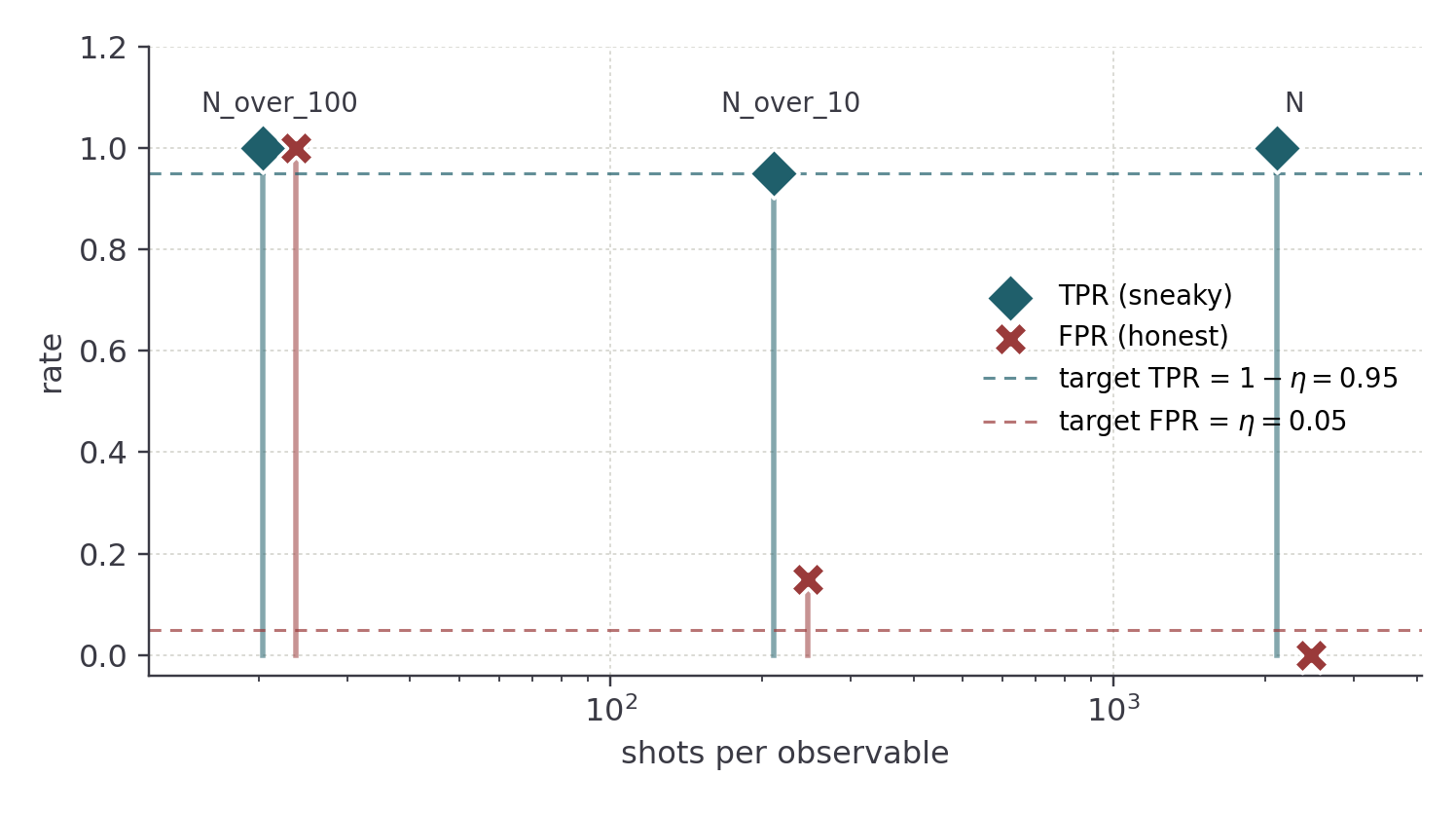}
\caption{Empirical TPR (sneaky detection rate) and FPR (honest false-alarm rate) as a function of shot budget per observable, weakened sneaky construction ($R_Z(\pi/6)$ insertion), 20 trials per setting on the Aer simulator with the IBM Heron r2 noise model. At the Corollary~\ref{cor:budget-sampled} (sampled-reference) per-observable budget $n_O = 2{,}280$ (total $N = 13{,}680$), TPR reaches 1.00 and FPR drops to 0.00, comfortably bracketing the target rates (TPR target $1 - \eta = 0.95$, FPR target $\eta = 0.05$, shown as dashed lines). At reduced budgets the FPR climbs sharply ($0.15$ at $n_O / 10$, $1.00$ at $n_O / 100$), indicating that statistical fluctuations rather than channel deviation are driving the verifier's decisions; the apparent detection in this regime is uninformative. {\footnotesize\sffamily Reproduced by \texttt{experiments/experiment2\_sample.py} together with the plot function \texttt{analysis/plot\_figures.py:fig\_sample\_curve} (see footnote~\ref{fn:repo}).}}
\label{fig:exp2-rates}
\end{figure}

\subsection{Drift Observation and Tolerance Calibration}
\label{sec:exp-drift}

This experiment, implemented in \texttt{experiments/experiment3\_drift.py}, runs the honest channel three times within a single batched SamplerV2 submission, separated by the natural scheduling delay between sub-batches on the same job. For each timepoint $t_1, t_2, t_3$ we collect the full fingerprint $\mathrm{Fp}_A^{(t)}$ on the complete family with $n_O = 2{,}280$ shots per observable per timepoint (matching the detection experiment), and compute the maximum pairwise deviation
\[
d_{\text{drift}}^{\text{typ}} \;=\; \max_{i < j} \; \max_{O \in \mathcal{O}_A} \big| \mathrm{Fp}_A^{(t_i)}[O] - \mathrm{Fp}_A^{(t_j)}[O] \big|.
\]
The result populates the tolerance calibration procedure of Section~\ref{sec:fw-tolerance}: we report $d_{\text{drift}}^{\text{typ}}$, the upper bound $\delta_{\text{adv}}^{\min} / C(\mathcal{O}_A)$ with $\delta_{\text{adv}}^{\min} = 0.5$, and the corresponding tolerance interval $[d_{\text{drift}}^{\text{typ}}, \delta_{\text{adv}}^{\min} / C]$. The run metadata is summarized in Table~\ref{tab:exp3-metadata}, and Table~\ref{tab:exp3-drift} reports the pairwise drift values and the recommended tolerance.

\begin{table}[!htbp]
\caption{Run metadata for the drift experiment on \texttt{ibm\_fez}.}
\label{tab:exp3-metadata}
\centering
\renewcommand{\arraystretch}{1.2}
\begin{tabular}{l l}
\toprule
\textbf{Field} & \textbf{Value} \\
\midrule
Backend & \texttt{ibm\_fez} (IBM Heron r2, 156 qubits) \\
Job ID & \texttt{d884pu2s46sc73f8vn0g} \\
Date & 22 May 2026, 12:33 UTC \\
Plan & IBM Quantum Open Plan \\
Shots & 2{,}280 per circuit \\
Total circuits & 18 (3 timepoints, 6-Pauli complete family) \\
Queue time & $\approx$ 15 seconds \\
QPU run time & $\approx$ 15 seconds \\
Total wall time & 30.5 seconds \\
\bottomrule
\end{tabular}
\\[4pt]
{\footnotesize\sffamily Generated by \texttt{experiments/experiment3\_drift.py}; see footnote~\ref{fn:repo}.}
\end{table}

\begin{table}[!htbp]
\caption{Drift observation across three timepoints on \texttt{ibm\_fez}. Pairwise deviations are over the complete observable family.}
\label{tab:exp3-drift}
\centering
\renewcommand{\arraystretch}{1.2}
\begin{tabular}{l c}
\toprule
\textbf{Quantity} & \textbf{Value} \\
\midrule
$\max_O |\mathrm{Fp}_A^{(t_1)}[O] - \mathrm{Fp}_A^{(t_2)}[O]|$ & 0.067 \\
$\max_O |\mathrm{Fp}_A^{(t_1)}[O] - \mathrm{Fp}_A^{(t_3)}[O]|$ & 0.067 \\
$\max_O |\mathrm{Fp}_A^{(t_2)}[O] - \mathrm{Fp}_A^{(t_3)}[O]|$ & 0.046 \\
$d_{\text{drift}}^{\text{typ}}$ & 0.067 \\
\midrule
Tolerance interval $[d_{\text{drift}}^{\text{typ}}, \delta_{\text{adv}}^{\min}/C]$ & $[0.067,\, 0.289]$ \\
Interval non-empty? & yes \\
Recommended $\varepsilon_A$ & 0.178 \\
\bottomrule
\end{tabular}
\\[4pt]
{\footnotesize\sffamily Generated by \texttt{experiments/experiment3\_drift.py}; see footnote~\ref{fn:repo}.}
\end{table}
Figure~\ref{fig:exp3-drift} shows the three honest fingerprints overlaid in the top panel (the near-coincident trajectories confirm small drift over the run duration) and the pairwise maximum deviations in the bottom panel (with $d_{\text{drift}}^{\text{typ}} = 0.067$ marked for reference). The interval is non-empty, so the calibration procedure converges on this hardware. The operational value $\varepsilon_A = 0.15$ used in the detection experiment of Section~\ref{sec:exp-detection} sits within the tolerance interval $[0.067, 0.289]$ and below the calibration midpoint of $0.178$; the choice of $0.15$ rather than the midpoint reflects an operational preference for a tighter tolerance (smaller $\varepsilon_A$ enlarges $\gamma$ and improves the safety margin against shot noise) while staying comfortably above the typical drift. The typical drift $d_{\text{drift}}^{\text{typ}} = 0.067$ sits well below the tolerance ($d_{\text{drift}}^{\text{typ}}/\varepsilon_A \approx 0.45$), leaving substantial operational room for the framework to absorb additional drift between recalibration cycles without losing detection capability.

\begin{figure}[!tbp]
\centering
\includegraphics[width=0.85\linewidth]{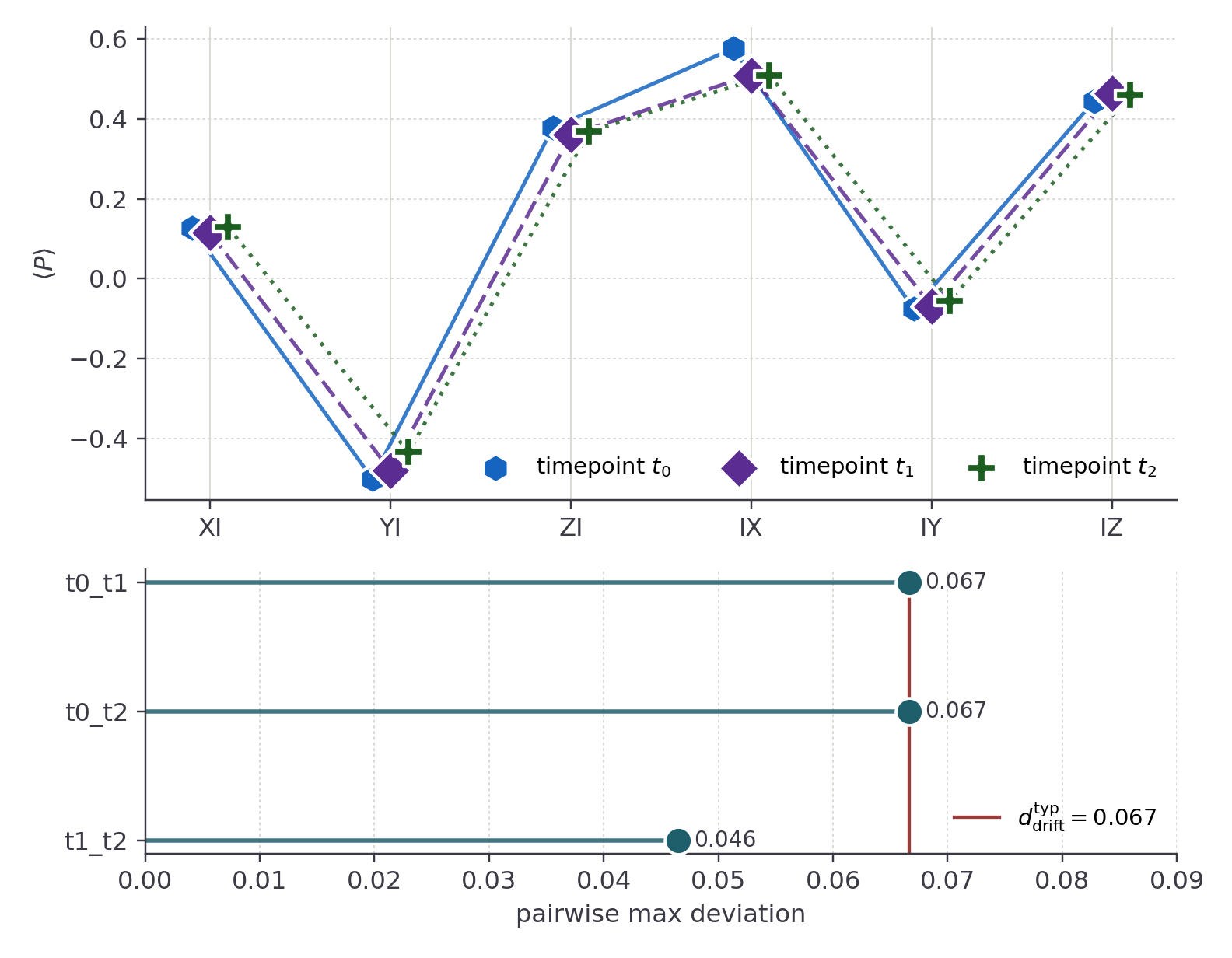}
\caption{Drift observation on \texttt{ibm\_fez} across three timepoints within a single batched submission. Top: the honest fingerprint at each timepoint, plotted with distinct marker shapes and line styles for the three timepoints; the three trajectories nearly coincide, confirming small drift over the run duration. Bottom: pairwise maximum deviations over the complete observable family, with the typical drift $d_{\text{drift}}^{\text{typ}} = 0.067$ marked by the vertical line. All three pairwise deviations are at or below $d_{\text{drift}}^{\text{typ}}$, and $d_{\text{drift}}^{\text{typ}}$ itself sits well below the contract tolerance $\varepsilon_A = 0.15$. {\footnotesize\sffamily Reproduced by \texttt{experiments/experiment3\_drift.py} together with the plot function \texttt{analysis/plot\_figures.py:fig\_drift\_panel} (see footnote~\ref{fn:repo}).}}
\label{fig:exp3-drift}
\end{figure}

\subsection{Discussion of Results}
\label{sec:exp-discussion}

The three experiments give a consistent picture of the framework's operating regime on a production quantum backend.

The detection experiment (Section~\ref{sec:exp-detection}, script \texttt{experiments/experiment1\_detection.py}; see footnote~\ref{fn:repo}) confirms the central prediction of Theorem~\ref{thm:detection} on real hardware. The sneaky channel survives the weak contract with a deviation of 0.001 on $Z_1 Z_2$, well within the noise floor; the same channel triggers the complete contract with worst deviation 0.489 on $X_2$, roughly $3.3 \times$ the tolerance $\varepsilon_A = 0.15$. The safety margin against shot noise means the detection mechanism is not at the edge of significance: even if the per-observable expectation estimate had drifted by another standard error or two, the verifier would still have halted. The structural pattern across observables matches the $S$-gate construction directly: every $Z$-diagonal observable stays below 0.01, three of the $X$- or $Y$-rotated observables exceed 0.35, and $Y_1$ sits just below the tolerance line. Three of the six complete-family observables independently exceed tolerance, providing redundancy that would protect detection under the per-round randomization of Algorithm~\ref{alg:verification} in a streaming-mode deployment. The sneaky fingerprint survives realistic device noise on \texttt{ibm\_fez}, and the gap between the weak-family and complete-family worst deviations is large compared to the per-observable noise floor.

The sample complexity validation (Section~\ref{sec:exp-sample}, script \texttt{experiments/experiment2\_sample.py}; see footnote~\ref{fn:repo}) supports the operational interpretation of the conservative Corollary~\ref{cor:budget-sampled} budget more strongly than a single rate would. At the Corollary~\ref{cor:budget-sampled} budget the verifier achieves $(\mathrm{TPR}, \mathrm{FPR}) = (1.00, 0.00)$, comfortably above and below the target rates $(0.95, 0.05)$. At one-tenth of the budget the apparent TPR remains at 0.95, but the FPR climbs to 0.15: the verifier flags honest channels three times as often as the target rate allows, so the high TPR no longer reflects meaningful detection. At one-hundredth of the budget the two rates collapse to nearly the same value (TPR $= 1.00$, FPR $= 1.00$), and the contract check has degenerated into shot noise. The validation therefore shows that the Corollary~\ref{cor:budget-sampled} budget is the threshold for informative detection (Definition~\ref{def:informative}), not merely for crossing a flagging threshold; smaller budgets cross the flagging threshold for the wrong reason, by failing the FPR-target rather than by reflecting a genuine channel deviation.

The drift experiment (Section~\ref{sec:exp-drift}, script \texttt{experiments/experiment3\_drift.py}; see footnote~\ref{fn:repo}) closes the loop with the deployment configuration. The natural drift of \texttt{ibm\_fez} over the duration of a single batched submission is $d_{\text{drift}}^{\text{typ}} = 0.067$, well below the tolerance $\varepsilon_A = 0.15$ used in the detection experiment. The tolerance interval $[0.067, 0.289]$ from the calibration procedure of Section~\ref{sec:fw-tolerance} is non-empty, and contains the operational choice $\varepsilon_A = 0.15$ with substantial margin on both sides (the calibration midpoint is $0.178$, slightly above $\varepsilon_A$ but in the same regime). The detection guarantee of Theorem~\ref{thm:detection} and the drift bound of Corollary~\ref{cor:drift} are therefore both empirically realizable on a production-grade quantum backend within the parameter regime considered, with substantial operational margin left over.

Two limitations of the present validation should be noted. First, the two-qubit ZZFeatureMap is the smallest model that exercises the framework's structural mechanics; scaling to larger feature maps and richer ansatze (eight to twelve qubits, deeper repetitions, hardware-efficient architectures) is a natural next step. Second, the three timepoints in the drift experiment are separated by minutes within a single SamplerV2 job, capturing short-timescale calibration variability but not the multi-hour or multi-day drift that operational deployments encounter between recalibration cycles. Both scaling axes are deferred to subsequent work and discussed further in Section~\ref{sec:limitations}.

\section{Discussion}
\label{sec:discussion}

\subsection{Practical Relevance and the Maturation Trajectory}

A reasonable concern about any new verification framework is whether the threat it addresses is current or speculative. We address this directly. The three deployment scenarios in Section~\ref{sec:intro} (regulated industries, cloud QPU services, academic reproducibility) all rest on conditions that are present today rather than projected: production QML cloud services from IBM, IonQ, and Quantinuum already serve paying customers; the EU AI Act has been in force since August 2024 and NIST AI Risk Management Framework is in active use; the reproducibility crisis in classical ML is well documented \cite{hutson2018ai}. Each of these conditions transfers to QML as the technology moves from research prototypes to deployed systems.

The historical parallel is informative. Behavioral subtyping was formalized by Liskov and Wing \cite{liskov1994behavioral} in 1994 when object-oriented programming was still an emerging paradigm. The framework was not motivated by then-current attacks; it was motivated by the structural requirement that the discipline would be needed as the paradigm scaled. Three decades later it remains the foundational reference for behavioral subtyping. We position QML-PipeGuard similarly: the discipline is for QML deployment at the scale that current cloud roadmaps suggest will arrive within several years, and the framework is engineered to be available when that scale is reached rather than retrofitted after.

\subsection{Integration with Existing QML Toolchains}

The framework is designed to plug into existing QML toolchains rather than replace them. Three integration design sketches are immediate; a turnkey plugin for any specific toolchain is an engineering follow-up that we do not claim in the present paper.

For \textbf{Qiskit Machine Learning}, the verifier wraps the existing kernel and classifier modules at the SamplerV2 boundary. A pipeline using \texttt{QuantumKernel} or \texttt{VQC} continues to construct circuits as it does today; the verifier intercepts the Sampler job at submission, attaches the observable contract (Section~\ref{sec:fw-fp}), and routes a fraction of the shot budget to the contract observables. The reference implementation in this paper follows exactly this pattern: \texttt{src/ibm\_runtime.py} interacts with SamplerV2 in the same way a stock Qiskit pipeline would, and the verifier in \texttt{src/verification.py} is invoked without modifying the underlying QML model code.

For \textbf{PennyLane}, the equivalent hook is the \texttt{QNode} construction. PennyLane already exposes observables as first-class objects through \texttt{qml.expval}, so the contract observables can be evaluated by the same device on which the QML model runs. The compositionality result of Theorem~\ref{thm:composition} matches PennyLane's layer-by-layer execution model directly, allowing per-layer tolerances to be assigned where intermediate measurement is available.

For \textbf{TensorFlow Quantum}, integration is at the \texttt{tfq.layers.SampledExpectation} interface. The contract observables map to the same Cirq Pauli-string format used by the rest of the pipeline; the only addition is the per-job audit-log emission, which is straightforward to implement as a custom Keras callback.

In all three cases, the toolchain is treated as a black box and the framework operates at the boundary where measurement outcomes are produced. No re-implementation of encodings, ansatze, or classifiers is required. The reference code targets the protocol layer (SamplerV2 batched jobs) rather than any specific high-level wrapper; toolchain-specific adapters are a natural follow-on engineering deliverable.

\subsection{Limitations}
\label{sec:limitations}

We list the limitations of the framework openly, since each one shapes where and how it can be deployed.

\textbf{Observable-family restriction.} The detection guarantee of Theorem~\ref{thm:detection} requires that an observable family informationally complete with respect to the relevant substitution class (in the operational sense of Proposition~\ref{prop:local-ic} or the strict tomographic sense for arbitrary CPTP substitutions) is available on the output Hilbert space. For QML pipelines whose output is the expectation of a measurement observable, this is the natural setting and the Pauli family does the job. For pipelines whose output is intrinsically classical post-processing of measurement outcomes (for instance, a model that emits only the argmax classification decision and never an expectation value), the contract has to be lifted to an intermediate measurement layer where observables are still meaningful. We do not extend the framework to such purely-classical-output pipelines.

\textbf{Operational scope of the local Pauli family.} The deployed family $\mathcal{O}_A^{\text{local}}$ (Section~\ref{sec:exp-setup}) is informationally complete with respect to local-unitary substitutions (Proposition~\ref{prop:local-ic}), which covers the dominant threat profile in current cloud QPU deployments: single-qubit gate substitutions, decoherence drift, calibration shifts, and per-qubit pulse manipulations. Entangling-gate substitutions (e.g., CNOT or CZ insertion), correlated-noise channels, and subspace-preserving stealthy substitutions can alter joint correlation observables while leaving local marginals fixed, and therefore fall outside the local family's detection scope. Detection of such substitutions requires augmenting $\mathcal{O}_A$ with correlation Paulis ($X_1 X_2$, $Y_1 Y_2$, $Z_1 Z_2$, and possibly all $4^n - 1$ non-identity Pauli strings). The cost of this extension is non-trivial: as we show in Appendix~\ref{app:scope}, expanding to nine observables (local plus diagonal correlations) increases the shot budget by roughly a factor of five at the operational $\varepsilon_A = 0.15$, because the frame-bound constant grows from $\sqrt{3}$ to approximately $2.21$ and the detection margin $\gamma$ shrinks correspondingly. The fifteen-observable full Pauli family pushes $C$ to approximately $3.73$, which renders $\gamma$ negative at $\varepsilon_A = 0.15$ and requires either tightening $\varepsilon_A$ (e.g.\ to $0.05$) or relaxing the separation requirement $\delta$. Appendix~\ref{app:scope} discusses these extensions and the associated deployment trade-offs in detail.

\textbf{Sample complexity on large observable families.} The bound of Theorem~\ref{thm:budget} scales linearly in the family size $k$. For multi-qubit Pauli families with $k = 4^n - 1$, this becomes the dominant cost. The remark following the theorem points to two directions for tighter bounds: classical shadow tomography \cite{huang2020predicting} for structured subsets, and adaptive shot allocation that concentrates budget on high-variance observables. Both are natural extensions but lie outside the present scope.

\textbf{Reference channel assumption.} The verifier is assumed to hold a clean reference implementation of the declared channel $\mathcal{E}_A$. In practice this means either (i) the verifier executes $\mathcal{E}_A$ itself on a trusted backend during contract construction, or (ii) the verifier holds a precomputed reference fingerprint that it trusts by other means (signed by a trusted party, anchored in a public log, or generated during a prior verified run). If neither is available, the contract degrades to a relative consistency check between executions rather than an absolute one.

\textbf{Worst-case witness state.} Theorem~\ref{thm:detection} identifies a witness state $\rho^*$ that exposes the channel-level separation. Algorithm~\ref{alg:verification} uses a fixed reference state $\rho_{\text{ref}}$ in practice. The gap between $\rho^*$ and $\rho_{\text{ref}}$ is closed by randomized observable selection across rounds (Section~\ref{sec:fw-algorithm}) and, when the verifier can vary $\rho_{\text{ref}}$ across multiple campaigns, by sampling over the verification distribution. For deployments that lock $\rho_{\text{ref}}$ to a single canonical input, an adversary aware of that choice could in principle craft a substitution that the worst-case theorem allows but the fixed-state evaluation misses; this is one reason the framework is described as a runtime monitoring layer rather than a worst-case soundness proof.

\textbf{Insider threats with shared trust.} The threat model in Section~\ref{sec:threat-trust} assumes the verifier is trusted and independent of the execution environment. An adversary who controls both the verifier and the execution environment can defeat any contract-based scheme by definition. Mitigating insider threats requires complementary mechanisms (multi-party verification, external anchors, hardware attestation) that we treat as orthogonal.

\subsection{Open Problems}

Several questions are left open by the present treatment and we flag them as directions worth following.

\textbf{Multi-qubit tomographic completeness.} The frame-bound constant $C(\mathcal{O}_A)$ in Theorem~\ref{thm:detection} is known tightly for the single-qubit Pauli family ($C = \sqrt{3}$, this paper) and scales polynomially in $n$ for the full $n$-qubit Pauli family. A tight characterization for structured subsets of $n$-qubit Paulis, such as bounded-locality strings, would directly reduce the sample complexity of Theorem~\ref{thm:budget} and is the most operationally consequential open question. The shadow tomography literature \cite{huang2020predicting} provides part of the toolkit.

\textbf{Adaptive observable selection.} The verifier in Algorithm~\ref{alg:verification} selects observables uniformly at random. A budget-aware verifier that concentrates measurements on high-variance or high-information observables would shrink the practical shot count substantially, especially for multi-qubit families. The right adaptive policy depends on what is assumed about the adversary's response and is closely connected to active-learning formulations.

\textbf{Composition with classical ML verification.} Our framework verifies the quantum channel layer of a QML pipeline. A typical end-to-end deployment also requires verification of the surrounding classical components: data preprocessing, post-classification thresholds, downstream business logic. Tools such as Q-SafeML \cite{dunn2025qsafeml} for data drift and VeriQR \cite{lin2024veriqr} for input robustness already address pieces of this surface. Composing these pieces into a single auditable pipeline contract, with a coherent treatment of how classical-side tolerances aggregate with the quantum-side observable contract, is an open systems question.

\textbf{Tight lower bounds on detection.} Theorem~\ref{thm:budget} gives a sufficient sample budget for detection. The matching lower bound, stating that detection with confidence $1 - \eta$ is information-theoretically impossible at substantially fewer shots, is not addressed here. A tight lower bound would close the operational picture and clarify how much of the cost is intrinsic to the problem versus an artifact of Hoeffding-plus-union-bound.

\textbf{Extension to other QML model classes.} The present formal treatment covers VQCs, QSVMs, and QNNs. QCNNs \cite{cong2019qcnn} and QBMs \cite{amin2018qbm} share the same structural property (observable-expectation outputs) and are reachable by the same framework. Working out the QML-specific specialization for each (analogous to the VQC, QSVM, QNN cases in Sections~\ref{sec:fw-sneaky}--\ref{sec:fw-detection}) is straightforward but not done here.

\section{Conclusion}
\label{sec:conclusion}

This paper extends behavioral subtyping for hybrid quantum-classical pipelines to the QML setting. The central object is the behavioral fingerprint: the runtime vector of observable expectations that characterizes a QML channel against its declared specification. Around this object we developed three results. The detection theorem (Theorem~\ref{thm:detection}) establishes that a sneaky ansatz substitution cannot evade an informationally complete observable contract. The sample-complexity bound (Theorem~\ref{thm:budget}) quantifies the measurement budget needed to realize that guarantee in finite-shot practice. The drift corollary (Corollary~\ref{cor:drift}) shows that the same observable contract, applied within a calibrated tolerance, also serves as a drift-aware monitoring layer. To the best of our knowledge, the combination of these three components into a runtime-verifiable, dual-mode channel-integrity contract for QML pipelines has not appeared in the prior literature, and we offer it as a first step in this direction rather than a final word; we will be glad to update this positioning if earlier or concurrent work is brought to our attention.

The main practical contribution of this paper is the dual-mode operational view. In prior work, calibration drift and adversarial substitution have been addressed by separate mechanisms: drift through pulse-level adaptation \cite{hu2023qupad} or input-distribution monitoring \cite{dunn2025qsafeml}, and adversarial channel identity through device fingerprinting \cite{wu2024qid} or input-robustness verification \cite{lin2024veriqr}. Broader trustworthiness roadmaps for QML \cite{catak2025tqml} and security surveys of QML-as-a-Service \cite{kundu2024sokqmlaas} have catalogued the pipeline-level risk surface, naming channel integrity among the concerns to address but, as far as we can tell, leaving it without a constructive runtime mechanism in the existing QML literature. The contribution of the present paper is to bring drift and adversarial substitution under a single observable contract, parameterized by a single tolerance, so that one runtime check covers both concerns and one audit trail records both kinds of events. The proposed mechanism is complementary to the threads cited above rather than a replacement for any of them.

The hardware validation on \texttt{ibm\_fez} shows that the framework is operationally feasible at the parameter regime considered. The shot budget prescribed by Theorem~\ref{thm:budget} fits within a single batched job; the natural drift over the duration of that job stays inside a contract tolerance that still admits adversarial detection; the sneaky fingerprint survives device noise.

The work is positioned for the QML maturation trajectory. Production cloud QPU services exist today; the regulatory frameworks that will eventually govern QML deployments are already in force for classical AI; the reproducibility concerns transfer from classical ML to QML directly. Behavioral subtyping was formalized for object-oriented software before the deployments that made it necessary; the same structural reasoning applies here. We see the framework as a building block for the QML deployment stack that is being assembled now, not a defense against attacks that already occur.

The accompanying open-source implementation and the archived experimental artifacts (Section~\ref{sec:intro}) are intended as a reusable starting point. Toolchain-specific adapters, tighter sample-complexity bounds, and the composition with classical ML verification primitives are the natural next steps; we have flagged them as open problems in Section~\ref{sec:discussion}.

\section*{Acknowledgments}
\label{sec:acknowledgments}

We acknowledge the use of IBM Quantum services and the Qiskit open-source software development kit, including Qiskit Machine Learning and scikit-learn, for the real-hardware validation experiments reported in Section~\ref{sec:experiment}. The views expressed are those of the authors and do not reflect the official policy or position of IBM or the IBM Quantum team.

\appendix
\section{Operational Scope of the Observable Family}
\label{app:scope}

This appendix expands on the operational scope of the local Pauli family used in the hardware experiments (Section~\ref{sec:experiment}) and the family extensions required when the threat model is broadened beyond local-unitary substitutions. The aim is to make the trade-off between family size, detection coverage, and shot budget explicit, so that a deployer can choose the right configuration for their threat profile.

\subsection{Threat coverage of the local family}
\label{app:scope-local}

The local Pauli family $\mathcal{O}_A^{\text{local}} = \{X_i, Y_i, Z_i : i = 1, \dots, n\}$ is informationally complete with respect to local-unitary substitutions (Proposition~\ref{prop:local-ic}). The substitution class $\mathcal{S}_{\text{loc}}$ covered by this family includes:
\begin{itemize}
\item \emph{Single-qubit gate substitutions}, that is, insertion of an extra rotation on any qubit (Pauli $X$, $Y$, $Z$, phase $S$, $T$, or arbitrary single-qubit unitary). The hardware experiment of Section~\ref{sec:exp-detection} uses this class with the $S$-gate variant.
\item \emph{Decoherence drift}, namely changes in $T_1$, $T_2$ between recalibrations, manifested as per-qubit channel shifts.
\item \emph{Bit/phase-flip channels}, i.e., adversarial or natural Pauli error channels acting independently on each qubit.
\item \emph{Per-qubit pulse manipulations}, that is, insider attacks at the pulse-control layer that target individual qubits.
\end{itemize}
These substitutions cover the dominant threat profile in current cloud QPU deployments, where pulse-level access is typically restricted and noise channels are predominantly local.

\subsection{Out-of-scope substitution classes}
\label{app:scope-out}

Three classes of substitutions fall outside the detection scope of the local family:
\begin{itemize}
\item \emph{Entangling substitutions} (e.g., insertion of a CNOT or CZ gate before measurement). These can alter joint correlation observables while leaving local marginals unchanged.
\item \emph{Correlated-noise channels} (e.g., crosstalk-induced $Z_1 Z_2$-correlated noise on superconducting hardware). The marginals may shift slightly but the dominant signal is in the correlation observables, which the local family cannot resolve.
\item \emph{Subspace-preserving stealthy substitutions} (e.g., a parity-preserving unitary that permutes the eigenstates of an entangled subspace). These leave local Pauli expectations fixed by construction and are invisible to the local family.
\end{itemize}
The Bell-state-vs-maximally-mixed-state comparison is a clean illustration: both states have all six local Pauli expectations equal to zero, yet they differ on $X_1 X_2$, $Y_1 Y_2$, $Z_1 Z_2$. Any adversary that can substitute one for the other is invisible to the local family.

\subsection{Family extension for full coverage}
\label{app:scope-ext}

Detection of out-of-scope substitutions requires augmenting $\mathcal{O}_A$ with correlation observables. We sketch three tiers of family extension, with increasing detection coverage and cost. The frame-bound constants $C(\mathcal{O}_A)$ reported for Tiers 2 and 3 are obtained by direct numerical optimization of the witness objective $\sup_{M^* \in \mathrm{span}(\mathcal{O}_A), \|M^*\| \leq 1} \sum_O |c_O|$ via random search followed by Nelder--Mead refinement; the optimization script, the witness operators, and the numerical reproduction artefacts are provided as supplementary material (\texttt{analysis/frame\_bound.py}). The reported standard error on $C$ is below $10^{-3}$. Numerical values below are at the operational parameters of the hardware experiment ($\delta = 0.5$, $\varepsilon_A = 0.15$, $\eta = 0.05$, $B = 1$).

\begin{table}[H]
\caption{Three deployment tiers for $\mathcal{O}_A$ on two qubits, with associated threat coverage, frame-bound constant $C(\mathcal{O}_A)$, detection margin $\gamma = \delta/C - \varepsilon_A$, and total shot budget $N$ from Corollary~\ref{cor:budget-sampled} (sampled-reference, the conservative budget used by the hardware experiment of Section~\ref{sec:exp-detection}). Cost figures are at the operational parameters above and scale through Theorem~\ref{thm:budget} (precomputed-reference, by an additional factor of $4$ tighter) for larger qubit counts.}
\label{tab:scope-tiers}
\centering
\renewcommand{\arraystretch}{1.0}
\small
\begin{tabular}{>{\raggedright\arraybackslash}p{1.8cm} >{\raggedright\arraybackslash}p{2.6cm} >{\raggedright\arraybackslash}p{3.4cm} >{\centering\arraybackslash}p{1.0cm} >{\centering\arraybackslash}p{0.9cm} >{\centering\arraybackslash}p{1.0cm}}
\toprule
\textbf{Tier} & \textbf{Family} & \textbf{Captures} & $\boldsymbol{C}$ & $\boldsymbol{\gamma}$ & $\boldsymbol{N}$ \\
\midrule
Tier 1 (deployed) & Local Paulis ($k=6$) & Single-qubit substitutions, drift, per-qubit attacks & $\sqrt{3}$ & $0.139$ & $13{,}680$ \\
\midrule
Tier 2 & Local + diagonal correlations ($k=9$): add $X_1X_2$, $Y_1Y_2$, $Z_1Z_2$ & Local substitutions plus correlated noise on the Pauli axes & $\approx 2.21$ & $0.076$ & $\approx 72{,}700$ \\
\midrule
Tier 3 (full 2-qubit) & All non-identity Pauli strings ($k=15$) & Arbitrary CPTP substitutions discriminable by Pauli measurements & $\approx 3.73$ & --- & --- \\
\bottomrule
\end{tabular}
\end{table}

A few practical observations follow from the table. Tier 2 already costs roughly five times the Tier 1 shot budget at the operational $\varepsilon_A = 0.15$, because the frame-bound constant grows from $\sqrt{3}$ to approximately $2.21$ and the detection margin $\gamma = \delta/C - \varepsilon_A$ shrinks correspondingly. The dependence of $N$ on $\gamma^{-2}$ amplifies modest increases in $C$ into substantial increases in $N$ once $\varepsilon_A$ approaches $\delta/C$.

Tier 3 (the full 2-qubit Pauli family) has $C \approx 3.73$ and $\delta / C \approx 0.134$, which is below the operational tolerance $\varepsilon_A = 0.15$. Consequently $\gamma$ is negative at these parameters and the theorem's detection guarantee does not apply at this $(\delta, \varepsilon_A)$ pair: either $\varepsilon_A$ must be tightened (e.g.\ to $0.05$, which yields $\gamma \approx 0.084$ and $N \approx 1.4 \times 10^5$) or the separation requirement $\delta$ must be relaxed. The asymptotic scaling $N = O(k \log k / \gamma^2)$ is still favourable; it is the specific operational parameters used in this paper's hardware experiment, together with the larger $C$ for the richer family, that produce the boundary behaviour at Tier 3.

The qualitative takeaway is that the choice of family is a meaningful operational lever: moving from Tier 1 to Tier 2 increases detection coverage at the cost of approximately five times the shot budget, while Tier 3 requires re-tuning the operational tolerance $\varepsilon_A$ in addition to expanding the family. The framework's structural guarantees (Theorem~\ref{thm:detection}, Theorem~\ref{thm:budget}, Corollary~\ref{cor:drift}) carry over unchanged across tiers; only $C(\mathcal{O}_A)$, $k$, and the operational parameters are updated.

\subsection{Deployment recommendations}
\label{app:scope-deploy}

The right tier depends on the deployment context and on the operational tolerances:
\begin{itemize}
\item \emph{Routine hardware-health monitoring} (e.g., a daily QPU-integrity check by a customer): Tier 1 suffices. Drift and single-qubit issues are the dominant failure modes, and the cost saving over higher tiers is meaningful for high-frequency monitoring.
\item \emph{Mid-stakes deployment} (e.g., healthcare or financial applications where the QML pipeline contains entangling gates and the threat model includes adversaries with limited pulse access): Tier 2 is the natural choice, adding the three diagonal-correlation Paulis at roughly five times the Tier 1 shot budget at the operational $\varepsilon_A = 0.15$. Deployments that wish to keep the shot budget closer to Tier 1 can tighten $\delta$ (require larger channel separation before flagging) so that $\gamma$ remains comparable to the Tier 1 value.
\item \emph{High-security deployment} (e.g., regulated infrastructure with full insider-threat consideration, or platforms exposing pulse-level access): Tier 3 covers arbitrary CPTP substitutions discriminable by Pauli measurements, but requires tightening the operational tolerance: at $\varepsilon_A = 0.05$ with $\delta = 0.5$, the detection margin becomes $\gamma \approx 0.084$ and the shot budget is approximately $1.4 \times 10^5$ shots. Beyond this, full process tomography (with $O(4^{2n})$ cost) is the next level and is typically used for one-off certification rather than runtime monitoring.
\end{itemize}
The framework's structural guarantees (Theorem~\ref{thm:detection}, Theorem~\ref{thm:budget}, Corollary~\ref{cor:drift}) carry over unchanged across tiers; only the constant $C(\mathcal{O}_A)$, the family size $k$, and (at Tier 3) the operational tolerance $\varepsilon_A$ are updated. This is intentional: the choice of family is a deployment-time configuration, not a framework redesign, and a verifier can move between tiers as the threat model evolves.

\bibliographystyle{plainnat}
\bibliography{qml_pipeguard}

\end{document}